\def\BibTeX{{\rm B\kern-.05em{\sc i\kern-.025em b}\kern-.08em
    T\kern-.1667em\lower.7ex\hbox{E}\kern-.125emX}}
\newtheorem{theorem}{Theorem}
\newtheorem{lemma}{Lemma}
\newtheorem{definition}{Definition}
\newcommand{\seta}{\ensuremath{\mathcal{A}}}
\newcommand{\setd}{\ensuremath{\mathcal{D}}}
\newcommand{\bs}[1]{\boldsymbol{#1}}
\newcommand{\RM}[1]{\MakeLowercase{\romannumeral #1{}}}
\newcommand{\mc}{\mathcal}
\newcommand{\infsub}{\mathrm{inf}}
\definecolor{calpolypomonagreen}{rgb}{0.12, 0.3, 0.17}
\newcounter{remarkcount}
\newenvironment{remark}{\refstepcounter{remarkcount}\begin{trivlist}\item \textbf{Remark \theremarkcount.}}{\end{trivlist}}
\newcommand{\circlearrow}{}
\DeclareRobustCommand{\circlearrow}{%
  \mathrel{\vphantom{\rightarrow}\mathpalette\circle@arrow\relax}%
}
\newcommand{\circle@arrow}[2]{%
  \m@th
  \ooalign{%
    \hidewidth$#1\circ\mkern1mu$\hidewidth\cr
    $#1-$\cr}%
}
\let\emptyset\varnothing
\newcommand{\mbf}{\mathbf}
\newcommand{\mbb}{\mathbb}
\theoremstyle{definition}
\theoremstyle{remark}
\newtheorem{corollary}{Corollary}[theorem] 
\def\BibTeX{{\rm B\kern-.05em{\sc i\kern-.025em b}\kern-.08em
    T\kern-.1667em\lower.7ex\hbox{E}\kern-.125emX}}
\begin{document}
\onecolumn

\title{Outage Common Randomness Capacity Characterization of Multiple-Antenna Slow Fading Channels\\
\thanks{H.\ Boche and M. Wiese were supported by the Deutsche Forschungsgemeinschaft (DFG, German
Research Foundation) within the Gottfried Wilhelm Leibniz Prize under Grant BO 1734/20-1, and within Germany’s Excellence Strategy EXC-2111—390814868 and EXC-2092 CASA-390781972. C.\ Deppe was supported in part by the German Federal Ministry of Education and Research (BMBF) under Grant 16KIS1005. 
H. Boche and R. Ezzine were supported by the German Federal Ministry of Education and Research (BMBF) under Grant 16KIS1003K.}
}

\author{
\IEEEauthorblockN{Rami Ezzine\IEEEauthorrefmark{1},  Moritz Wiese\IEEEauthorrefmark{1}\IEEEauthorrefmark{3}, Christian Deppe\IEEEauthorrefmark{2} and Holger Boche\IEEEauthorrefmark{1}\IEEEauthorrefmark{3}\IEEEauthorrefmark{4}}
\IEEEauthorblockA{\IEEEauthorrefmark{1}Technical University of Munich, Chair of Theoretical Information Technology, Munich, Germany\\
\IEEEauthorrefmark{2}Technical University of Munich, Institute for Communications Engineering,  Munich, Germany\\
\IEEEauthorrefmark{3}CASA -- Cyber Security in the Age of Large-Scale Adversaries–
Exzellenzcluster, Ruhr-Universit\"at Bochum, Germany\\
\IEEEauthorrefmark{4}Munich Center for Quantum Science and Technology (MCQST), Schellingstr. 4, 80799 Munich, Germany\\
Email: \{rami.ezzine, wiese, christian.deppe, boche\}@tum.de}
}
\maketitle
\thispagestyle{plain}
\pagestyle{plain}
\begin{abstract}
We investigate the problem of common randomness (CR) generation from discrete correlated sources aided by one-way communication over single-user multiple-input multiple-output (MIMO) slow fading channels with additive white Gaussian noise (AWGN), arbitrary state distribution and with channel state information available at the receiver side (CSIR). We completely solve the problem by first characterizing the channel outage capacity of MIMO slow fading channels for arbitrary state distribution. For this purpose, we also provide an achievable rate
for a specific compound MIMO Gaussian channel.
Second, we define the outage CR capacity of the MIMO slow fading channel and establish a single-letter characterization of it using our result on its outage transmission capacity.
\end{abstract}

\begin{IEEEkeywords}
Common randomness, outage capacity, MIMO slow fading channels, MIMO compound Gaussian channels
\end{IEEEkeywords}

\section{Introduction}
The availability of common randomness (CR) as a resource plays a key role in distributed computational settings\cite{survey}. It allows to design correlated random protocols that often perform faster and more efficiently than the deterministic ones.

The resource CR plays a major role in several tasks. Examples of such tasks include random coding over arbitrarily varying channels \cite{capacityAVC} and oblivious
transfer and bit commitment schemes \cite{commitmentcapacity}\cite{unconditionallysecure}. Furthermore, CR is highly relevant in the identification scheme, an approach in communications
developed by Ahlswede and Dueck \cite{identification}. It turns out that CR may allow a significant increase in the identification capacity of channels\cite{Generaltheory,part2,CRincrease}. 
In the identification framework, the decoder is not interested in knowing what the received message is. He rather wants to know if a specific message of special interest to him has been sent or not. Naturally, the sender has no knowledge of that specific message, otherwise, the problem would be trivial. While the number of identification messages (also called identities) increases exponentially with the block-length in the deterministic identification scheme, the size of the identification code increases doubly exponentially with the block-length when CR is used as a resource.
The identification scheme is more suitable than the classical transmission scheme proposed by Shannon \cite{shannon} in many practical applications which require robust and ultra-reliable low latency information exchange including several machine-to-machine
and human-to-machine systems \cite{applications}, the tactile internet \cite{tactiles}, digital watermarking \cite{Moulin,watermarkingahlswede, watermarking} and industry 4.0 \cite{industrie40}. In addition, it is worth mentioning that identification codes \cite{implementation} can be used in autonomous driving, as described in \cite{PatentBA}. Furthermore, 
CR is also of high relevance in cryptography. Indeed, under additional secrecy constraints, the generated CR can be used as secret keys, as shown in the fundamental two papers \cite{part1}\cite{Maurer}. The generated secret keys can be used to perform cryptographic tasks including secure message transmission and message authentication. In our work, however, we will not impose any secrecy requirements.

We study the problem of CR generation in the basic two-party communication setting in which
Alice and Bob aim to agree on a common random variable with high probability 
by observing independent and identically distributed (i.i.d.) samples of correlated discrete sources and while communicating as little as possible. Ahlswede and Csizár initially introduced in \cite{part2} the problem of CR generation from discrete correlated sources where the communication was over discrete noiseless channels with limited capacity. A single-letter characterization of the CR capacity for this model was established in \cite{part2}. CR capacity refers to the maximum rate of CR that Alice and Bob can generate using the resources available in the model.  The results on CR capacity were later extended to single-input single-output (SISO) and multiple-antenna Gaussian channels in \cite{CRgaussian}   for  their  practical  relevance  in many communication situations such as  wired  and  wireless communications,  satellite  and  deep  space  communication  links,  etc..
 The results on CR capacity over
Gaussian channels have been used to establish a lower-bound
on the corresponding correlation-assisted secure identification
capacity in the log-log scale in \cite{CRgaussian}. This lower bound can
already exceed the secure identification capacity over Gaussian
channels with randomized encoding elaborated in \cite{wafapaper}.

In our work, we consider the CR generation problem over MIMO slow fading channels. The focus is on the MIMO setting since multiple-antenna systems present considerable practical benefits including increased capacity, reliability and spectrum
efficiency. This is due to a combination of both diversity and spatial multiplexing gains \cite{Tse}. In particular, a practically relevant model in wireless communications is the slow fading model with  additive white Gaussian noise (AWGN)\cite{Tse,goldsmith,inftheoretic,infaspects}.
 In the multiple-antenna slow fading scenario, the channel state, represented by the channel matrix, is random but remains constant during the codeword transmission. Therefore, channel fades cannot be averaged out and ensuring reliable communication is consequently challenging.
 
An alternative commonly used concept to assess the performance in slow fading environments is the $\eta$-outage capacity defined to be the supremum of all rates for which the outage probability is lower than or equal to $\eta$\cite{Tse}\cite{goldsmith}. From the channel transmission perspective and for a given coding scheme, outage occurs when the  instantaneous channel state is so poor  that that coding scheme 
is not able to establish reliable communication over the channel. The capacity versus outage approach was initially proposed in \cite{inftheoretic} for fading channels. Later, this approach was applied to multi-antenna channels in \cite{telatar}, where the analysis was restricted to MIMO Rayleigh fading channels. However, to the best of our knowledge, no rigorous proof of the outage transmission capacity of MIMO slow fading channels with arbitrary state distribution is provided in the literature. 

The first contribution of this paper lies in establishing a single-letter  characterization of the $\eta$-outage capacity of MIMO slow fading channels with AWGN that is valid for arbitrary state distribution. To prove the capacity formula, we will additionally establish an achievable rate for a specific compound MIMO Gaussian channel.
 The second contribution of this paper lies in introducing the concept of outage in the CR generation framework as well as  characterizing the $\eta$-outage CR  capacity of MIMO slow fading channels with AWGN using our results on the corresponding $\eta$-outage transmission capacity.
In the CR generation framework, outage occurs when the channel state is so poor that Alice and Bob cannot agree on a common random variable with high probability. The $\eta$-outage CR capacity is defined to be the maximum of all achievable CR rates for which the outage probability from the CR generation perspective does not exceed $\eta.$

\quad \textit{ Paper Outline:} The rest of this paper is organized as follows. Section \ref{sec2} describes the system model and provides the key definitions as well as the auxiliary and main results. In Section \ref{proofoutagecapacity}, we provide a rigorous proof of the $\eta$-outage capacity of MIMO slow fading channels with AWGN and with arbitrary state distribution. Section \ref{proofoutagecrcapacity} is devoted to the derivation of the outage CR capacity over MIMO slow fading channels. In Section \ref{prooflowerbound}, we establish a lower bound on the capacity of a specific compound MIMO complex Gaussian channel. This result is used in the proof of the outage capacity of MIMO slow fading channels. In section \ref{application}, we study, as an application of CR generation, the problem of correlation-assisted identification over the MIMO slow fading channel and provide a lower bound on its corresponding outage correlation-assisted identification capacity. Section \ref{conclusion} contains concluding remarks and proposes
potential future research in this field.

\quad \textit{Notation:}  
$\mathbb{C}$ denotes the set of complex numbers and $\mbb R$ denotes the set of real numbers; $H(\cdot)$ and $h(\cdot)$  correspond to the entropy and the differential entropy function, respectively; $I(\cdot;\cdot)$ denotes the mutual information between two random variables. All information
quantities are taken to base 2. Throughout the paper,  $\log$ is taken to the base 2.  The natural exponential and the natural logarithm are denoted by $\exp$ and $\ln$, respectively.  For any random variables $X$, $Y$ and $Z$, we use the notation $\color{black}X \circlearrow{Y} \circlearrow{Z}\color{black}$ to indicate a Markov chain.
$|\mathcal{K}|$ stands for the cardinality of the set $\mathcal{K}$ and $\mathcal{T}_{U}^{n}$ denotes the set of typical sequences of length $n$ and of type $P_{U}$. $\text{tr}$ refers to the trace operator. For a fixed $n$-length sequence $\bs{x},$ $\mathcal{T}_{U|X}^{n}(\bs{x})$ refers to the set of sequences of length $n$ that are jointly $UX$-typical with $\bs{x}.$ For any matrix $\mbf A,$ $\lVert \mbf A\rVert$ stands for the operator norm of $\mbf A$ with respect to the Euclidean norm and  $\mbf A^{H}$ stands for the standard Hermitian transpose of $\mbf A.$ For any random matrix $\mbf A \in \mathbb{C}^{m \times n}$ with entries $\mbf A_{i,j}$ $i=1,\hdots,m,j=1,\hdots, n,$ we define
\[
\mbb E \left[\mbf A\right] = \begin{bmatrix} 
    \mbb E\left[\mbf A_{11}\right] &  \mbb E\left[\mbf A_{12}\right] & \dots \\
    \vdots & \ddots & \\
    \mbb E\left[\mbf A_{m1}\right] &        &  \mbb E\left[\mbf A_{mn}\right]
    \end{bmatrix}.
\]

$\mathcal{Q}_{P}$ is defined to be the set of positive semi-definite Hermitian matrices whose trace is smaller than or equal to $P.$ For any random vector $\bs{U},$ $\text{cov}(\bs{U})$ refers to its covariance matrix. For any random variable $X,$ \text{supp}(X) refers to its support.  For any set $\mathcal{E}$, $\mathcal{E}^c$ is its  complement. 
\section{System Model, Definitions and  Results}
\label{sec2}
\subsection{System Model for CR Generation}
\label{systemmodel}
Let a discrete memoryless multiple source $P_{XY}$ with two components, with  generic variables $X$ and $Y$ on alphabets $\mathcal{X}$ and $\mathcal{Y}$, respectively, be given.
The outputs of $X$ are observed by Terminal $A$ and those of $Y$ by Terminal $B$. Both outputs have length $n.$ Terminal $A$
can send information to Terminal $B$ over the following MIMO slow fading channel $W_{\mbf G}$:
\begin{align}
\bs{z}_{i}=\mbf G\bs{t}_{i}+\bs{\xi}_{i} \quad i=1, \hdots,n,
\label{MIMOchannelmodel}
\nonumber
\end{align}
where $\bs{t}^n=(\bs{t}_1,\hdots,\bs{t}_n)\in\mbb C^{N_{T}\times n}$ and $\bs{z}^n=(\bs{z}_1,\hdots,\bs{z}_n)\in \mbb C^{N_{R}\times n}$ are channel input and output blocks, respectively. Here, $N_T$ and $N_R$ refer to the number of transmit and receive antennas, respectively.  It is worth mentioning that the block-length $n$ can vary and that for the channel $W_\mbf G$ the arguments determine the block-length.
$\mbf G \in \mbb C^{N_{R}\times N_{T}}$ models the complex gain, where we assume that both terminals $A$ and $B$ know  the distribution of the gain $\mbf G$ and that the actual realization of the gain is known  by Terminal $B$ only.\color{black} \
$\bs{\xi}^n=(\bs{\xi}_1,\hdots,\bs{\xi}_n)\in \mathbb{C}^{N_{R}\times n}$ models the noise sequence.
We assume that the $\bs{\xi}_{i}s$ are i.i.d. such that $\bs{\xi}_{i} \sim \mathcal{N}_{\mathbb{C}}\left(0,\sigma^2 \mbf I_{N_{R}}\right), i=1, \hdots,n.$ We further assume that $\mbf G$ and $\bs{\xi}^n$ are mutually independent and that $(\mbf G,\bs{\xi}^n)$ is independent of $X^n$,$Y^n$. There are no other resources available to any of the terminals. 

A CR-generation protocol of block-length $n$ consists of:
\begin{enumerate}
    \item A function $\Phi$ that maps $X^n$ into a random variable $K$ with alphabet $\mathcal{K}$ generated by Terminal $A.$
    \item A function $\Lambda$ that maps $X^n$ into the sequence $\bs{T}^n \in \mbb C^{N_T\times n}$  satisfying the power constraint
    \begin{equation}
\frac{1}{n}\sum_{i=1}^{n}\bs{T}_{i}^H\bs{T}_{i}\leq P, \quad \text{almost surely}.   \ \
\label{energyconstraintMIMOCorrelated}
\end{equation}
    \item A function $\Psi$ that maps $Y^n$ and the  output sequence $\bs{Z}^n \in \mbb C^{N_R\times n}$ into a random variable $L$ with alphabet $\mathcal{K}$ generated by Terminal $B.$
\end{enumerate}
Such a protocol induces a pair of random variables $(K,L)$ that is called permissible.
This is illustrated in Fig. \ref{correlatedMIMO}.
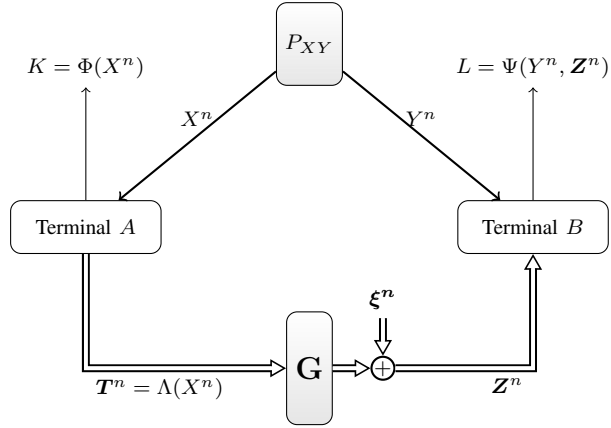
\begin{figure}[!htb]
\centering
\tikzstyle{block} = [draw, rectangle, rounded corners,
minimum height=2em, minimum width=2cm]
\tikzstyle{blocksource} = [draw, top color=white, bottom color=white!80!gray, rectangle, rounded corners,
minimum height=1.1cm, minimum width=.31cm]
\tikzstyle{blockchannel} = [draw, top color=white, bottom color=white!80!gray, rectangle, rounded corners,
minimum height=1.5cm, minimum width=.35cm]
\tikzstyle{input} = [coordinate]
\tikzstyle{vectorarrow} = [thick, decoration={markings,mark=at position
   1 with {\arrow[semithick]{open triangle 60}}},
   double distance=1.4pt, shorten >= 5.5pt,
   preaction = {decorate},
   postaction = {draw,line width=1.4pt, white,shorten >= 4.5pt}]
   \tikzstyle{sum} = [draw, circle,inner sep=0pt, minimum size=2mm,  thick]
\usetikzlibrary{arrows}
\begin{tikzpicture}[scale= 1,font=\footnotesize]
\node[blocksource] (source) {$P_{XY}$};
\node[blockchannel, below=3cm of source](channel) {\large$\mathbf{G}$};
\node[sum, right=.5cm of channel] (sum) {$+$};

\node[block, below left=2.15cm of source] (x) {Terminal $A$};
\node[block, below right=2.15cm of source] (y) {Terminal $B$};
\node[above=.5cm of sum] (noise) {$\bs{\xi^n}$};
\node[above=1.5cm of x] (k) {$K=\Phi(X^n)$};
\node[above=1.5cm of y] (l) {$L=\Psi(Y^n,\bs{Z}^n)$};

\draw[->,thick] (source) -- node[above] {$X^n$} (x);
\draw[->, thick] (source) -- node[above] {$Y^n$} (y);
\draw [vectorarrow] (x) |- node[below right] {$\bs{T}^n=\Lambda(X^n)$} (channel);
\draw [vectorarrow] (channel) -- (sum);
\draw[vectorarrow] (noise) -- (sum);
\draw[vectorarrow] (sum) -| node[below left] {$\bs{Z}^n$} (y);
\draw[->] (x) -- (k);
\draw[->] (y) -- (l);

\end{tikzpicture}
\caption{Two-correlated source model with unidirectional communication over a  MIMO slow fading channel}
\label{correlatedMIMO}
\end{figure}
\subsection{Definitions and Main Results}
We define first an achievable $\eta$-outage rate for the MIMO slow fading channel $W_{\mbf G}$ and the corresponding $\eta$-outage capacity.
For this purpose, we begin by providing the definition of a  transmission-code for  $W_{\mbf G}.$
\begin{definition}
\label{defcode}
A transmission-code $\Gamma$ of length $n$ and size $\lvert \Gamma \rvert$ for the MIMO channel $W_{\mbf G}$ is a family of pairs of codewords and decoding regions $\left\{(\mbf{t}_\ell,\setd_\ell^{(\mbf g)}),\mbf g \in \mbb C^{N_R\times N_T}, \quad \ell=1,\ldots,\lvert \Gamma \rvert \right\}$ such that for all $\ell,j \in \{1,\ldots,\lvert \Gamma \rvert\}$ and all $\mbf g \in \mbb C^{N_R \times N_T}:$ 
\begin{align}
& \mbf{t}_\ell \in \mbb C^{N_{T}\times n},\quad \setd_\ell^{(\mbf g)} \subset \mbb C^{N_{R}\times n}, \nonumber \\
&\frac{1}{n}\sum_{i=1}^{n}\bs{t}_{\ell,i}^H\bs{t}_{\ell,i}\leq P 
 \ \ \mbf{t}_\ell=(\bs{t}_{\ell,1},\hdots,\bs{t}_{\ell,n}), \nonumber \\
&\setd_\ell^{(\mbf g)}  \cap \setd_j^{(\mbf g)} = \emptyset,\quad \ell \neq j. \nonumber
\end{align}
The maximum error probability is expressed as 
\begin{align}
    e(\Gamma,\mbf g)=\underset{\ell \in \{1,\ldots,\lvert \Gamma \rvert\}}{\max}W_{\mbf g}({\setd_\ell^{(\mbf g)}}^c|\mbf{t}_\ell). \nonumber
\end{align}
\end{definition}
\begin{remark}
Throughout the paper, we consider the maximum error probability criterion.
\end{remark}
\begin{definition}
    Let $0 \leq \eta<1$. A real number $R$ is called an \textit{achievable} $\eta$-\textit{outage rate} of the channel $W_{\mbf G}$ if for every $\theta,\delta>0$ there exists a code sequence $(\Gamma_n)_{n=1}^\infty$  such that
    \[
        \frac{\log\lvert \Gamma_n\rvert}{n}\geq R-\delta
    \]
    and
    \begin{align}
        \mbb P[e(\Gamma_n,\mbf G)\leq\theta]\geq 1-\eta 
        \label{errorouterprob}
    \end{align}
    for sufficiently large $n$.
\end{definition}
\begin{remark}
The probability in \eqref{errorouterprob} is with respect to $\mbf G.$
\end{remark}
\begin{definition}
The supremum of all achievable $\eta$-outage rates is called the $\eta$-\textit{outage capacity} of the channel $W_{\mbf G}$ and is denoted by $C_\eta(P,N_{T}\times N_{R})$.
\end{definition}
\begin{theorem}
 Let $\mbf G \in \mbb C^{N_{R}\times N_{T}}$ be a random matrix. For any $\mbf Q \in \mathcal{Q}_P$ and any $R\geq 0$, let 
 \begin{align}
     \mathcal{E}(\mbf Q,R)=\Big\{\mbf g\in\mbb C^{N_R\times N_T}:\log\det(\mathbf{I}_{N_{R}}+\frac{1}{\sigma^2}\mathbf{g}\mathbf{Q}\mathbf{g}^{H})< R\Big\}.
     \label{setE}
 \end{align}
 
   The  $\eta$-outage capacity of the channel $W_{\mbf G}$ is equal to
 \begin{align}
C_{\eta}(P,N_{T}\times N_{R})=\sup \ \Big\{R: \underset{\mathbf{Q}\in\mathcal{Q}_P}{\inf }\mbb P\left[\mbf G \in \mathcal{E}(\mbf Q,R) \right] \leq \eta\Big\}. \nonumber
\end{align}
    \label{cetathmMIMO}
\end{theorem}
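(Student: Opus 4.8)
The plan is to prove the two matching bounds $C_{\eta}(P,N_{T}\times N_{R})\ge R^{\star}$ and $C_{\eta}(P,N_{T}\times N_{R})\le R^{\star}$, where I write $R^{\star}$ for the supremum on the right‑hand side of the claimed identity (and $R^{\star}(\eta')$ for the same supremum with $\eta$ replaced by $\eta'$). Two structural observations will be used repeatedly. First, for fixed $\mbf g$ the map $\mbf Q\mapsto\log\det(\mathbf I_{N_{R}}+\tfrac{1}{\sigma^{2}}\mbf g\mbf Q\mbf g^{H})$ is continuous, concave and nondecreasing in the L\"owner order, and each set $\mathcal E(\mbf Q,R)$ is open; hence $\mbf Q\mapsto\mbb P[\mbf G\in\mathcal E(\mbf Q,R)]$ is lower semicontinuous on the compact set $\mathcal Q_P$ (apply Fatou's lemma to the indicators of these open sets), so the infimum over $\mathcal Q_P$ is attained. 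Second, $R\mapsto\inf_{\mbf Q}\mbb P[\mbf G\in\mathcal E(\mbf Q,R)]$ is nondecreasing, so $\eta'\mapsto R^{\star}(\eta')$ is nondecreasing and right‑continuous (a routine property of generalised inverses); this lets me discard a vanishing slack in the outage level at the very end of the converse.

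\emph{Achievability ($C_{\eta}\ge R^{\star}$).} Fix $R<R^{\star}$ and $\theta,\delta>0$. By definition of the supremum there is $\widetilde R\in(R,R^{\star}]$ with $\inf_{\mbf Q}\mbb P[\mbf G\in\mathcal E(\mbf Q,\widetilde R)]\le\eta$, and by the first observation a minimiser $\mbf Q^{\star}\in\mathcal Q_P$ exists, i.e.\ $\mbb P[\mbf G\in\mathcal E(\mbf Q^{\star},\widetilde R)]\le\eta$. Put $\mathcal S=\mathcal E(\mbf Q^{\star},\widetilde R)^{c}$; then $\mbb P[\mbf G\in\mathcal S]\ge1-\eta$ and every $\mbf g\in\mathcal S$ satisfies $\log\det(\mathbf I_{N_{R}}+\tfrac{1}{\sigma^{2}}\mbf g\mbf Q^{\star}\mbf g^{H})\ge\widetilde R>R$. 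Thus $\{W_{\mbf g}:\mbf g\in\mathcal S\}$ is a compound MIMO Gaussian channel of exactly the type treated in Section \ref{prooflowerbound}, and its capacity lower bound established there furnishes, for all sufficiently large $n$, a transmission code $\Gamma_n$ with $\tfrac{1}{n}\log\lvert\Gamma_n\rvert\ge R-\delta$, codewords satisfying the power constraint $P$, and $\max_{\mbf g\in\mathcal S}e(\Gamma_n,\mbf g)\le\theta$. Since the event $\{e(\Gamma_n,\mbf G)\le\theta\}$ contains $\{\mbf G\in\mathcal S\}$, we get $\mbb P[e(\Gamma_n,\mbf G)\le\theta]\ge1-\eta$, so $R$ is an achievable $\eta$‑outage rate; letting $R\uparrow R^{\star}$ yields the bound.

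\emph{Converse ($C_{\eta}\le R^{\star}$).} Let $R$ be an achievable $\eta$‑outage rate, fix $\theta,\delta>0$, and take a corresponding code sequence $(\Gamma_n)$ with rate $\rho_n\ge R-\delta$ and $\mbb P[\mbf G\in\mathcal A_n]\ge1-\eta$, where $\mathcal A_n=\{\mbf g:e(\Gamma_n,\mbf g)\le\theta\}$. A naive weak converse --- Fano's inequality for the channel $W_{\mbf g}$ (which is memoryless given $\mbf g$) for $\mbf g\in\mathcal A_n$, together with $I(\bs{T}^n;\bs{Z}^n\mid\mbf g)\le\sum_iI(\bs{T}_i;\bs{Z}_i\mid\mbf g)\le\sum_i\log\det(\mathbf I_{N_{R}}+\tfrac{1}{\sigma^{2}}\mbf g\,\text{cov}(\bs{T}_i)\,\mbf g^{H})$, concavity of $\log\det$, and $\tfrac{1}{n}\sum_i\text{cov}(\bs{T}_i)\in\mathcal Q_P$ --- only gives $\max_{\mbf Q\in\mathcal Q_P}\log\det(\mathbf I_{N_{R}}+\tfrac{1}{\sigma^{2}}\mbf g\mbf Q\mbf g^{H})\ge(1-\theta)\rho_n-\tfrac{1}{n}$ on $\mathcal A_n$, hence $\mbb P\big[\bigcap_{\mbf Q}\mathcal E(\mbf Q,R_n)\big]\le\eta$ with $R_n=(1-\theta)\rho_n-\tfrac{1}{n}$, which is in general strictly weaker than the desired $\inf_{\mbf Q}\mbb P[\mathcal E(\mbf Q,R_n)]\le\eta$. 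To reach the latter I would localise the code's covariance: cover $\mathcal Q_P$ by finitely many cells of operator‑norm diameter at most $\varepsilon$, note that every codeword of $\Gamma_n$ has empirical covariance $\tfrac{1}{n}\sum_i\bs{t}_{\ell,i}\bs{t}_{\ell,i}^{H}\in\mathcal Q_P$, and by the pigeonhole principle pass to a subcode $\Gamma_n'$ of size at least $\lvert\Gamma_n\rvert/J(\varepsilon)$ all of whose codewords lie in one cell, of centre $\mbf Q^{\star}$; then the average input covariance $\bar{\mbf K}$ under the uniform message obeys $\bar{\mbf K}\preceq\mbf Q^{\star}+\varepsilon\mathbf I_{N_{T}}$. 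Running $\Gamma_n'$ over $W_{\mbf g}$ for $\mbf g\in\mathcal A_n$ and repeating the Fano estimate gives
\[
\log\det\!\big(\mathbf I_{N_{R}}+\tfrac{1}{\sigma^{2}}\mbf g(\mbf Q^{\star}+\varepsilon\mathbf I_{N_{T}})\mbf g^{H}\big)\ \ge\ (1-\theta)\big(\rho_n-\tfrac{1}{n}\log J(\varepsilon)\big)-\tfrac{1}{n}.
\]
Factoring $\det(\mbf M+\tfrac{\varepsilon}{\sigma^{2}}\mbf g\mbf g^{H})=\det(\mbf M)\det(\mathbf I_{N_{R}}+\tfrac{\varepsilon}{\sigma^{2}}\mbf M^{-1}\mbf g\mbf g^{H})$ with $\mbf M=\mathbf I_{N_{R}}+\tfrac{1}{\sigma^{2}}\mbf g\mbf Q^{\star}\mbf g^{H}\succeq\mathbf I_{N_{R}}$ and using $\log\det(\mathbf I+\mbf N)\le\text{tr}(\mbf N)$, the left‑hand side exceeds $\log\det(\mathbf I_{N_{R}}+\tfrac{1}{\sigma^{2}}\mbf g\mbf Q^{\star}\mbf g^{H})$ by at most $\tfrac{\varepsilon}{\sigma^{2}}\text{tr}(\mbf g\mbf g^{H})$. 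Choosing $\Lambda$ with $\mbb P[\text{tr}(\mbf G\mbf G^{H})>\Lambda]<\zeta$ (possible since this probability vanishes as $\Lambda\to\infty$) and restricting to $\mbf g\in\mathcal A_n\cap\{\text{tr}(\mbf g\mbf g^{H})\le\Lambda\}$, we deduce $\mbf g\notin\mathcal E(\mbf Q^{\star},R_n')$ for $R_n'=(1-\theta)\big(\rho_n-\tfrac{1}{n}\log J(\varepsilon)\big)-\tfrac{1}{n}-\tfrac{\varepsilon\Lambda}{\sigma^{2}}$; hence $\inf_{\mbf Q}\mbb P[\mbf G\in\mathcal E(\mbf Q,R_n')]\le\mbb P[\mbf G\in\mathcal E(\mbf Q^{\star},R_n')]\le\eta+\zeta$, i.e.\ $R_n'\le R^{\star}(\eta+\zeta)$. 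Sending $n\to\infty$ with $\varepsilon,\zeta$ fixed, then $\varepsilon\to0$, then $\zeta\to0$ (by right‑continuity of $R^{\star}(\cdot)$), and finally $\theta,\delta\to0$ gives $R\le R^{\star}$.

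\emph{Main obstacle.} The converse is the delicate part. Since Terminal $A$ is ignorant of $\mbf G$ there is no ambient input‑covariance constraint, and a direct ``genie'' argument only controls $\mbb P[\bigcap_{\mbf Q}\mathcal E(\mbf Q,R)]$, which is strictly weaker than the $\inf_{\mbf Q}\mbb P[\mathcal E(\mbf Q,R)]$ appearing in the statement; the pigeonhole ``covariance localisation'' repairs this, but one then has to track three approximation errors at once --- the cell diameter $\varepsilon$, the Fano slack, and the truncation level $\Lambda$ with its $\zeta$‑probability loss --- and the last is removed only by right‑continuity of the outage capacity in $\eta$. The achievability part is comparatively routine once the compound‑channel lower bound of Section \ref{prooflowerbound} is available; its one subtlety is the attainment of the infimum defining $R^{\star}$, which follows from lower semicontinuity on the compact set $\mathcal Q_P$.
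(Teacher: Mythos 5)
Your achievability argument has a genuine gap: you apply Theorem \ref{achievratecompoundchannels} to the state set $\mathcal S=\mathcal E(\mbf Q^{\star},\widetilde R)^{c}$, but that theorem is stated only for closed subsets of $\mathcal G_a=\{\mbf g:\lVert\mbf g\rVert\le a\}$, and $\mathcal S$ is unbounded (for $\mbf Q^{\star}\succ0$ it contains every $\mbf g$ of sufficiently large norm). The uniform error bound in the compound‐channel proof degrades with $a$ through Lemma \ref{approximation}, so the boundedness is not cosmetic. The paper resolves this by first replacing $\hat{\mbf Q}$ with a \emph{non-singular} $\tilde{\mbf Q}$, restricting to the bounded set $\tilde{\mathcal G}\subset\mathcal G_a$ (where Theorem \ref{achievratecompoundchannels} does apply), choosing $a$ so large that the sphere $\{\lVert\mbf g\rVert=a\}\subseteq\tilde{\mathcal G}$ (this is where non-singularity is needed), and then handling every $\mbf g$ with $\lVert\mbf g\rVert>a$ by a degradation argument: $W_{(a/\lVert\mbf g\rVert)\mbf g}$ is degraded from $W_{\mbf g}$, so a CSIR decoder can reuse the same encoder for the larger-gain channel. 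Your proof skips all of this, so as written the events $\{\mbf G\in\mathcal S\}$ and $\{e(\Gamma_n,\mbf G)\le\theta\}$ are not actually nested. (Your lower-semicontinuity/Fatou argument for attainment of the infimum over $\mathcal Q_P$, on the other hand, is correct and is in fact a cleaner route than the compactness contradiction the paper uses in its Lemma~1.)

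Your converse rests on a misdiagnosis. You claim the naive Fano argument ``only gives'' a bound with $\max_{\mbf Q\in\mathcal Q_P}$, i.e.\ control of $\mbb P[\bigcap_{\mbf Q}\mathcal E(\mbf Q,R)]$, because the relevant $\mbf Q$ could vary with $\mbf g$. But the transmission code $\Gamma_n$ is fixed and the same for every channel state (only the \emph{decoder} is state-dependent); hence when a uniform message is sent through the fixed encoder, the per-letter input covariances $\mbf Q_i=\mathrm{cov}(\bs T_i)$ and their average $\mbf Q^{\star}=\frac1n\sum_i\mbf Q_i\in\mathcal Q_P$ are deterministic quantities independent of $\mbf g$. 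Fano plus memorylessness plus concavity of $\log\circ\det$ then gives, for every $\mbf g$ with $e(\Gamma_n,\mbf g)\le\theta$, the single inequality
\begin{align}
\log\det\!\Big(\mathbf I_{N_R}+\tfrac{1}{\sigma^2}\mbf g\,\mbf Q^{\star}\mbf g^{H}\Big)\ \ge\ (1-\theta)\,\rho_n-\tfrac1n
\nonumber
\end{align}
with the \emph{same} $\mbf Q^{\star}$ for all such $\mbf g$, which immediately yields $\inf_{\mbf Q}\mbb P[\mbf G\in\mathcal E(\mbf Q,R_n)]\le\mbb P[\mbf G\in\mathcal E(\mbf Q^{\star},R_n)]\le\eta$. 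This is exactly the paper's argument. Your covariance-localisation pigeonhole, the $\Lambda$-truncation of $\mathrm{tr}(\mbf G\mbf G^{H})$, and the final appeal to right-continuity of $\eta\mapsto R^{\star}(\eta)$ are all machinery invented to repair a problem that does not exist; they are not wrong, but they obscure the simple structure and would need further care (e.g.\ $\mbf Q^{\star}+\varepsilon\mathbf I\notin\mathcal Q_P$) to make fully rigorous.
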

    \begin{corollary}
If $N_T=N_R=1,$ then
    the $\eta$-outage capacity of the SISO slow fading channel $W_{G}$ is equal to
 
    \[
        C_\eta(P)=\log\left(1+\frac{P\gamma_0^2}{\sigma^2}\right),
    \]
    where 
     $$\gamma_0=\sup\{\gamma:\mbb P[\lvert G\rvert<\gamma]\leq\eta\}.$$
    
    \end{corollary}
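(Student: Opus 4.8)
The plan is to specialize the formula of Theorem~\ref{cetathmMIMO} to $N_T=N_R=1$ and simplify. In the SISO case a channel realization is a scalar $g\in\mbb C$, and $\mathcal{Q}_P$ reduces to the real interval $[0,P]$ (a $1\times 1$ positive semidefinite Hermitian matrix is a nonnegative real equal to its own trace). For $q\in[0,P]$ the condition defining $\mathcal E$ in \eqref{setE} reduces (the determinant of a $1\times 1$ matrix being its single entry) to $\log\bigl(1+q|g|^2/\sigma^2\bigr)<R$, equivalently $q|g|^2<\sigma^2(2^R-1)$. Writing $\mathcal E(q,R)=\{g\in\mbb C:\ q|g|^2<\sigma^2(2^R-1)\}$, Theorem~\ref{cetathmMIMO} therefore reads
\[
C_\eta(P)=\sup\Bigl\{R:\ \inf_{q\in[0,P]}\mbb P[G\in\mathcal E(q,R)]\le\eta\Bigr\}.
\]

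First I would evaluate the inner infimum. If $R\le 0$ then $2^R-1\le 0$, so $\mathcal E(q,R)=\emptyset$ for every $q$ and the infimum is $0\le\eta$; thus every $R\le 0$ is feasible, and in particular $C_\eta(P)\ge 0$. If $R>0$, then for $q\in(0,P]$ we have $\mathcal E(q,R)=\{g:|g|<\sigma\sqrt{(2^R-1)/q}\}$, a disk whose radius is non-increasing in $q$, while $\mathcal E(0,R)=\mbb C$; hence the sets $\mathcal E(q,R)$ are nested and decreasing in $q$, so $q\mapsto\mbb P[G\in\mathcal E(q,R)]$ is non-increasing on $[0,P]$ and attains its infimum at $q=P$. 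Consequently, for $R>0$,
\[
\inf_{q\in[0,P]}\mbb P[G\in\mathcal E(q,R)]=\mbb P\Bigl[|G|<\sigma\sqrt{(2^R-1)/P}\,\Bigr].
\]

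Next I would change variables. Let $\psi(R)=\sigma\sqrt{(2^R-1)/P}$ and $\phi(\gamma)=\log\bigl(1+P\gamma^2/\sigma^2\bigr)$; these are mutually inverse, continuous, strictly increasing bijections of $[0,\infty)$ onto itself with $\phi(0)=0$. Put $\Gamma=\{\gamma\ge 0:\mbb P[|G|<\gamma]\le\eta\}$, so that $0\in\Gamma$ and $\gamma_0=\sup\Gamma$ (the values $\gamma<0$ lie in $\{\gamma:\mbb P[|G|<\gamma]\le\eta\}$ vacuously but do not change its supremum). By the previous step, for $R\ge 0$ the feasibility condition $\inf_{q\in[0,P]}\mbb P[G\in\mathcal E(q,R)]\le\eta$ holds if and only if $\psi(R)\in\Gamma$, i.e. if and only if $R\in\phi(\Gamma)$. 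Since in addition every $R\le 0$ is feasible and $0\in\phi(\Gamma)$, the supremum in the formula displayed above equals $\sup\phi(\Gamma)$; and as $\phi$ is continuous and strictly increasing, $\sup\phi(\Gamma)=\phi(\sup\Gamma)=\phi(\gamma_0)=\log\bigl(1+P\gamma_0^2/\sigma^2\bigr)$, which is the claimed formula.

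This corollary is essentially a direct computation from Theorem~\ref{cetathmMIMO}; the only points that need care are the monotonicity argument letting one replace $\inf_{\mbf Q\in\mathcal Q_P}$ by evaluation at full power $q=P$, and the bookkeeping of the degenerate regimes $R\le 0$ and $\gamma_0=0$ (the latter occurring exactly when $\mbb P[|G|<\gamma]>\eta$ for every $\gamma>0$), which must be tracked consistently so that the identity remains valid there.
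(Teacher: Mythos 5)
Your proof is correct, and since the paper states this corollary without proof (as an immediate specialization of Theorem~\ref{cetathmMIMO}), your derivation simply fills in the details that the authors left to the reader. The essential observations are exactly right: in the SISO case $\mathcal{Q}_P=[0,P]$ is totally ordered, the sets $\mathcal{E}(q,R)$ are nested and shrink with $q$, so the inner infimum is attained at full power $q=P$; the change of variables via the mutually inverse increasing bijections $\psi$ and $\phi$ then yields $C_\eta(P)=\phi(\gamma_0)$, with the edge cases $R\le 0$ and $\gamma_0=0$ handled consistently. One comment worth making explicit: this reduction to full power is special to the one-dimensional case, since in general MIMO the set $\mathcal{Q}_P$ is only partially ordered and there is no single dominating covariance matrix, which is why the general theorem keeps the infimum over $\mathcal{Q}_P$.
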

    
Next, we define an achievable $\eta$-outage CR rate and the $\eta$-outage CR capacity for the model presented in Section \ref{systemmodel}. This is an extension of the definition of an achievable CR rate and of the CR capacity over rate-limited discrete noiseless channels introduced in \cite{part2}.
\begin{definition} 
Fix a non-negative constant $\eta<1.$ A number $H$ is called an achievable $\eta$-outage CR rate  if there exists a non-negative constant $c$ such that for every $\alpha>0$ and $\delta>0$ and for sufficiently large $n$ there exists a permissible   pair of random variables $(K,L)$ such that
\begin{equation}
    \mbb P\left[\mbb P\left[K\neq L|\mbf G\right]\leq \alpha \right]\geq 1-\eta, 
    \label{errorMIMOcorrelated}
\end{equation}
\begin{equation}
    |\mathcal{K}|\leq 2^{cn},
    \label{cardinalityMIMOcorrelated}
\end{equation}
\begin{equation}
    \frac{1}{n}H(K)> H-\delta.
     \label{rateMIMOcorrelated}
\end{equation}
\end{definition}
\begin{remark}
The constant $\alpha>0$ in \eqref{errorMIMOcorrelated} refers to the maximum error probability from the common randomness generation perspective.
\end{remark}
\begin{remark}
The outer probability in \eqref{errorMIMOcorrelated} is with respect to $\mbf G.$
\end{remark}
\begin{remark}
Together with \eqref{errorMIMOcorrelated}, the technical condition  \eqref{cardinalityMIMOcorrelated} ensures for every $\epsilon>0$ and sufficiently large block-length $n$ that $\mbb P\left[\mbf G \in \seta^{(n,\epsilon)}\right]\geq 1-\eta,$
where
$$\seta^{(n,\epsilon)} = \bigg\{ \mbf g \in \mbb C^{N_{R}\times N_{T}}: \bigg| \frac{H(K|\mbf G=\mbf g)}{n}-\frac{H(L|\mbf G=\mbf g)}{n} \bigg| \leq \epsilon \bigg\}.$$
 This follows from the analogous statement in\cite{part2}.
\end{remark}
\begin{definition} 
The $\eta$-outage CR capacity $C_{\eta,CR}(P,N_{T}\times N_{R})$ is the maximum achievable $\eta$-outage CR rate.
\end{definition}
\begin{theorem}
For the model described in Section \ref{systemmodel}, the $\eta$-outage CR capacity is equal to
\begin{align}
C_{\eta,CR}(P,N_{T}\times N_{R}) = 
  \underset{ \substack{U \\{\substack{U \circlearrow{X} \circlearrow{Y}\\ I(U;X)-I(U;Y) \leq C_{\eta}(P,N_{T}\times N_{R})}}}}{\max} I(U;X).  \nonumber
\end{align}
\label{ccretathmMIMO}
\end{theorem}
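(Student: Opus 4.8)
Write $g(R):=\max\{I(U;X):\ U\circlearrow X\circlearrow Y,\ I(U;X)-I(U;Y)\le R\}$; by the support lemma it suffices to consider $|\setu|\le|\setx|+1$, so the maximum is attained, and a standard time-sharing argument shows $g$ is concave, hence continuous, and non-decreasing. The target is $C_{\eta,CR}(P,N_T\times N_R)=g\big(C_\eta(P,N_T\times N_R)\big)$, and Theorem \ref{cetathmMIMO} will be used as a black box on both sides.

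\textbf{Achievability.} Fix $H<g\big(C_\eta(P,N_T\times N_R)\big)$; by continuity of $g$ there is $R<C_\eta(P,N_T\times N_R)$ and an auxiliary $U$ with $U\circlearrow X\circlearrow Y$, $I(U;X)-I(U;Y)\le R$, $I(U;X)>H$, so in particular $I(U;X)-I(U;Y)<C_\eta(P,N_T\times N_R)$. I would run a Wyner--Ziv-type scheme: fix a small $\epsilon>0$, generate a codebook of $\approx 2^{n(I(U;X)+\epsilon)}$ sequences $\bs u^n(\ell)$ i.i.d.\ $\sim P_U^{\otimes n}$, and partition them uniformly at random into $\approx 2^{n(I(U;X)-I(U;Y)+\epsilon)}$ bins. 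Terminal $A$, on observing $X^n$, sets $K$ to an index $\ell$ with $(\bs u^n(\ell),X^n)$ jointly typical (covering lemma: such $\ell$ exists with probability $\to 1$) and transmits the bin index $b(K)$ over $W_{\mbf G}$ using a transmission code $\Gamma_n$ with at least as many messages as there are bins and with $\mbb P[e(\Gamma_n,\mbf G)\le\theta]\ge 1-\eta$; such a code sequence exists because $\tfrac1n\log(\text{number of bins})\to I(U;X)-I(U;Y)+\epsilon<C_\eta(P,N_T\times N_R)$ and every rate below $C_\eta(P,N_T\times N_R)$ is an achievable $\eta$-outage rate by Theorem \ref{cetathmMIMO}. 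Terminal $B$ decodes a bin index $\hat b$ from $\bs Z^n$, then sets $L$ to the unique $\ell'$ in bin $\hat b$ with $(\bs u^n(\ell'),Y^n)$ jointly typical; by the Markov lemma and the bin-size count ($\approx 2^{nI(U;Y)}$ codewords per bin) this disambiguation succeeds whenever $\hat b=b(K)$, with conditional error probability $\to 0$. Since the source part is independent of $\mbf G$, for any $\mbf g$ with $e(\Gamma_n,\mbf g)\le\theta$ we get $\mbb P[K\neq L\mid\mbf G=\mbf g]\le\theta+o(1)\le\alpha$ for $\theta$ small and $n$ large, so \eqref{errorMIMOcorrelated} holds with outer probability $\ge\mbb P[e(\Gamma_n,\mbf G)\le\theta]\ge 1-\eta$, and \eqref{cardinalityMIMOcorrelated} holds with $c=I(U;X)+\epsilon$. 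For \eqref{rateMIMOcorrelated}, Markov's inequality lets me fix a codebook for which the covering- and disambiguation-failure probabilities are $o(1)$; then $H(K)=I(K;X^n)=nH(X)-H(X^n\mid K)$, and, since given $K=\ell$ and the covering event $X^n$ is supported on the at most $2^{n(H(X|U)+\epsilon)}$ sequences jointly typical with $\bs u^n(\ell)$, one gets $H(X^n\mid K)\le n(H(X|U)+\epsilon)+o(n)$, whence $\tfrac1n H(K)\ge I(U;X)-\epsilon-o(1)>H-\delta$ for $\epsilon$ small and $n$ large. A limiting version (taking $U$ optimal for $g(C_\eta-\tau)$, $\tau\downarrow 0$) shows the value $g\big(C_\eta(P,N_T\times N_R)\big)$ is itself achieved, so the maximum in the statement is the right expression.

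\textbf{Converse.} Let $H$ be an achievable $\eta$-outage CR rate with constant $c$; fix $\alpha,\delta>0$, $n$ large, and a permissible pair $(K,L)$ satisfying \eqref{errorMIMOcorrelated}--\eqref{rateMIMOcorrelated}. Since $K=\Phi(X^n)$ and $(\mbf G,\bs\xi^n)$ is independent of $X^n$, $K$ and $(K,Y^n)$ are independent of $\mbf G$, so $H(K\mid\mbf G=\mbf g)=H(K)$ and $I(K;Y^n\mid\mbf G=\mbf g)=I(K;Y^n)$ for every $\mbf g$. On $\mathcal G:=\{\mbf g:\mbb P[K\neq L\mid\mbf G=\mbf g]\le\alpha\}$, which has probability $\ge 1-\eta$ by \eqref{errorMIMOcorrelated}, Fano's inequality gives $H(K\mid L,\mbf G=\mbf g)\le 1+\alpha\log|\mathcal K|\le 1+\alpha cn=:n\epsilon_{n,\alpha}$. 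Using $L=\Psi(Y^n,\bs Z^n)$, the fact that $\bs\xi^n$ is independent of $(X^n,Y^n,\mbf G)$ (so that, given $(\bs T^n,\mbf G=\mbf g)$, $\bs Z^n$ is independent of $(K,Y^n)$), and data processing, I get for every $\mbf g\in\mathcal G$
\begin{align}
H(K)&\le I(K;L\mid\mbf G=\mbf g)+n\epsilon_{n,\alpha}\le I(K;Y^n,\bs Z^n\mid\mbf G=\mbf g)+n\epsilon_{n,\alpha}\nonumber\\
&\le I(K;Y^n)+I(\bs T^n;\bs Z^n\mid\mbf G=\mbf g)+n\epsilon_{n,\alpha}.\nonumber
\end{align}
Let $\bar{\mbf Q}:=\tfrac1n\sum_{i=1}^n\text{cov}(\bs T_i)$. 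By the power constraint \eqref{energyconstraintMIMOCorrelated}, $\bar{\mbf Q}\in\mathcal Q_P$, and since a Gaussian input maximizes mutual information for a fixed input covariance and $\mbf Q\mapsto\log\det(\mbf I_{N_R}+\tfrac1{\sigma^2}\mbf g\mbf Q\mbf g^H)$ is concave, $I(\bs T^n;\bs Z^n\mid\mbf G=\mbf g)\le n\log\det(\mbf I_{N_R}+\tfrac1{\sigma^2}\mbf g\bar{\mbf Q}\mbf g^H)$ --- crucially with the \emph{same}, state-independent $\bar{\mbf Q}$ for every $\mbf g$, because Terminal $A$ does not know $\mbf G$. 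Combining with $H(K)=I(K;X^n)$ gives, for all $\mbf g\in\mathcal G$,
\[
\tfrac1n\big(I(K;X^n)-I(K;Y^n)\big)-\epsilon_{n,\alpha}\ \le\ \log\det\big(\mbf I_{N_R}+\tfrac1{\sigma^2}\mbf g\bar{\mbf Q}\mbf g^H\big),
\]
i.e.\ $\mbb P\big[\mbf G\in\mathcal E\big(\bar{\mbf Q},\tfrac1n(I(K;X^n)-I(K;Y^n))-\epsilon_{n,\alpha}\big)\big]\le\mbb P[\mathcal G^c]\le\eta$, so by the characterization of $C_\eta$ in Theorem \ref{cetathmMIMO}, $\tfrac1n(I(K;X^n)-I(K;Y^n))-\epsilon_{n,\alpha}\le C_\eta(P,N_T\times N_R)$. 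Finally I single-letterize with $U_i:=(K,X^{i-1},Y_{i+1}^n)$: the Markov chain $U_i\circlearrow X_i\circlearrow Y_i$ holds (because $X^{i-1},X_{i+1}^n,Y_{i+1}^n$ are independent of $(X_i,Y_i)$ and $K$ is a function of $X^n$), the Csisz\'ar sum identity yields $I(K;X^n)-I(K;Y^n)=\sum_{i=1}^n\big(I(U_i;X_i)-I(U_i;Y_i)\big)$, and $I(K;X^n)\le\sum_{i=1}^n I(U_i;X_i)$. With $J\sim\unif\{1,\dots,n\}$ independent of everything and $U:=(J,U_J)$, $X:=X_J$, $Y:=Y_J$, this reads $\tfrac1n H(K)\le I(U;X)$ and $I(U;X)-I(U;Y)=\tfrac1n(I(K;X^n)-I(K;Y^n))$ with $U\circlearrow X\circlearrow Y$. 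Hence $H<\tfrac1n H(K)+\delta\le I(U;X)+\delta\le g\big(C_\eta(P,N_T\times N_R)+\epsilon_{n,\alpha}\big)+\delta$; letting $n\to\infty$ (so $\epsilon_{n,\alpha}\to\alpha c$), then $\alpha\to 0$, then $\delta\to 0$, and using continuity of $g$, gives $H\le g\big(C_\eta(P,N_T\times N_R)\big)$, which with achievability proves the theorem.

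\textbf{Main obstacle.} The delicate point is the converse step that matches the bound to Theorem \ref{cetathmMIMO}: one must pull out a single, \emph{state-independent} covariance $\bar{\mbf Q}\in\mathcal Q_P$ and recognize that the nested reliability requirement \eqref{errorMIMOcorrelated} forces $\log\det(\mbf I_{N_R}+\tfrac1{\sigma^2}\mbf G\bar{\mbf Q}\mbf G^H)\ge\tfrac1n(I(K;X^n)-I(K;Y^n))-o(1)$ with probability at least $1-\eta$ --- exactly the event controlled by $\mathcal E(\bar{\mbf Q},\cdot)$ in the characterization of $C_\eta$. The Wyner--Ziv binning, the Csisz\'ar-sum single-letterization, and the concavity/support-lemma bookkeeping for $g$ are otherwise standard.
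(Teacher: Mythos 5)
Your proof is correct and the direct part is essentially the paper's argument (Ahlswede--Csisz\'ar/Wyner--Ziv binning with the bin index protected by a transmission code satisfying the outage constraint), but your converse takes a genuinely different --- and simpler --- route to the crucial bound $I(U;X)-I(U;Y)\le C_\eta(P,N_T\times N_R)+o(1)$. The paper defines $\mbf Q^\star=\frac1n\sum_i\mbf Q_i$, sets $R(\mbf Q^\star)=\sup\{R:\mbb P[\mbf G\in\mathcal E(\mbf Q^\star,R)]\le\eta\}$, forms the set $\Omega$ of states on which \emph{both} the conditional error and $\log\det(\mbf I_{N_R}+\sigma^{-2}\mbf g\mbf Q^\star\mbf g^H)$ are controlled, proves $\mbb P[\Omega]>0$, introduces a fresh random matrix $\tilde{\mbf G}$ distributed as $\mbf G$ conditioned on $\Omega$ and independent of the source and noise, and then bounds $\frac1nH(K|Y^n)=\frac1nH(K|\tilde{\mbf G},Y^n)$ via Fano plus an \emph{averaged} mutual-information bound $\mbb E[\log\det(\mbf I_{N_R}+\sigma^{-2}\tilde{\mbf G}\mbf Q^\star\tilde{\mbf G}^H)]\le R(\mbf Q^\star)+\epsilon\le C_\eta+\epsilon$. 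You instead work pointwise: for every $\mbf g$ in the good set $\mathcal G$, Fano plus the data-processing chain directly give $\log\det(\mbf I_{N_R}+\sigma^{-2}\mbf g\bar{\mbf Q}\mbf g^H)\ge\frac1n(I(K;X^n)-I(K;Y^n))-\epsilon_{n,\alpha}$, which is exactly the complementary event to $\mathcal E(\bar{\mbf Q},\cdot)$; the set containment $\mathcal E(\bar{\mbf Q},R)\subseteq\mathcal G^c$ together with $\mbb P[\mathcal G^c]\le\eta$ plugs directly into the characterization in Theorem \ref{cetathmMIMO} and yields $R\le C_\eta$. This avoids the $\Omega$/$\tilde{\mbf G}$ machinery entirely, replacing two lemmas ($\mbb P[\Omega]>0$ and $R(\mbf Q^\star)\le C_\eta$) with a one-line set inclusion. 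The only price is that you must commit to fixed regular-conditional versions of $\mbb P[K\neq L|\mbf G=\mbf g]$, $H(K|L,\mbf G=\mbf g)$, and $I(\bs T^n;\bs Z^n|\mbf G=\mbf g)$ so that the pointwise statements are meaningful; the paper's construction sidesteps that by working with genuine (unconditional) information quantities of $\tilde{\mbf G}$. Both routes use the Csisz\'ar-sum single-letterization (Lemma 17.12 in \cite{codingtheorems}) with the same auxiliary $U=(K,X^{J-1},Y_{J+1}^n,J)$, and both exploit that $\bs T^n$, and hence $\bar{\mbf Q}=\mbf Q^\star$, are state-independent because Terminal $A$ has no CSI.
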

\subsection{Auxiliary Result}
For the proof of Theorem \ref{cetathmMIMO}, we require the following result about an achievable rate for a specific compound MIMO complex Gaussian channels. Let $a>0$ be fixed arbitrarily. We consider the set $\mathcal{G}_a$ defined as 
 \begin{equation}
    \mathcal{G}_a=\{ \mbf g \in \mbb C^{N_{R}\times N_{T}}: \lVert \mbf g \rVert \leq a\}.
    \label{setga}
\end{equation}
Let $\mathcal{G}$ be any closed subset of $\mathcal{G}_a.$ We define the compound channel $$\mathcal{C}=\{ W_{\mbf g}: \mbf g \in \mathcal{G}\}.$$ 
We define next an achievable transmission rate and the transmission capacity for the compound channel $\mc C.$
\begin{definition}
 A real number $R$ is called an \textit{achievable} rate for the compound channel $\{ W_{\mbf g}: \mbf g \in \mc G\} $ if for every $\theta,\delta>0$ and all $\mbf g \in \mc G$ there exists a code sequence $(\Gamma_n)_{n=1}^\infty$  such that
    \[
        \frac{\log\lvert \Gamma_n\rvert}{n}\geq R-\delta
    \]
    and 
    \[
      \ e(\Gamma_n,\mbf g)\leq\theta,
    \]
    for sufficiently large $n$.
\end{definition}
\begin{theorem}
\label{achievratecompoundchannels}
 An achievable rate for  $\mathcal{C}$ is  $$\underset{\mbf Q \in \mathcal{Q}_{P}}{\sup}\underset{\mbf g \in \mathcal{G}}{\inf} \log\det(\mbf I_{N_{R}}+\frac{1}{\sigma^2}\mbf g \mbf Q \mbf g^H).$$
\end{theorem}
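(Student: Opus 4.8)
The plan is to establish the achievable rate for the compound channel $\mathcal{C} = \{W_{\mbf g} : \mbf g \in \mathcal{G}\}$ by a random coding argument with a fixed Gaussian input distribution, combined with a discretization (finite net) argument over the parameter set $\mathcal{G}$. Fix $\mbf Q^* \in \mathcal{Q}_P$ arbitrarily; since the claim is an \emph{inf} over $\mathcal{G}$ of the mutual information $\log\det(\mbf I_{N_R} + \tfrac{1}{\sigma^2}\mbf g \mbf Q^* \mbf g^H)$ and then a \emph{sup} over $\mbf Q^*$, it suffices to show that for each fixed $\mbf Q^*$ the rate $R(\mbf Q^*) := \inf_{\mbf g \in \mathcal{G}} \log\det(\mbf I_{N_R} + \tfrac{1}{\sigma^2}\mbf g \mbf Q^* \mbf g^H)$ is achievable, and then take the supremum at the end. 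So the real work is: for a fixed covariance $\mbf Q^*$, design a \emph{single} code sequence that works simultaneously for all $\mbf g \in \mathcal{G}$ at rate $R(\mbf Q^*) - \delta$.

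First I would generate the codebook randomly: draw $\lvert \Gamma_n \rvert = 2^{n(R(\mbf Q^*)-\delta)}$ codewords i.i.d.\ from a distribution that is (essentially) $\mathcal{N}_{\mathbb{C}}(0,\mbf Q^*)^{\otimes n}$, with a minor truncation/modification so that the hard power constraint $\tfrac1n\sum_i \bs t_{\ell,i}^H \bs t_{\ell,i} \le P$ holds surely — e.g.\ condition on the typical power shell or scale down slightly and use $\mbf Q^*$ with trace $P-\epsilon$; this is a standard technical adjustment that costs an arbitrarily small rate penalty and preserves the mutual-information value up to $o(1)$. For a \emph{given} $\mbf g$, decoding is by joint typicality (or a threshold/maximum-likelihood decoder) with respect to the channel law $W_{\mbf g}$; since $\mbf g$ is known at the receiver (CSIR), the decoder is allowed to depend on $\mbf g$, which is exactly what Definition of the transmission-code permits via the family $\{\setd_\ell^{(\mbf g)}\}$. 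Standard arguments give that the expected maximum error probability $\mathbb{E}\, e(\Gamma_n, \mbf g)$ is small as soon as $R(\mbf Q^*)-\delta < \log\det(\mbf I_{N_R}+\tfrac{1}{\sigma^2}\mbf g \mbf Q^* \mbf g^H)$, which holds for every $\mbf g \in \mathcal{G}$ by definition of the infimum; one should use the standard expurgation step to pass from average to maximum error probability.

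The main obstacle — and the reason the $\mathcal{G} \subseteq \mathcal{G}_a$ boundedness and closedness hypotheses are there — is \textbf{uniformity over the continuum of channels} $\mbf g \in \mathcal{G}$: a priori the good-codebook event depends on $\mbf g$, and there are uncountably many. The plan here is a covering argument. The map $\mbf g \mapsto \log\det(\mbf I_{N_R}+\tfrac{1}{\sigma^2}\mbf g \mbf Q^* \mbf g^H)$ is continuous, and more importantly the channel output statistics depend on $\mbf g$ in a Lipschitz way on the compact set $\mathcal{G}_a$ (the output mean $\mbf g \bs t_\ell$ and the relevant information densities vary continuously, with moduli of continuity controlled uniformly using $\lVert \mbf g \rVert \le a$ and the power bound $P$). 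Hence I would take a finite $\epsilon$-net $\{\mbf g_1,\dots,\mbf g_M\} \subset \mathcal{G}$; show a single random codebook is simultaneously good for all net points $\mbf g_j$ with positive probability (union bound over $M$ points, each failing with exponentially small probability, so $M$ polynomial — indeed any fixed finite $M$ — is harmless); and then argue that a codebook good for $\mbf g_j$ is also good for every $\mbf g$ within $\epsilon$ of $\mbf g_j$, by bounding $\lvert e(\Gamma_n,\mbf g) - e(\Gamma_n,\mbf g_j)\rvert$ via the continuity of $W_{\mbf g}$ in $\mbf g$ (this uses that the decoding regions for nearby $\mbf g$ can be taken to be the same, or close, and that total variation distance between $W_{\mbf g}(\cdot|\mbf t_\ell)$ and $W_{\mbf g_j}(\cdot|\mbf t_\ell)$ is small uniformly in $\ell$ when $\lVert \mbf g - \mbf g_j\rVert$ is small).

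Finally I would assemble the pieces: fix $\theta, \delta > 0$; choose the net fineness $\epsilon$ small enough that the continuity perturbation changes the error probability by at most $\theta/2$ and the rate by at most $\delta/2$; apply the random coding plus expurgation argument at the finitely many net points to get, for all large $n$, a deterministic code $\Gamma_n$ with $\tfrac{1}{n}\log\lvert\Gamma_n\rvert \ge R(\mbf Q^*) - \delta/2$ and $e(\Gamma_n,\mbf g_j) \le \theta/2$ for all $j$; conclude $e(\Gamma_n,\mbf g) \le \theta$ for all $\mbf g \in \mathcal{G}$. This shows $R(\mbf Q^*) - \delta$ is achievable for the compound channel; since $\mbf Q^*\in\mathcal{Q}_P$ was arbitrary and $\delta>0$ arbitrary, $\sup_{\mbf Q \in \mathcal{Q}_P} \inf_{\mbf g\in\mathcal{G}} \log\det(\mbf I_{N_R}+\tfrac{1}{\sigma^2}\mbf g\mbf Q\mbf g^H)$ is an achievable rate, as claimed. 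I expect the delicate bookkeeping to be entirely in the uniform continuity estimates (controlling differential-entropy and information-density fluctuations under perturbations of $\mbf g$), which is where the bound $\lVert\mbf g\rVert \le a$ is essential to keep all constants uniform.
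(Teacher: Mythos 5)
Your high-level plan matches the paper's: fix a near-optimal $\mathbf Q^*\in\mathcal Q_P$ (with trace strictly less than $P$ so that power violations are exponentially rare), prove achievability for a finite set of channel matrices, and then bridge to the full closed set $\mathcal G\subseteq\mathcal G_a$ via a finite net plus a continuity argument. The finite-set step is fine (the paper uses a Feinstein-type bound, Lemma~\ref{existenceerror}, plus the Chernoff concentration Lemma~\ref{abweichungmean}; your random-coding-with-expurgation variant would also do). The gap is in the bridging step.

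You propose to pick a \emph{fixed} net radius $\epsilon$, show the code is good at the finitely many net points, and then transfer to every $\mathbf g$ with $\lVert \mathbf g-\mathbf g_j\rVert\le\epsilon$ by bounding $\lvert e(\Gamma_n,\mathbf g)-e(\Gamma_n,\mathbf g_j)\rvert$ through the total-variation distance between the $n$-fold channel laws $W_{\mathbf g}(\cdot|\mathbf t_\ell)$ and $W_{\mathbf g_j}(\cdot|\mathbf t_\ell)$. This does not work with a fixed $\epsilon$: those are $n$-dimensional Gaussians with means differing by $(\mathbf g-\mathbf g_j)\bs t_i$ in each coordinate, so the cumulative squared mean shift is of order $n\,\epsilon^2 P$, and the total-variation distance tends to $1$ as $n\to\infty$ for any fixed $\epsilon>0$. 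Hence your statement "choose the net fineness $\epsilon$ small enough that the continuity perturbation changes the error probability by at most $\theta/2$," with $\epsilon$ independent of $n$, cannot be realized. (One could rescue the idea by letting $\epsilon=\epsilon_n\to0$ fast enough, at the cost of a net whose cardinality grows polynomially in $n$ and a union bound over that growing net, but that is not what you wrote.)

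The paper avoids this by replacing the additive TV bound with a \emph{multiplicative} likelihood-ratio bound. Lemma~\ref{approximation} shows that on the event $F$ where the received sequence has bounded empirical power, $W_{\mathbf g}(\cdot|\mathbf t^n)/W_{\hat{\mathbf g}}(\cdot|\mathbf t^n)\le 2^{c_2 n\lVert\mathbf g-\hat{\mathbf g}\rVert}$, while Lemma~\ref{lemmadelta} controls $W_{\mathbf g}(F^c|\mathbf t^n)$. This exponential-in-$n$ blowup is then offset by the \emph{exponential decay} of the error probability at the net points established in the finite-subset theorem, giving $e(\Gamma_n,\mathbf g)\lesssim 2^{-n(K_1-K_2\mu)}$, which vanishes once the fixed net radius $\mu$ is chosen small relative to $K_1/K_2$. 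This trade of one exponent against another is the piece missing from your argument.
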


\section{Proof of Theorem \ref{cetathmMIMO}}
\label{proofoutagecapacity}
\subsection{Direct Proof}
 We define
    \begin{align}
       R_{\eta,\sup} 
       &=\sup \ \Big\{R: \underset{\mathbf{Q}\in \mathcal{Q}_{P}}{\inf }\mbb P\left[\mbf G \in \mathcal{E}(\mbf Q,R) \right] \leq \eta\Big\}.\nonumber
    \end{align}
We will show that
    \begin{equation}\label{eq:outage_geq}
        C_\eta(P,N_{T}\times N_{R})\geq R_{\eta,\sup}-\epsilon, \nonumber
    \end{equation}
    with $\epsilon$ being an arbitrarily small positive constant.
    Clearly, from the definition of $R_{\eta,\sup}$ above, it holds that
    \begin{align}
         P_\infsub=\underset{\mbf Q \in \mathcal{Q}_{P}}{\inf }\mbb P\left[\mbf G \in \mathcal{E}(\mbf Q,R_{\eta,\sup}-\frac{\epsilon}{2}) \right] \leq \eta.
        \nonumber
    \end{align}
    Next, we choose a $\hat{\mbf Q} \in \mathcal{Q}_P $ such that $ \mbb P \left[\mbf G \in \mathcal{E}(\hat{\mbf Q},R_{\eta,\sup}-\frac{\epsilon}{2}) \right] \leq \eta.$

    Here, we distinguish two cases:
    \paragraph{If  \texorpdfstring{$P_\infsub< \eta$}{TEXT}}We fix $\alpha_1>0$ to be sufficiently small such that $P_\infsub+\alpha_1\leq \eta.$
    We choose a 
    $\hat{\mbf Q}\in \mathcal{Q}_P$ such that
    \begin{align}
        \mbb P\left[\mbf G \in \mathcal{E}(\hat{\mbf Q},R_{\eta,\sup}-\frac{\epsilon}{2}) \right]
        &\leq P_\infsub+\alpha_1 \nonumber \\
        &\leq \eta \nonumber \nonumber\\
        &<1.
        \nonumber
    \end{align}
    \paragraph{\text{If} \texorpdfstring{$  P_\infsub=\eta$}{TEXT}} We choose a $\hat{\mbf Q}\in\mathcal{Q}_P$ such that $\mbb P\left[\mbf G \in \mathcal{E}(\hat{\mbf Q},R_{\eta,\sup}-\frac{\epsilon}{2}) \right]=P_\infsub=\eta<1.$ The existence of such a $\hat{\mbf Q}$ is proved in what follows:
    \begin{lemma}
    If $P_\infsub=\eta,$ then there exists a $\hat{\mbf Q} \in \mathcal{Q}_P$ such that
    \begin{align}
    P_\infsub=\mbb P\left[\mbf G \in \mathcal{E}(\hat{\mbf Q},R_{\eta,\sup}-\frac{\epsilon}{2})  \right]. \nonumber 
    \end{align}
    \end{lemma}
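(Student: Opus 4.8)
The plan is to show that the infimum defining $P_\infsub$ is in fact attained, by exploiting the compactness of $\mathcal{Q}_P$ together with a lower-semicontinuity property of the map $\mbf Q\mapsto\mbb P[\mbf G\in\mathcal{E}(\mbf Q,R_{\eta,\sup}-\tfrac{\epsilon}{2})]$. Write $R\equivalent R_{\eta,\sup}-\tfrac{\epsilon}{2}$ for brevity. First I would pick a minimizing sequence $(\mbf Q_k)_{k\geq1}\subset\mathcal{Q}_P$ with $\mbb P[\mbf G\in\mathcal{E}(\mbf Q_k,R)]\to P_\infsub$. Since $\mathcal{Q}_P$ consists of positive semi-definite Hermitian matrices of trace at most $P$, and every such matrix satisfies $\lVert\mbf Q\rVert\leq\operatorname{tr}(\mbf Q)\leq P$, the set $\mathcal{Q}_P$ is closed and bounded in the finite-dimensional space $\mbb C^{N_T\times N_T}$, hence compact; after passing to a subsequence I may therefore assume $\mbf Q_k\to\hat{\mbf Q}$ for some $\hat{\mbf Q}\in\mathcal{Q}_P$.

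The key step is a lower-semicontinuity estimate that hinges on the \emph{strict} inequality in the definition \eqref{setE} of $\mathcal{E}(\cdot,\cdot)$. The function $(\mbf g,\mbf Q)\mapsto\log\det(\mbf I_{N_R}+\tfrac{1}{\sigma^2}\mbf g\mbf Q\mbf g^H)$ is continuous, so for every fixed $\mbf g\in\mathcal{E}(\hat{\mbf Q},R)$, i.e.\ with $\log\det(\mbf I_{N_R}+\tfrac{1}{\sigma^2}\mbf g\hat{\mbf Q}\mbf g^H)<R$, the same strict inequality holds with $\hat{\mbf Q}$ replaced by $\mbf Q_k$ for all sufficiently large $k$. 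Writing $\mathbf{1}_S$ for the indicator function of a set $S$, this yields $\mathbf{1}_{\mathcal{E}(\hat{\mbf Q},R)}(\mbf g)\leq\liminf_{k\to\infty}\mathbf{1}_{\mathcal{E}(\mbf Q_k,R)}(\mbf g)$ for every $\mbf g\in\mbb C^{N_R\times N_T}$. Applying Fatou's lemma to these uniformly bounded functions then gives
\[
\mbb P[\mbf G\in\mathcal{E}(\hat{\mbf Q},R)]\leq\liminf_{k\to\infty}\mbb P[\mbf G\in\mathcal{E}(\mbf Q_k,R)]=P_\infsub.
\]
Since the reverse inequality $\mbb P[\mbf G\in\mathcal{E}(\hat{\mbf Q},R)]\geq P_\infsub$ holds by the definition of $P_\infsub$ as an infimum over $\mathcal{Q}_P$, we conclude that $P_\infsub=\mbb P[\mbf G\in\mathcal{E}(\hat{\mbf Q},R_{\eta,\sup}-\tfrac{\epsilon}{2})]$, which is the assertion of the lemma.

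I expect the only genuine subtlety to be getting the direction of the semicontinuity argument right: it is essential that $\mathcal{E}(\mbf Q,R)$ is defined via a \emph{strict} inequality, which makes the condition $\mbf g\in\mathcal{E}(\mbf Q,R)$ open in $\mbf Q$ and hence gives $\liminf_k\mathbf{1}_{\mathcal{E}(\mbf Q_k,R)}\geq\mathbf{1}_{\mathcal{E}(\hat{\mbf Q},R)}$ rather than the opposite inequality; combined with Fatou's lemma, this means the outage probability can only decrease in the limit, which is exactly what certifies that $\hat{\mbf Q}$ attains the infimum. The remaining ingredients — compactness of $\mathcal{Q}_P$, continuity of the $\log\det$ expression, and the extraction of a convergent subsequence — are routine.
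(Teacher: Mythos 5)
Your proof is correct, and it takes a genuinely different route from the paper's. The paper argues by contradiction: assuming no minimizer exists, it sets $R(\mbf Q)=\sup\{R:\mbb P[\mbf G\in\mathcal E(\mbf Q,R)]\leq\eta\}$, then proves a local claim (for every $\mbf Q$ there is a $\delta_{\mbf Q}$-ball on which the outage probability exceeds some $\eta_{\mbf Q}>\eta$) using the \emph{uniform} continuity of $f(\mbf g,\mbf Q)=\log\det(\mbf I_{N_R}+\tfrac{1}{\sigma^2}\mbf g\mbf Q\mbf g^H)$ on the product of a truncated ball $\mathcal G_a$ with $\mathcal Q_P$, pairs this with a truncation estimate $\mbb P[\lVert\mbf G\rVert>a]\to 0$ so the error from restricting to $\mathcal G_a$ can be absorbed, and finally extracts a finite subcover of $\mathcal Q_P$ to push the infimum strictly above $\eta$, contradicting $P_\infsub=\eta$. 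You instead take a minimizing sequence, pass to a convergent subsequence $\mbf Q_k\to\hat{\mbf Q}$ by compactness of $\mathcal Q_P$, and observe that the strict inequality in the definition of $\mathcal E(\cdot,R)$ makes $\{\mbf Q:\mbf g\in\mathcal E(\mbf Q,R)\}$ open for each fixed $\mbf g$, so $\mathbf 1_{\mathcal E(\hat{\mbf Q},R)}\leq\liminf_k\mathbf 1_{\mathcal E(\mbf Q_k,R)}$ pointwise; Fatou's lemma then gives $\mbb P[\mbf G\in\mathcal E(\hat{\mbf Q},R)]\leq P_\infsub$, and the reverse inequality is by definition of the infimum. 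Your argument needs only pointwise continuity of $f$ in $\mbf Q$ (no truncation, no uniform continuity, no finite subcover), never invokes the hypothesis $P_\infsub=\eta$, and in fact establishes the stronger statement that the infimum $\inf_{\mbf Q\in\mathcal Q_P}\mbb P[\mbf G\in\mathcal E(\mbf Q,R)]$ is attained for any $R$. It is shorter and cleaner than the paper's route; the paper's route is more in the spirit of the $\epsilon/2$-bookkeeping used elsewhere in Section III and is self-contained given the auxiliary lemmas the authors already needed, but your Fatou argument would have made Lemma 3 (the truncation bound) unnecessary for this particular step.
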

    \begin{proof}
	Suppose that the claim of the lemma is not true. Then $P_\infsub=\eta$, but 
	\begin{align}
		\mbb P[\mbf G \in \mc E(\mbf Q,R_{\eta,\sup}-\frac{\epsilon}{2})]>\eta
		\label{retasupminusepsilon}
	\end{align}
	for all $\mbf Q\in\mc Q_P$. Set
	\begin{align}
		R(\mbf Q)=\sup\{R:\mbb P[\mbf G \in \mc E(\mbf Q,R)]\leq\eta\}.
		\label{RQ}
	\end{align}
It follows from the well-known Lemma \ref{sigmacontinuity} and from using \eqref{retasupminusepsilon} that $R(\mbf Q)<R_{\eta,\sup}-\frac{\epsilon}{2}$ for all $\mbf Q\in\mc Q_P$.
	
	\textbf{Claim:} For every $\mbf Q\in\mc Q_P$ there exists $\eta_{\mbf Q}>\eta$ and $\delta_{\mbf Q}>0$ such that 
	\[
		\lVert\mbf Q'-\mbf Q\rVert<\delta_{\mbf Q}\Longrightarrow\mbb P[\mbf G \in \mc E(\mbf Q',R_{\eta,\sup}-\frac{\epsilon}{2})]>\eta_{\mbf Q}.
	\]
	Assume the claim is true. Then by the compactness of $\mc Q_P$, there exist $\mbf Q_1,\ldots,\mbf Q_l$ such that 
	\[
		\mc Q_P=\bigcup_{i=1}^l\{\mbf Q \in \mc Q_{P}:\lVert\mbf Q-\mbf Q_i\rVert<\delta_{\mbf Q_i}\}.
	\]
	It follows that for every $\mbf Q\in\mc Q_P$, we have
	\[
		\mbb P[\mbf G \in \mc E(\mbf Q,R_{\eta,\sup}-\frac{\epsilon}{2})]
		\geq\min_{i=1,\ldots,l}\eta_{\mbf Q_i}
		>\eta,
	\]
	which implies that
	\begin{align}
    \underset{\mbf Q \in \mc Q_P}{\inf}\mbb P[\mbf G \in \mc E(\mbf Q,R_{\eta,\sup}-\frac{\epsilon}{2})]>\eta.
	\nonumber \end{align}
	This is a contradiction to the assumption of the lemma that $P_\infsub=\eta$. Therefore, there must exist a $\hat{\mbf Q}\in \mc Q_P$ satisfying $P_\infsub=\mbb P[\mbf G \in \mc E(\hat{\mbf Q},R_{\eta,\sup}-\frac{\epsilon}{2})]$. 
	
	Now it remains to prove the claim. Let $\mbf Q \in \mc Q_{P}.$ Let $f(\mbf g,\mbf Q)=\log\det(\mbf I_{N_{R}}+\frac{1}{\sigma^2}\mbf g \mbf Q \mbf g^{H}).$
	We define
	\[
		\epsilon'=\frac{R_{\eta,\sup}-\frac{\epsilon}{2}-R(\mbf Q)}{2}.
	\]
	Then $\epsilon'>0$, and so
	\[
		\mbb P[f(\mbf G,\mbf Q)<R(\mbf Q)+\epsilon']>\eta.
	\]

	Choose $a$ so large that 
	\begin{equation}\label{eq:a_choice}
		\mbb P[\lVert \mbf G\rVert>a]<\frac{\mbb P[f(\mbf G,\mbf Q)<R(\mbf Q)+\epsilon']-\eta}{2}.
	\end{equation}
    By Lemma \ref{probzerolemma}, we know that such an $a$ exists. Now, notice that
 the function $f$ is uniformly continuous on $\mc G_a\times\mc Q_P$. We recall that $\mc G_a=\{\mbf g \in \mbb C^{N_R\times N_T}:\lVert\mbf g\rVert\leq a\}.$ Equip the set $\mc G_a\times\mc Q_P$ with the norm 
	\[
		\lVert(\mbf g,\mbf Q)\rVert=\lVert\mbf g\rVert+\lVert\mbf Q\rVert.
	\]
	It follows that there exists a $\delta_{\mbf Q}>0$ such that for any $(\mbf g_1,\mbf Q_1),(\mbf g_2,\mbf Q_2)\in\mc G_a\times\mc Q_P$,
	\[
		\lVert(\mbf g_1,\mbf Q_1)-(\mbf g_2,\mbf Q_2)\rVert<\delta_{\mbf Q}\quad\Longrightarrow\quad\lvert f(\mbf g_1,\mbf Q_1)-f(\mbf g_2,\mbf Q_2)\rvert<\epsilon'.
	\]
	
	Now take any $\mbf Q'\in\mc Q_P$ satisfying $\lVert\mbf Q'-\mbf Q\rVert<\delta_{\mbf Q}$. By the above, $\lvert f(\mbf g,\mbf Q')-f(\mbf g,\mbf Q)\rvert<\epsilon'$ for any $\mbf g\in\mc G_a$. Hence, since $R_{\eta,\sup}-\frac{\epsilon}{2}-\epsilon'=R(\mbf Q)+\epsilon'$, it follows that
	\begin{align*}
		\mbb P[f(\mbf G,\mbf Q')<R_{\eta,\sup}-\frac{\epsilon}{2}]
		&\geq\mbb P[f(\mbf G,\mbf Q')<R_{\eta,\sup}-\frac{\epsilon}{2},\mbf G\in\mc G_a]\\
		&\geq\mbb P[f(\mbf G,\mbf Q)<R_{\eta,\sup}-\frac{\epsilon}{2}-\epsilon',\mbf G\in\mc G_a]\\
		&\geq\mbb P[f(\mbf G,\mbf Q)<R(\mbf Q)+\epsilon']-\mbb P[\mbf G\notin\mc G_a]\\
		&=:\eta_{\mbf Q}.
	\end{align*}
By the choice of $a$ (see \eqref{eq:a_choice}), we have $\eta_{\mbf Q}>\eta$. This proves the claim and thus completes the proof of the lemma.
    \end{proof}
    	\begin{lemma}
	$\forall \mbf Q \in \mc Q_{P}:$
	$$\mbb P \left[ \mbf G \in \mc E(\mbf Q, R(\mbf Q))  \right] \leq \eta$$ and so the supremum in \eqref{RQ}  is actually a maximum.
	\label{sigmacontinuity}
	\end{lemma}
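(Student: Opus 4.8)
The plan is to reduce the statement to a standard left-continuity property of the law of the scalar random variable $f(\mbf G,\mbf Q):=\log\det(\mbf I_{N_R}+\tfrac1{\sigma^2}\mbf G\mbf Q\mbf G^H)$. First I would fix $\mbf Q\in\mc Q_P$, set $F(R):=\mbb P[\mbf G\in\mc E(\mbf Q,R)]=\mbb P[f(\mbf G,\mbf Q)<R]$, and record two elementary facts: (i) $F$ is nondecreasing in $R$, since the events $\{f(\mbf G,\mbf Q)<R\}$ are nested in $R$; and (ii) $F$ is left-continuous, because for any sequence $R_n\uparrow R$ the events $\{f(\mbf G,\mbf Q)<R_n\}$ increase to $\{f(\mbf G,\mbf Q)<R\}$, so $F(R_n)\to F(R)$ by continuity of the probability measure from below.

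Next I would check that $R(\mbf Q)=\sup\{R:F(R)\le\eta\}$ is a well-defined finite real number. The defining set is nonempty because $f\ge 0$ always, so $F(R)=0\le\eta$ for every $R\le 0$; and it is bounded above because $\mbf G$ takes values in $\mbb C^{N_R\times N_T}$ and is therefore almost surely finite, whence $f(\mbf G,\mbf Q)<\infty$ almost surely and $F(R)\to 1>\eta$ as $R\to\infty$ (using $\eta<1$). Here one may invoke Lemma \ref{probzerolemma} if one prefers to phrase the finiteness of $\mbf G$ via $\mbb P[\lVert\mbf G\rVert>a]\to 0$.

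Since $F$ is nondecreasing, the sublevel set $\{R:F(R)\le\eta\}$ is a down-set, hence contains every $R<R(\mbf Q)$; thus $F(R)\le\eta$ for all $R<R(\mbf Q)$. Choosing any sequence $R_n\uparrow R(\mbf Q)$ and applying left-continuity yields $F(R(\mbf Q))=\lim_{n\to\infty}F(R_n)\le\eta$, which is exactly the asserted inequality $\mbb P[\mbf G\in\mc E(\mbf Q,R(\mbf Q))]\le\eta$. Consequently $R(\mbf Q)$ itself belongs to the set over which the supremum in \eqref{RQ} is taken, so that supremum is attained and is in fact a maximum.

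I do not anticipate a real obstacle; the one point deserving care is ensuring that the relevant one-sided continuity is left-continuity rather than right-continuity — this is precisely what the strict inequality in the definition of $\mc E(\mbf Q,R)$ in \eqref{setE} forces. Had $\mc E$ been defined with ``$\le$'' instead of ``$<$'', $F$ would only be right-continuous and the supremum need not be attained, so the strict inequality there is essential to the lemma.
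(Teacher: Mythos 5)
Your proof is correct and follows the same route as the paper: both arguments hinge on continuity from below of the probability measure applied to the increasing family of events $\{f(\mbf G,\mbf Q)<R_n\}$ as $R_n\nearrow R(\mbf Q)$. You are somewhat more explicit than the paper about the supporting facts (monotonicity of $F$, finiteness of $R(\mbf Q)$, and the role of the strict inequality in $\mc E$), but the core idea is identical.
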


	\begin{proof}
    Let $R_n\nearrow R(\mbf Q)$ be a sequence converging to $R(\mbf Q)$ from the left. Then 
    \[
        \{R\in \mbb R:R <R(\mbf Q)\}=\bigcup_{n=1}^\infty\{R\in \mbb R:R <R_n\}.
    \]
    From the sigma-continuity of probability measures, it follows that 
        $$\mbb P[ \mbf G \in \mc E(\mbf Q, R(\mbf Q))]=\underset{n\rightarrow \infty}{\lim}\mbb P[\mbf G\in \mc E(\mbf Q,R_n)]\leq\eta.$$
\end{proof}
	\begin{lemma}
	\label{probzerolemma}
	$$ \underset{a\rightarrow \infty}{\lim} \mbb P \left[ \lVert \mbf G \rVert > a \right]=0$$
	\end{lemma}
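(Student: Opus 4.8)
The plan is to invoke the monotone continuity (sigma-continuity) of the probability measure, exactly as in the proof of Lemma~\ref{sigmacontinuity}. First I would observe that since $\mbf G$ is a random matrix taking values in the finite-dimensional space $\mbb C^{N_R\times N_T}$, the operator norm $\lVert\mbf G\rVert$ is a finite-valued real random variable: it is a continuous, hence Borel-measurable, function of the entries of $\mbf G$, and every such entry is almost surely finite. In particular $\mbb P[\lVert\mbf G\rVert<\infty]=1$.

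Next, consider the events $A_n=\{\lVert\mbf G\rVert>n\}$ for integers $n\geq 1$. These are nested decreasing, $A_1\supseteq A_2\supseteq\cdots$, with $\bigcap_{n\geq 1}A_n=\{\lVert\mbf G\rVert=\infty\}$, which has probability zero by the previous observation. Since $\mbb P[A_1]\leq 1<\infty$, continuity from above of the probability measure yields $\lim_{n\to\infty}\mbb P[\lVert\mbf G\rVert>n]=\mbb P\big[\bigcap_{n\geq 1}A_n\big]=0$. Finally, because $a\mapsto\mbb P[\lVert\mbf G\rVert>a]$ is non-increasing in $a$, convergence along the integer subsequence forces $\lim_{a\to\infty}\mbb P[\lVert\mbf G\rVert>a]=0$. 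There is essentially no obstacle here; the only point that deserves a line of justification is the measurability and almost-sure finiteness of $\lVert\mbf G\rVert$, which is immediate in finite dimensions, after which the statement is just continuity from above applied to a decreasing sequence of events with empty intersection.
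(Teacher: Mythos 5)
Your proof is correct and takes essentially the same approach as the paper: both apply continuity from above of the probability measure to the nested decreasing events $\{\lVert\mbf G\rVert>n\}$. The only cosmetic difference is that the paper observes $\bigcap_{n\geq 1}\{\lVert\mbf G\rVert>n\}=\varnothing$ outright (since $\mbf G$ takes values in the finite-dimensional space $\mbb C^{N_R\times N_T}$, its operator norm is always finite), whereas you frame the intersection as the event $\{\lVert\mbf G\rVert=\infty\}$ of probability zero; you also spell out the monotonicity step passing from the integer subsequence to all real $a$, which the paper leaves implicit.
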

	\begin{proof}
		To prove the lemma, it suffices to notice that
 $$\bigcap_{n=1}^{\infty} \{ \lVert \mbf G \rVert >n \}=\varnothing.$$ It follows that $$\underset{m\rightarrow \infty}{\lim}\mbb P\left[\bigcap_{n=1}^{m} \{ \lVert \mbf G \rVert >n \} \right]=0. $$
 As a result:
 \begin{align}
     \underset{a\rightarrow \infty}{\lim} \mbb P \left[ \lVert \mbf G \rVert > a \right]=0.\nonumber
 \end{align}
	\end{proof}

So far, we have proved the existence of a $\hat{\mbf Q} \in \mc Q_P$ such that
$$ \mbb P[\mbf G \in \mc E(\hat{\mbf Q},R_{\eta,\sup}-\frac{\epsilon}{2})]\leq \eta.$$
  For some $a>0,$ we consider now the set
   \begin{align}
       \hat{\mc G}=\Big\{ \mbf g \in \mbb C^{N_R\times N_T}:R_{\eta,\sup}-\frac{\epsilon}{2}\leq \log\det(\mbf I_{N_{R}}+\frac{1}{\sigma^2}\mbf g \hat{\mbf Q} \mbf g^{H}) \ \text{and} \ \lVert \mbf g \rVert \leq a    \Big\}.
   \nonumber \end{align}
   We choose a non-singular $\tilde{\mbf Q} \in \mc Q_P$ such that for $\alpha_2>0$ sufficiently small
   	\begin{align}
   	    	\lVert \hat{\mbf Q}-\tilde{\mbf Q} \rVert \leq \alpha_2 \quad \Longrightarrow \quad  
     	\lvert f(\mbf g,\hat{\mbf Q})-f(\mbf g,\tilde{\mbf Q})\rvert \leq \frac{\epsilon}{2} \quad \forall \mbf g \in \hat{\mc G}.     	\label{choiceofhatQ}
   	\end{align}
	From the uniform continuity of $f$ on the compact set $\mc G_a \times \mc Q_P,$ we know that such a $\tilde{\mbf Q} \in \mc Q_P$ exists.
	Now, consider the set
   $$\tilde{\mathcal{G}}=\{ \mbf g \in \mbb C^{N_{R}\times N_{T}}: R_{\eta,\sup}-\epsilon\leq \log\det(\mbf I_{N_{R}}+\frac{1}{\sigma^2} \mbf g  \tilde{\mbf Q} \mbf g^{H}) \ \text{and} \ \lVert \mbf g \rVert \leq a \}.   $$
   \begin{lemma}
   \label{hatgincludedintildeg}
   \begin{align}
         \hat{\mc G} \subseteq \tilde{\mc G} \nonumber
   \end{align}
   \end{lemma}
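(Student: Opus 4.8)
The plan is to unwind the definitions of the two sets and apply the perturbation bound \eqref{choiceofhatQ} that was used to pick $\tilde{\mbf Q}$. Concretely, I would start by fixing an arbitrary $\mbf g \in \hat{\mc G}$ and checking that it meets both defining conditions of $\tilde{\mc G}$. The norm condition is immediate: membership in $\hat{\mc G}$ already gives $\lVert \mbf g\rVert \leq a$, which is exactly the second constraint in the definition of $\tilde{\mc G}$, so nothing more is needed there.

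The only substantive point is the inequality on $f(\mbf g,\tilde{\mbf Q}) = \log\det(\mbf I_{N_R}+\tfrac{1}{\sigma^2}\mbf g\tilde{\mbf Q}\mbf g^H)$. Since $\mbf g \in \hat{\mc G}$, we have $f(\mbf g,\hat{\mbf Q}) \geq R_{\eta,\sup}-\tfrac{\epsilon}{2}$. Because $\tilde{\mbf Q}$ was chosen (via the uniform continuity of $f$ on $\mc G_a\times\mc Q_P$) so that $\lvert f(\mbf g,\hat{\mbf Q})-f(\mbf g,\tilde{\mbf Q})\rvert \leq \tfrac{\epsilon}{2}$ for every $\mbf g \in \hat{\mc G}$, it follows that
\[
 f(\mbf g,\tilde{\mbf Q}) \;\geq\; f(\mbf g,\hat{\mbf Q}) - \tfrac{\epsilon}{2} \;\geq\; R_{\eta,\sup}-\tfrac{\epsilon}{2}-\tfrac{\epsilon}{2} \;=\; R_{\eta,\sup}-\epsilon .
\]
Together with $\lVert\mbf g\rVert\leq a$, this is precisely the statement that $\mbf g \in \tilde{\mc G}$, and since $\mbf g$ was arbitrary we conclude $\hat{\mc G}\subseteq\tilde{\mc G}$.

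There is essentially no obstacle here: the lemma is a bookkeeping consequence of the way $\tilde{\mbf Q}$ was constructed, and the only thing to be careful about is that the perturbation estimate \eqref{choiceofhatQ} was stated exactly for $\mbf g$ ranging over $\hat{\mc G}$, which is the set we are quantifying over, so the hypothesis applies verbatim. I would therefore keep the proof to the three lines above.
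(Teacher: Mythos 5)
Your proof is correct and follows exactly the same route as the paper: fix $\mbf g\in\hat{\mc G}$, note the norm bound carries over directly, and use the perturbation estimate \eqref{choiceofhatQ} to shift the $\log\det$ lower bound from $R_{\eta,\sup}-\tfrac{\epsilon}{2}$ down to $R_{\eta,\sup}-\epsilon$. Nothing to add.
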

   \begin{proof}
    Let $\mbf g \in \hat{\mc G}$ be fixed arbitrarily.  It holds that
   \begin{align}
       \lVert \mbf g \rVert \leq a 
       \label{firstreq}
   \end{align}
   and that
   \begin{align}
       R_{\eta,\sup}-\frac{\epsilon}{2}\leq \log\det(\mbf I_{N_{R}}+\frac{1}{\sigma^2}\mbf g \hat{\mbf Q} \mbf g^{H}).\nonumber
   \end{align}
   From \eqref{choiceofhatQ}, it follows that
   \begin{align}
       \log\det(\mbf I_{N_{R}}+\frac{1}{\sigma^2}\mbf g \hat{\mbf Q} \mbf g) \leq \frac{\epsilon}{2}+  \log\det(\mbf I_{N_{R}}+\frac{1}{\sigma^2}\mbf g \tilde{\mbf Q} \mbf g),\nonumber
   \end{align}
   yielding
   \begin{align}
   \log\det(\mbf I_{N_{R}}+\frac{1}{\sigma^2}\mbf g \tilde{\mbf Q} \mbf g) \geq R_{\eta,\sup}-\epsilon. \label{secondreq}
   \end{align}
  The inequalities \eqref{firstreq} and \eqref{secondreq} imply that
  $\mbf g \in \tilde{\mc G}.$
   \end{proof}
   Now, since the set $\Big\{ \mbf g \in \mbb C^{N_R\times N_T}:R_{\eta,\sup}-\frac{\epsilon}{2}\leq \log\det(\mbf I_{N_{R}}+\frac{1}{\sigma^2}\mbf g \hat{\mbf Q} \mbf g^{H})\Big\}$ is closed, it follows that $\tilde{\mc G}$ is a closed subset of $\mc G_a.$ By applying Theorem \ref{achievratecompoundchannels}, it follows that 
\begin{align}
    \underset{\mbf Q \in \mathcal{Q}_{P}}{\sup}\underset{\mbf g\in\tilde{\mathcal{G}}}{\inf} \log\det(\mathbf{I}_{N_{R}}+\frac{1}{\sigma^2}\mathbf{g}\mathbf{Q}\mathbf{g}^{H}) \nonumber
\end{align} \color{black}
is an achievable rate for $\tilde{\mc C}=\{W_{\mbf g}: \mbf g\in \tilde{\mathcal{G}}\}.$ 
Since $\tilde{\mbf Q}\in \mathcal{Q}_P$, it follows
that $$\underset{g\in\tilde{\mathcal{G}}}{\inf} \log\det(\mathbf{I}_{N_{R}}+\frac{1}{\sigma^2}\mathbf{g}\tilde{\mathbf{Q}}\mathbf{g}^{H})$$ is also an achievable rate for $\tilde{\mc C}$.
   
  Let $\theta,\delta>0$.
Since $\color{black}R_{\eta,\sup}-\epsilon \leq \underset{g\in\tilde{\mathcal{G}}}{\inf} \log\det(\mathbf{I}_{N_{R}}+\frac{1}{\sigma^2}\mathbf{g}\tilde{\mathbf{Q}}\mathbf{g}^{H})\ \color{black}$, it exists a code sequence $(\Gamma_{\tilde{\mc G},n})_{n=1}^\infty$ and a block length $n_0$ such that
    \[
        \frac{\log\lvert \Gamma_{\tilde{\mc G},n}\rvert}{n}\geq R_{\eta,\sup}-\epsilon-\delta
    \]
    and such that 
 \begin{align}
    &\mbf g \in \tilde{\mathcal{G}} \implies e(\Gamma_{\tilde{\mc G},n},\mbf g) \leq \theta     \nonumber 
    \end{align}
    for $n\geq n_0.$
    By Lemma \ref{hatgincludedintildeg}, it follows for $n\geq n_0$ that
    $$ e(\Gamma_n,\mbf g) \leq \theta \quad \forall \mbf g \in \hat{\mc G}.$$
    Next, we will prove that for a suitable $a>0$ and for $n\geq n_0$
    \begin{align}
     \mbf g \in \mc G_a^{c}=\{\mbf g \in \mbb C^{N_R\times N_T}: \lVert \mbf g \rVert >a    \} \Longrightarrow e(\Gamma_{\mbf g,n},\mbf g) \leq \theta, \nonumber
    \end{align}
    where $\Gamma_{\mbf g,n}$ is some code  with block-length $n.$
    Since $\tilde{\mbf Q}\in \mc Q_P$ is non-singular, it holds that
    \begin{align}
       \underset{a\rightarrow \infty}{\lim} \underset{\lVert \mbf g \rVert=a}{\min}f(\mbf g,\tilde{ \mbf Q})=\infty. \nonumber
    \end{align}
 One can therefore choose $a>0$ sufficiently large such that 
 \begin{align}
 \Big\{\mbf g \in \mbb C^{N_R\times N_T}: \lVert \mbf g
 \rVert=a\Big\}\subseteq \tilde{\mc G}.
 \label{reqa}
 \end{align}
 We fix $a>0$ satisfying \eqref{reqa}.
 Let $\mbf g$ with $\lVert \mbf g \rVert >a$ be fixed arbitrarily. Then, there exists  $\mbf g'=\frac{a}{\lVert \mbf g \rVert}\mbf g\in \tilde{\mc G}$ such that the Gaussian channel $W_{\mbf g'}$ is degraded from the Gaussian channel $W_\mbf g.$ It follows that there exists a code sequence $(\Gamma_{\mbf g,n})_{n=1}^{\infty}$ for $W_\mbf g$ such that  each $\Gamma_{\mbf g,n}$ has the same encoder and the same size as $\Gamma_{\tilde{\mc G},n}$ but a different decoder adjusted to $\mbf g$ and such that for $n\geq n_0,$ $e(\Gamma_{\mbf g,n},\mbf g)\leq \theta$. Here, we require channel state information at the receiver side (CSIR) so that the decoder can adjust its decoding strategy according to the channel state.
 \begin{remark}
In the above we used the fact that if for a degraded Gaussian channel $W_{\mbf g'}$, a code $\Gamma'$ satisfies $e(\Gamma',\mbf g') \leq \theta$, then there exists a code $\Gamma$ for $W_{\mbf g}$, the channel from which it was degraded with the same encoder as $\Gamma'$ but with possibly a different decoder such that  $e(\Gamma,\mbf g)\leq \theta$ . The analogous statement is provided in Problem 6.16 in \cite{codingtheorems} for discrete memoryless channels (DMCs). This is a special case of the statement provided in \cite{NoteShannon} for DMCs.
 \end{remark}
 
So far, we have proved the existence of a
code sequence $(\Gamma_{n})_{n=1}^\infty$ and a block length $n_0$ such that
    \[
        \frac{\log\lvert\Gamma_{n}\rvert}{n}\geq R_{\eta,\sup}-\epsilon-\delta
    \]
    and such that 
 \begin{align}
    &\mbf g \in \hat{\mathcal{G}}\cup \mc G_a^{c} \implies e(\Gamma_{n},\mbf g) \leq \theta     \nonumber 
    \end{align}
    for $n\geq n_0.$
    Now, we have for $n\geq n_0$
    \begin{align}
       \mbb P \left[e(\Gamma_{n},\mbf G) \leq \theta\right]& \geq \mbb P \left[ \mbf G \in \hat{\mathcal{G}}\cup \mc G_a^{c}  \right] \nonumber\\
       &=\mbb P\left[ \mbf G \in \hat{\mathcal{G}}\right]+\mbb P \left[ \mbf G \in \mc G_a^{c}   \right] \nonumber \\
       &\geq \mbb P \left[ \mbf G \notin \mc E(\hat{\mbf Q},R_{\eta,\sup}-\frac{\epsilon}{2})  \right] \nonumber \\
       &\geq 1-\eta. \nonumber
    \end{align}
    This completes the direct proof of Theorem \ref{cetathmMIMO}.
\subsection{Converse Proof}

We will prove now the converse, i.e, we will show that
\begin{align}
    C_{\eta}(P,N_{T}\times N_{R})\leq R_{\eta,\sup}.
    \label{conversequation}
\end{align}
\begin{remark}
It is here worth mentioning that the weak converse for compound channels does not guarantee that the error probability cannot be made arbitrarily small for all possible states when the target rate exceeds the compound capacity. Therefore, we cannot use the weak converse theorem of compound channels to prove the converse of Theorem \ref{cetathmMIMO}. We will proceed differently. 
\end{remark}
   Suppose \eqref{conversequation} were not true. Then there exists an $\epsilon>0$ such that $R_{\eta,\sup}+\epsilon$ is an achievable $\eta$-outage rate for $W_{\mbf G}$. The goal is to find a contradiction. Choose $\theta>0$ so small that 
	\[
		\frac{(1-\theta)\epsilon}{2}-\theta R_{\eta,\sup}>\frac{\epsilon}{4}.
	\]
	Due to the achievability of $R_{\eta,\sup}+\epsilon,$ there exists a code sequence $(\Gamma_n)_{n=1}^{\infty}$ such that 
	\begin{align}
		\frac{\log\lvert\Gamma_n\rvert}{n}&\geq R_{\eta,\sup}+\frac{\epsilon}{2} 
		\label{ratecondition}
		\end{align}
		and
		\begin{align}
		\mbb P[e(\Gamma_n,\mbf G)>\theta]&\leq\eta
		\label{erro2}
	\end{align}
	for sufficiently large $n.$ Choose an $n$ for which the above holds and which satisfies
	\begin{align}
		\frac{1}{n}\leq\frac{\epsilon}{8}.
		\label{fixedblocklength}
	\end{align}

We fix the covariance matrices $\mbf Q_1,\dots, \mbf Q_n$ of the random channel inputs $\bs{T}_1,\hdots,\bs{T}_n,$ respectively and let $\mbf  Q^{\star}=\frac{1}{n}\sum_{i=1}^{n}\mbf Q_{i}.$ 
	Furthermore, we let 
	\[
		\epsilon'=\frac{\epsilon}{8}.
	\]
We consider the following two sets:
\begin{align}
    \mathcal{E}(\mbf Q^\star,R_{\eta,\sup}+\epsilon')=\{\mbf g \in \mbb C^{N_R\times N_T}: \log\det(\mbf I_{N_{R}}+\frac{1}{\sigma^2}\mbf g \mbf Q^{\star} \mbf g^{H})< R_{\eta,\sup}+\epsilon'    \}
 \nonumber
\end{align}
and
\begin{align}
    \mathcal{G}_\theta
    &=\{\mbf g \in \mathcal{E}(\mbf Q^\star,R_{\eta,\sup}+\epsilon') : e(\Gamma_n,\mbf g)\leq \theta     \}.
\nonumber
\end{align}
The goal is to prove that the set $\mc G_{\theta}$ is non-empty. For this purpose, we will show that $\mbb P \left[\mbf G \in  \mathcal{E}(\mbf Q^\star,R_{\eta,\sup}+\epsilon')\right] > \eta$ in what follows:
\begin{lemma}
\label{probbiggereta}
\begin{align}
    \mbb P\left[\mbf G \in \mathcal{E}(\mbf Q^\star,R_{\eta,\sup}+\epsilon')\right]>\eta. \nonumber
    \end{align}
\end{lemma}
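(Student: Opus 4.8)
The claim will follow almost immediately once we combine two observations: first, that the average covariance matrix $\mbf Q^\star=\frac1n\sum_{i=1}^n\mbf Q_i$ is an admissible input covariance, i.e. $\mbf Q^\star\in\mc Q_P$; and second, that $R_{\eta,\sup}+\epsilon'$ strictly exceeds the supremum $R_{\eta,\sup}$ and hence cannot belong to the set over which that supremum is taken. The plan is therefore to verify these two facts and then chain them together with the trivial bound $\mbb P[\mbf G\in\mc E(\mbf Q^\star,R)]\geq\inf_{\mbf Q\in\mc Q_P}\mbb P[\mbf G\in\mc E(\mbf Q,R)]$.

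First I would check $\mbf Q^\star\in\mc Q_P$. As an average of covariance matrices, $\mbf Q^\star$ is positive semi-definite and Hermitian, so only the trace constraint needs care. Here the point is that the power constraint \eqref{energyconstraintMIMOCorrelated} holds almost surely; taking expectations gives $\frac1n\sum_{i=1}^n\mbb E[\bs{T}_i^H\bs{T}_i]\leq P$, and since $\mbb E[\bs{T}_i^H\bs{T}_i]=\mathrm{tr}\,\mbb E[\bs{T}_i\bs{T}_i^H]=\mathrm{tr}(\mbf Q_i)+\lVert\mbb E[\bs{T}_i]\rVert^2\geq\mathrm{tr}(\mbf Q_i)$, we obtain $\mathrm{tr}(\mbf Q^\star)=\frac1n\sum_{i=1}^n\mathrm{tr}(\mbf Q_i)\leq P$. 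Thus $\mbf Q^\star\in\mc Q_P$.

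Next I would invoke the definition of $R_{\eta,\sup}$. Since $\epsilon'=\epsilon/8>0$, we have $R_{\eta,\sup}+\epsilon'>R_{\eta,\sup}$, and $R_{\eta,\sup}$ is by definition the supremum of $\{R:\inf_{\mbf Q\in\mc Q_P}\mbb P[\mbf G\in\mc E(\mbf Q,R)]\leq\eta\}$. Hence $R_{\eta,\sup}+\epsilon'$ lies outside this set, which means $\inf_{\mbf Q\in\mc Q_P}\mbb P[\mbf G\in\mc E(\mbf Q,R_{\eta,\sup}+\epsilon')]>\eta$. Combining with $\mbf Q^\star\in\mc Q_P$ yields
\[
    \mbb P\left[\mbf G\in\mc E(\mbf Q^\star,R_{\eta,\sup}+\epsilon')\right]\geq\inf_{\mbf Q\in\mc Q_P}\mbb P\left[\mbf G\in\mc E(\mbf Q,R_{\eta,\sup}+\epsilon')\right]>\eta,
\]
which is the assertion. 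There is no real obstacle here; the only subtlety worth stating explicitly is the passage from the almost-sure power constraint to the trace bound on $\mbf Q^\star$, and the care needed to note that a strict inequality survives because $R_{\eta,\sup}+\epsilon'$ is strictly to the right of the supremum rather than equal to it.
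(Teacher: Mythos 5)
Your proof is correct. You take a mildly different route from the paper: where the paper first fixes the covariance matrix $\mbf Q^\star$ and introduces the per-covariance quantity $R(\mbf Q^\star)=\sup\{R:\mbb P[\mbf G\in\mc E(\mbf Q^\star,R)]\leq\eta\}$ (invoking its Lemma~\ref{supQ} to get $R(\mbf Q^\star)\leq R_{\eta,\sup}$, then arguing $\mbb P[\mbf G\in\mc E(\mbf Q^\star,R(\mbf Q^\star)+\epsilon')]>\eta$ by the supremum definition), you reverse the order: you first observe directly that $R_{\eta,\sup}+\epsilon'$ is strictly above the supremum $R_{\eta,\sup}$ and hence $\inf_{\mbf Q\in\mc Q_P}\mbb P[\mbf G\in\mc E(\mbf Q,R_{\eta,\sup}+\epsilon')]>\eta$, and only then specialize to $\mbf Q^\star$ via the trivial $\mbb P[\cdot]\geq\inf_{\mbf Q}\mbb P[\cdot]$. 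Your argument is a bit more economical since it sidesteps the auxiliary quantity $R(\mbf Q^\star)$ and the corresponding comparison lemma altogether; the paper's version buys nothing extra here for the present lemma, though it reuses $R(\mbf Q^\star)$ and Lemma~\ref{supQ} elsewhere in the converse, which is presumably why the authors factored the argument that way. The trace bound $\mathrm{tr}(\mbf Q^\star)\leq P$ is proved the same way in both cases (taking expectations of the almost-sure power constraint and using $\mathrm{tr}(\mbf Q_i)\leq\mathrm{tr}(\mbb E[\bs{T}_i\bs{T}_i^H])$), so that part is identical to the paper's Lemma~\ref{traceQ}.
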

\begin{proof}
By Lemma \ref{traceQ}, we know that $\text{tr}(\mbf Q^\star)\leq P$ and therefore $\mbf Q^\star \in \mathcal{Q}_P.$ By Lemma \ref{supQ}, it follows that
\begin{align}
    R(\mbf Q^\star)&= \sup \Big\{R: \mbb P\left[\mbf G \in \mathcal{E}(\mbf Q^\star,R)\right]\leq \eta \Big\} \nonumber \\
    &\leq R_{\eta,\sup}. \nonumber
\end{align}
This yields
\begin{align}
    \mbb P \left[ \log\det(\mbf I_{N_{R}}+\frac{1}{\sigma^2}\mbf G \mbf Q^\star \mbf G^{H})< R_{\eta,\sup}+\epsilon' \right] 
    &\geq \mbb P \left[ \log\det(\mbf I_{N_{R}}+\frac{1}{\sigma^2}\mbf G \mbf Q^\star \mbf G^{H})< R(\mbf Q^\star)+\epsilon' \right] \nonumber \\
    &>\eta. \nonumber
\end{align}
\end{proof}
\begin{lemma}
\begin{align}
    \mathrm{tr}(\mbf Q^\star)\leq P. \nonumber
\end{align}
\label{traceQ}
\end{lemma}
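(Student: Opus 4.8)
The plan is to unwind the definition $\mbf Q^\star=\frac1n\sum_{i=1}^n\mbf Q_i$ and reduce the claim to the almost-sure power constraint obeyed by every codeword. In the converse we have fixed a transmission code $\Gamma_n$ together with a uniformly distributed message $W$ on $\{1,\ldots,\lvert\Gamma_n\rvert\}$, so that the random input block is $(\bs T_1,\ldots,\bs T_n)=\mbf t_W=(\bs t_{W,1},\ldots,\bs t_{W,n})$. By Definition \ref{defcode}, every codeword satisfies $\frac1n\sum_{i=1}^n\bs t_{\ell,i}^H\bs t_{\ell,i}\le P$, hence $\frac1n\sum_{i=1}^n\bs T_i^H\bs T_i\le P$ holds surely, and in particular in expectation.

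First I would record, for each $i$, that the trace of the covariance matrix is dominated by the second moment: since $\mbf Q_i=\text{cov}(\bs T_i)=\mbb E[\bs T_i\bs T_i^H]-\mbb E[\bs T_i]\,\mbb E[\bs T_i]^H$ and the rank-one matrix $\mbb E[\bs T_i]\,\mbb E[\bs T_i]^H$ is positive semidefinite, we get $\mathrm{tr}(\mbf Q_i)\le\mathrm{tr}\big(\mbb E[\bs T_i\bs T_i^H]\big)=\mbb E[\bs T_i^H\bs T_i]$, the last equality being cyclicity of the trace together with linearity of expectation.

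Then I would chain these facts:
\begin{align}
\mathrm{tr}(\mbf Q^\star)
=\mathrm{tr}\Big(\frac1n\sum_{i=1}^n\mbf Q_i\Big)
=\frac1n\sum_{i=1}^n\mathrm{tr}(\mbf Q_i)
\le\frac1n\sum_{i=1}^n\mbb E[\bs T_i^H\bs T_i]
=\mbb E\Big[\frac1n\sum_{i=1}^n\bs T_i^H\bs T_i\Big]
\le P, \nonumber
\end{align}
where the final step is the almost-sure power constraint recalled above. This proves the lemma; I would also note in passing that $\mbf Q^\star$ is Hermitian positive semidefinite as an average of such matrices, so in fact $\mbf Q^\star\in\mathcal{Q}_P$, which is exactly what the subsequent converse argument (via Lemma \ref{probbiggereta} and Lemma \ref{supQ}) invokes.

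There is no genuine obstacle here: the only point needing a little care is the bookkeeping linking the ``covariance matrix'' $\mbf Q_i$ of $\bs T_i$ to the power constraint, i.e.\ the inequality $\mathrm{tr}(\text{cov}(\bs T_i))\le\mbb E[\lVert\bs T_i\rVert^2]$, which says that subtracting the mean can only decrease the trace. Everything else is linearity of trace and expectation plus the codeword power constraint from Definition \ref{defcode}.
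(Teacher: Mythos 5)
Your proof is correct and follows essentially the same route as the paper's: expand $\mathrm{tr}(\mbf Q^\star)$ by linearity, bound $\mathrm{tr}(\mbf Q_i)\le\mbb E[\bs T_i^H\bs T_i]$, and invoke the almost-sure power constraint. The only (welcome) difference is that you explicitly justify the covariance step via $\mbf Q_i=\mbb E[\bs T_i\bs T_i^H]-\mbb E[\bs T_i]\mbb E[\bs T_i]^H$ and positive semidefiniteness of the subtracted rank-one term, whereas the paper states that inequality without spelling out the reason.
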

\begin{proof}
From \eqref{energyconstraintMIMOCorrelated}, it holds that
\begin{align}
    \frac{1}{n}\sum_{i=1}^{n} \bs{T}_{i}^H\bs{T}_{i}\leq P, \quad \text{almost surely}. \nonumber
\end{align}
This implies that
\begin{align}
    \mbb E\left[\frac{1}{n}\sum_{i=1}^{n} \bs{T}_{i}^H\bs{T}_{i}\right] &=\frac{1}{n}\sum_{i=1}^{n}\mbb E\left[ \bs{T}_{i}^H\bs{T}_{i}\right] \nonumber \\
    &\leq P.\nonumber
\end{align}
This yields
\begin{align}
    \text{tr}\left[\mbf Q^\star\right] &= \text{tr} \left[\frac{1}{n}\sum_{i=1}^{n} \mbf Q_{i}\right] \nonumber \\
    &=\frac{1}{n}\sum_{i=1}^{n}\text{tr}\left[ \mbf Q_i\right] \nonumber\\
    &\leq \frac{1}{n}\sum_{i=1}^{n} \text{tr}\left(\mbb E\left[ \bs{T}_{i}\bs{T}_{i}^H\right]\right) \nonumber \\
    &=\frac{1}{n}\sum_{i=1}^{n} \mbb E\left[ \text{tr}\left(\bs{T}_{i}\bs{T}_{i}^H\right)\right] \nonumber \\
    &=\frac{1}{n}\sum_{i=1}^{n} \mbb E\left[ \text{tr}\left(\bs{T}_{i}^H\bs{T}_{i}\right)\right] \nonumber \\
    &=\frac{1}{n}\sum_{i=1}^{n} \mbb E\left[ \bs{T}_{i}^H\bs{T}_{i}\right] \nonumber \\
    &\leq P, \nonumber
\end{align}
where we used $r=\text{tr}(r)$ for scalar $r$, $\text{tr}\left(\mbf A \mbf B\right)= \text{tr}\left(\mbf B \mbf A\right)$ and the linearity of the expectation and of the trace operators.
\end{proof}
\begin{lemma}
\label{supQ}
For any $\mbf Q \in \mathcal{Q}_{P}$, it holds that
\begin{align}
    \sup \Big\{R: \mbb P\left[\mbf G \in \mathcal{E}(\mbf Q,R)\right]\leq \eta \Big\} 
   &\leq \sup \Big\{R: \underset{\mbf Q \in \mathcal{Q}_{P}}{\inf}\mbb P\left[\mbf G \in \mathcal{E}(\mbf Q,R)\right] \leq \eta\Big\}.\nonumber
\end{align}
\end{lemma}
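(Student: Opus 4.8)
The plan is to observe that this is an immediate consequence of two elementary facts: the infimum over $\mathcal{Q}_P$ is dominated by any single evaluation, and the supremum of a set is monotone under inclusion. Concretely, I would first fix an arbitrary $\mbf Q \in \mathcal{Q}_P$ and note that, for every real number $R$, since $\mbf Q$ itself belongs to the set over which the infimum is taken,
\[
\underset{\mbf Q' \in \mathcal{Q}_P}{\inf}\,\mbb P\left[\mbf G \in \mathcal{E}(\mbf Q',R)\right] \leq \mbb P\left[\mbf G \in \mathcal{E}(\mbf Q,R)\right].
\]

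From this pointwise inequality, I would deduce the set inclusion
\[
\Big\{R : \mbb P\left[\mbf G \in \mathcal{E}(\mbf Q,R)\right] \leq \eta\Big\} \subseteq \Big\{R : \underset{\mbf Q' \in \mathcal{Q}_P}{\inf}\,\mbb P\left[\mbf G \in \mathcal{E}(\mbf Q',R)\right] \leq \eta\Big\},
\]
because any $R$ satisfying the left-hand constraint automatically satisfies the right-hand one. Taking the supremum over both sides and using that the supremum of a subset of $\mathbb{R}$ is at most the supremum of any superset yields exactly the claimed inequality.

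I do not expect any genuine obstacle here; the only minor points to keep in mind are that one should not confuse the bound variable $\mbf Q'$ in the infimum with the fixed $\mbf Q$ on the left-hand side (hence I would rename the dummy variable), and that both suprema are over nonempty sets (or interpret $\sup \emptyset = -\infty$), so the monotonicity step is unconditionally valid. No continuity or measure-theoretic input is needed for this lemma — those enter only in Lemma \ref{sigmacontinuity} and Lemma \ref{probbiggereta}, which are used alongside it.
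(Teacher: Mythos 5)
Your proof is correct and takes essentially the same route as the paper: both establish the set inclusion $\{R: \mathbb{P}[\mathbf{G}\in\mathcal{E}(\mathbf{Q},R)]\leq\eta\}\subseteq\{R:\inf_{\mathbf{Q}'}\mathbb{P}[\mathbf{G}\in\mathcal{E}(\mathbf{Q}',R)]\leq\eta\}$ (from the infimum being dominated by evaluation at $\mathbf{Q}$) and then apply monotonicity of the supremum. Your additional remarks about renaming the dummy variable and about $\sup\varnothing$ are sound hygiene but not substantive differences.
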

\begin{proof}
For any $\mbf Q \in \mathcal{Q}_{P}$, we have
\begin{align}
    \Big\{R: \ \mbb P\left[\mbf G \in \mathcal{E}(\mbf Q,R)\right]\leq \eta \Big\} 
   &\subseteq \Big\{R:\ \underset{\mbf Q \in \mathcal{Q}_{P}}{\inf}\mbb P\left[\mbf G \in \mathcal{E}(\mbf Q,R)\right] \leq \eta\Big\} \nonumber
\end{align}
As a result:
\begin{align}
    \sup \Big\{R: \mbb P\left[\mbf G \in \mathcal{E}(\mbf Q,R)\right]\leq \eta \Big\}\leq \sup \Big\{R: \underset{\mbf Q \in \mathcal{Q}_{P}}{\inf}\mbb P\left[\mbf G \in \mathcal{E}(\mbf Q,R) \right] \leq \eta\Big\}.\nonumber
\end{align}
\end{proof}
Now, we can prove that the set $\mc G_\theta$ is non-empty. 
\begin{lemma}
 $\mathcal{G}_{\theta}$ is a non-empty set. 
\end{lemma}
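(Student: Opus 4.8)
The plan is to prove the stronger statement that $\mbb P[\mbf G \in \mathcal{G}_\theta] > 0$, from which the non-emptiness of $\mathcal{G}_\theta$ follows at once. First I would observe that, by its very definition,
\begin{align}
\mathcal{G}_\theta = \mathcal{E}(\mbf Q^\star,R_{\eta,\sup}+\epsilon') \cap \{\mbf g \in \mbb C^{N_R \times N_T}: e(\Gamma_n,\mbf g)\leq\theta\}, \nonumber
\end{align}
so that the event $\{\mbf G \in \mathcal{G}_\theta\}$ is the intersection of the event $A=\{\mbf G \in \mathcal{E}(\mbf Q^\star,R_{\eta,\sup}+\epsilon')\}$ and the event $B=\{e(\Gamma_n,\mbf G)\leq\theta\}$.

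Next I would invoke the two facts already at our disposal. On the one hand, Lemma \ref{probbiggereta} gives the strict bound $\mbb P[A]=\mbb P[\mbf G \in \mathcal{E}(\mbf Q^\star,R_{\eta,\sup}+\epsilon')] > \eta$. On the other hand, the code sequence $(\Gamma_n)_{n=1}^\infty$ was chosen to satisfy \eqref{erro2}, i.e.\ $\mbb P[e(\Gamma_n,\mbf G)>\theta]\leq\eta$, which is the same as $\mbb P[B]\geq 1-\eta$. Applying the elementary inequality $\mbb P[A\cap B]\geq\mbb P[A]+\mbb P[B]-1$ then yields
\begin{align}
\mbb P[\mbf G \in \mathcal{G}_\theta] \geq \mbb P[A]+\mbb P[B]-1 > \eta + (1-\eta) - 1 = 0, \nonumber
\end{align}
and a set carrying positive probability is necessarily non-empty.

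There is essentially no hard step here: the whole argument is a one-line union bound. The only point that warrants a moment's attention is that the inequality in Lemma \ref{probbiggereta} is \emph{strict}; this is precisely what turns the final estimate into $\mbb P[\mbf G \in \mathcal{G}_\theta]>0$ rather than the useless $\geq 0$, and hence what actually delivers a channel state $\mbf g$ lying both in the outage set of $\mbf Q^\star$ at rate $R_{\eta,\sup}+\epsilon'$ and satisfying $e(\Gamma_n,\mbf g)\leq\theta$. Implicitly one also uses that $\mbf g \mapsto e(\Gamma_n,\mbf g)$ is measurable so that the probabilities above are well defined, which holds since the code's decoding regions are chosen measurably in the channel state.
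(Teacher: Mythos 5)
Your proof is correct and relies on exactly the same two ingredients as the paper's: the strict bound $\mbb P\left[\mbf G \in \mathcal{E}(\mbf Q^\star,R_{\eta,\sup}+\epsilon')\right]>\eta$ from Lemma \ref{probbiggereta}, and the bound $\mbb P\left[e(\Gamma_n,\mbf G)\leq\theta\right]\geq 1-\eta$ coming from \eqref{erro2}. The only difference is cosmetic: the paper reaches the conclusion via a somewhat roundabout conditional-probability decomposition (splitting on the event $\{e(\Gamma_n,\mbf G)\leq\theta\}$ and then recombining), whereas you apply the Bonferroni inequality $\mbb P[A\cap B]\geq\mbb P[A]+\mbb P[B]-1$ directly. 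Your route is shorter and cleaner, and your remark that the strictness of Lemma \ref{probbiggereta} is what makes the final bound strictly positive correctly pinpoints the one place the argument could have failed.
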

\begin{proof}
By Lemma \ref{probbiggereta}, we have
\begin{align}
    \eta &<\mbb P \left[ \mbf G \in \mathcal{E}(\mbf Q^\star,R_{\eta,\sup}+\epsilon')\right] \nonumber \\
    &=\mbb P \left[ \mbf G \in\mathcal{E}(\mbf Q^\star,R_{\eta,\sup}+\epsilon')| e(\Gamma_n,\mbf G)\leq \theta \right] \mbb P \left[e(\Gamma_n,\mbf G)\leq \theta\right] +\mbb P \left[ \mbf G \in \mathcal{E}(\mbf Q^\star,R_{\eta,\sup}+\epsilon')| e(\Gamma_n,\mbf G)>\theta \right] \mbb P \left[e(\Gamma_n,\mbf G)> \theta\right]  \nonumber \\
    &\leq \mbb P \left[ \mbf G \in \mathcal{E}(\mbf Q^\star,R_{\eta,\sup}+\epsilon')| e(\Gamma_n,\mbf G)\leq \theta \right]+ \mbb P \left[e(\Gamma_n,\mbf G)> \theta\right] \nonumber \\
    &\leq \mbb P \left[ \mbf G \in \mathcal{E}(\mbf Q^\star,R_{\eta,\sup}+\epsilon')| e(\Gamma_n,\mbf G)\leq \theta \right]+\eta, \nonumber
\end{align}
where we used  \eqref{erro2} in the last step.
This implies that
\begin{align}
    \mbb P \left[ \mbf G \in \mathcal{E}(\mbf Q^\star,R_{\eta,\sup}+\epsilon')| e(\Gamma_n,\mbf G)\leq \theta \nonumber \right]>0. \nonumber
\end{align}
Furthermore, since $\eta <1$, it follows  that
\begin{align}
    \mbb P\left[ e(\Gamma_n,\mbf G)\leq \theta\right]\geq 1-\eta >0. \nonumber
\end{align}
As a result, we have
\begin{align}
       \mbb P \left[ \mbf G \in \mathcal{E}(\mbf Q^\star,R_{\eta,\sup}+\epsilon'), e(\Gamma_n,\mbf G)\leq \theta \nonumber \right]>0, \nonumber
\end{align}
which means that
\begin{align}
\mbb P \left[\mbf G \in \mathcal{G}_\theta\right]>0 \nonumber
\end{align}
and therefore $\mc G_\theta$ is a non-empty set.
\end{proof}
Pick a $\mbf g \in \mathcal{G}_{\theta}$ and consider the channel
\begin{align}
\bs{z}_{i}=\mbf g\bs{t}_{i}+\bs{\xi}_{i} \quad i=1, \hdots,n.  \nonumber
\end{align}

We model the random input and output sequence by $\bs{T}^n$ and $\bs{Z}^n,$ respectively. Furthermore, the random message is modeled by $W$ and the random decoded message is modeled by $\hat{W}.$ The set of messages is denoted by $\mathcal{W}.$ We use $\Gamma_n$ as a transmission-code for this channel with the fixed block-length $n$ satisfying \eqref{fixedblocklength}.
Since $\mbf g \in \mc G_\theta,$ it follows that
\begin{align}
    \mbb P\left[ W\neq \hat{W}\right]= e(\Gamma_n,\mbf g)\leq  \theta. \nonumber
\end{align}
We have
\begin{align}
    H(W)&= \log \lvert \mathcal{W} \rvert \nonumber\\
    &=\log \lvert \Gamma_n \rvert \nonumber \\
    &\geq n\left(R_{\eta,\sup}+\frac{\epsilon}{2}\right), 
    \label{entropy1}
\end{align}
where we used \eqref{ratecondition} in the last step.
By applying Fano's inequality, we obtain
\begin{align}
    H(W|\hat{W})&\leq 1+ \mbb P\left[W\neq \hat{W}\right] \log \lvert \mathcal{W} \rvert \nonumber \\
    &\leq 1+\theta \log \lvert \mathcal{W} \rvert \nonumber \\
    &=1+\theta H(W). \nonumber
\end{align}

Now, on the one hand, it holds that
\begin{align}
    I(W;\hat{W})&=H(W)-H(W|\hat{W}) \nonumber \\
    &\geq (1-\theta)H(W)-1, \nonumber 
\end{align}
which yields
\begin{align}
   H(W)\leq \frac{1+I(W;\hat{W})}{1-\theta}. \nonumber
\end{align}
On the other hand, we have
\begin{align}
    \frac{1}{n} I(W;\hat{W})&\overset{(a)}{\leq} \frac{1}{n} I(\bs{T}^n;\bs{Z}^n) \nonumber \\
    &\overset{(b)}{=}\frac{1}{n}\sum_{i=1}^{n} I(\bs{Z}_i;\bs{T}^n|\bs{Z}^{i-1}) \nonumber \\
    &=\frac{1}{n} \sum_{i=1}^{n} h(\bs{Z}_i|\bs{Z}^{i-1})-h(\bs{Z}_i|\bs{T}^{n},\bs{Z}^{i-1}) \nonumber \\
    &\overset{(c)}{=} \frac{1}{n} \sum_{i=1}^{n} h(\bs{Z}_i|\bs{Z}^{i-1})-h(\bs{Z}_i|\bs{T}_i) \nonumber \\
    &\overset{(d)}{\leq} \frac{1}{n} \sum_{i=1}^{n} h(\bs{Z}_i)-h(\bs{Z}_i|\bs{T}_i) \nonumber \\
    &= \frac{1}{n} \sum_{i=1}^{n} I(\bs{T}_i,\bs{Z}_i)
    \nonumber \\
    &\leq  \sum_{i=1}^{n} \frac{1}{n} \log\det\left(\mbf I_{N_{R}}+\frac{1}{\sigma^2}\mbf g \mbf Q_{i} \mbf g^{H}\right) \nonumber \\
    &\overset{(e)}{\leq}\log\det\left(\frac{1}{n} \sum_{i=1}^{n}\left[\mbf I_{N_{R}}+\frac{1}{\sigma^2}\mbf g \mbf Q_{i} \mbf g^{H} \right]     \right) \nonumber \\
    &=\log\det\left(\mbf I_{N_{R}}+\frac{1}{\sigma^2}\mbf g \left(\frac{1}{n}\sum_{i=1}^{n}\mbf Q_{i}\right) \mbf g^{H}  \right) \nonumber \\
    &=\log\det\left(\mbf I_{N_{R}}+\frac{1}{\sigma^2}\mbf g \mbf Q^{\star} \mbf g^{H}  \right),
  \nonumber
\end{align} 
where $(a)$ follows from the Data Processing Inequality because $W\circlearrow{\bs{T}^n}\circlearrow{\bs{Z}^n}\circlearrow{\hat{W}}$ forms a Markov chain, $(b)$ follows from the chain rule of mutual information, (c) follows because
$\bs{T}_{1},\dots, \bs{T}_{i-1}, \bs{T}_{i+1},\dots, \bs{T}_{n},\bs{Z}^{i-1} \circlearrow{\bs{T}_{i}}\circlearrow{\bs{Z}_{i}}$ forms a Markov chain, $(d)$ follows because conditioning does not increase entropy and $(e)$ follows because $\log\circ\det$ is concave on the set of Hermitian positive semi-definite matrices, where $\mbf I_{N_{R}}+\frac{1}{\sigma^2}\mbf g \mbf Q_{i}\mbf g^H$ is Hermitian positive semi-definite for $i=1, \hdots,n.$

This yields
\begin{align}
H(W)\leq \frac{1+n\log\det(\mbf I_{N_{R}}+\frac{1}{\sigma^2}\mbf g \mbf Q^{\star} \mbf g^{H})}{1-\theta}. 
\label{entropy2}
\end{align}
The inequalities \eqref{entropy1} and \eqref{entropy2} imply that
\begin{align}
    n\left(R_{\eta,\sup}+\frac{\epsilon}{2}\right)&\leq \frac{1+n\log\det(\mbf I_{N_{R}}+\frac{1}{\sigma^2}\mbf g \mbf Q^{\star} \mbf g^{H})}{1-\theta} \nonumber \\
    &< \frac{1+n(R_{\eta,\sup}+\epsilon')}{1-\theta},
    \label{equivalent}
\end{align}
where we used that $\mbf g \in \mathcal{G}_\theta.$
The inequality \eqref{equivalent} is equivalent to
\begin{align}
    -\theta R_{\eta,\sup}+(1-\theta)\frac{\epsilon}{2}-\frac{1}{n}< \epsilon'. \nonumber
\end{align}
However, by the choice of $\theta$ and $n$, the left-hand side of this inequality is strictly larger than $\frac{\epsilon}{8},$ whereas $\epsilon'=\frac{\epsilon}{8}.$ This is a contradiction. 
 Thus \eqref{conversequation}  must be true. This proves the converse of Theorem \ref{cetathmMIMO}.

\section{Proof of Theorem \ref{ccretathmMIMO}}
\label{proofoutagecrcapacity}
\subsection{Direct Proof}
We extend the coding scheme provided in \cite{part2} to MIMO slow fading channels. By continuity, it suffices to show that 
$$ \underset{ \substack{U \\{\substack{U \circlearrow{X} \circlearrow{Y}\\ I(U;X)-I(U;Y) \leq C'}}}}{\max} I(U;X)  $$ is an achievable $\eta$-outage CR rate for every $C'<C_{\eta}(P,N_{T}\times N_{R}).$
Let $U$ be a random variable satisfying $U \circlearrow{X} \circlearrow{Y}$ and $I(U;X)-I(U;Y) \leq C'$. Let $\alpha,\delta>0$ and $0\leq \eta<1.$ We are going to show that $H=I(U;X)$ is an achievable $\eta$-outage CR rate. Without loss of generality, assume that the distribution of $U$ is a possible type for block-length $n$.
For some $\mu>0,$ we let
{{\begin{align}
N_{1}&=\lfloor 2^{n[I(U;X)-I(U;Y)+3\mu]} \rfloor \nonumber\\
N_{2}&=\lfloor 2^{n[I(U;Y)-2\mu]}\rfloor. \nonumber
\end{align}}}For each pair $(i,j)$ with $1\leq i \leq N_{1}$ and $1\leq j \leq N_{2}$, we define a random sequence $\bs{U}_{i,j}\in\mathcal{U}^n$ of type $P_{U}$. Each realization $\bs{u}_{i,j}$ of $\bs{U}_{i,j}$ is known to both terminals.  
This means that  $N_{1}$ codebooks $C_{i}, 1\leq i \leq N_{1}$, are known to both terminals, where each codebook contains $N_{2}$ sequences, $ \bs{u}_{i,j}, \ j=1,\hdots, N_2$.

It holds for every $X$-typical $\bs{x}$ that $$\mbb P[\exists (i,j) \ \text{s.t} \ \bs{U}_{ij} \in \mathcal{T}_{U|X}^{n}\left(\bs{x}\right)|X^{n}=\bs{x}]\geq 1-2^{-2^{nc'}},$$ for a suitable $c'>0,$ as in the proof of  Theorem 4.1 in \cite{part2}. 
 For $\Phi(\bs{x})$, we choose a sequence $\bs{u}_{ij}$ jointly typical with $\bs{x}$ (either one if there are several).   Let $f_1(\bs{x})=i$ if $\Phi(\bs{x}) \in C_{i}$.  If no such $\bs{u}_{i,j}$ exists, then $f_1(\bs{x})=N_1+1$ and $\Phi(\bs{x})$ is set to a constant sequence $\bs{u}_0$ different from all the $\bs{u}_{ij}s$ and known to both terminals.
  Since $ C'<C_{\eta}(P,N_T\times N_R)$, we choose $\mu$ to be sufficiently small such that
      \begin{align}
     \frac{\log \lVert f_1 \rVert}{n}&=\frac{\log(N_1+1)}{n} \nonumber \\
     &\leq C_{\eta}(P,N_{T}\times N_{R})-\mu',
     \label{inequalitylogfSISO}
      \end{align}
for some $\mu'>0,$
 where $\lVert f_1 \rVert$ refers to the cardinality of the set of messages $\{i^\star=f_1(\bs{x})\}
 \footnote{\text{This is the same notation used in} \cite{codingtheorems}.}$.
 The message $i^\star=f_1(\bs{x})$, with $i^\star\in\{1,\hdots,N_1+1\}$, is encoded to a sequence $\mbf t$ using a code sequence $(\Gamma^\star_n)_{n=1}^{\infty}$ with rate $\frac{\log \lvert \Gamma^\star_n \rvert}{n}=\frac{\log \lVert f_1 \rVert}{n}$ satisfying \eqref{inequalitylogfSISO}
 and with error probability $e(\Gamma^\star_n,\mbf G)$ satisfying for sufficiently large $n$
 \begin{align}
     \mbb P\left[e(\Gamma^\star_n,\mbf G) \leq \theta \right] \geq 1-\eta,
     \label{erroroutage}
 \end{align}
where $\theta$ is  a positive constant satisfying $\theta\leq \frac{\alpha}{2}.$ 
  \color{black}
  From the definition of the $\eta$-outage capacity, we know that such a code sequence exists. The sequence $\mbf t$ is sent over the MIMO slow fading channel. Let $\mbf z$ be the channel output sequence. Terminal $B$ decodes the message $\tilde{i}^\star$ from the knowledge of $\mbf z.$
Let $\Psi(\bs{y},\mbf z)=\bs{u}_{\tilde{i}^\star,j}$ if $\bs{u}_{\tilde{i}^\star,j}$ and $\bs{y}$ are jointly typical . If there is no such $\bs{u}_{\tilde{i}^\star,j}$ or there are several, we set $\Psi(\bs{y},\mbf z)=\bs{u}_0$ (since $K$ and $L$ must have the same alphabet).
Now, we are going to show that the requirements in $\eqref{errorMIMOcorrelated},$  $\eqref{cardinalityMIMOcorrelated}$ and $\eqref{rateMIMOcorrelated}$ are satisfied.
Clearly, (\ref{cardinalityMIMOcorrelated}) is satisfied  for $c=H(X)+\mu+1$  because
{{\begin{align}
|\mathcal{K}|&=N_1 N_2+1 \nonumber \\
             &\leq  2^{n\left[I(U;X)+\mu\right]}+1 \nonumber \\
             &\leq2^{n\left[H(X)+\mu+1\right]}.\nonumber
\end{align}}}We define next for any $(i,j)\in \{1,\hdots,n\}\times\{1,\hdots,n\}$ and for any fixed realization  $\bs{u}_{i,j}$ of $\bs{U}_{i,j}$ the set
$$\mc S=\{ \bs{x}\in\mathcal{X}^{n} \ \text{s.t.} \ (\bs{x},\bs{u}_{i,j}) \ \text{jointly} \ \text{typical}\}.$$
Then, it holds that 
\begin{align}
\mbb P[K=\bs{u}_{i,j}] &=\sum_{\bs{x}\in\mc S}\mbb P[K=\bs{u}_{i,j}|X^n=\bs{x}]P_{X}^n(\bs{x}) +\sum_{\bs{x}\in\mc S^c}\mbb P[K=\bs{u}_{i,j}|X^n=\bs{x}]P_{X}^n(\bs{x}) \nonumber \\
&\overset{(\RM{1})}{=}\sum_{\bs{x}\in\mc S}\mbb P[K=\bs{u}_{i,j}|X^n=\bs{x}]P_{X}^n(\bs{x}) \nonumber \\
&\leq \sum_{\bs{x}\in\mc S}P_{X}^n(\bs{x}) \nonumber \\
&=P_{X}^{n}(\{\bs{x}: (\bs{x},\bs{u}_{i,j}) \ \text{jointly} \ \text{typical}\}) \nonumber\\
& = 2^{-nI(U;X)+\kappa(n)}, \nonumber
\end{align}
for some $\kappa(n)$ with $\underset{n\rightarrow \infty}{\lim} \frac{\kappa(n)}{n}=0$,
where (\RM{1}) follows because for  $(\mathbf{x},\bs{u}_{i,j})$ being not jointly typical, we have $\mbb P[K=\bs{u}_{i,j}|X^n=\bs{x}]=0.$ This yields
{{\begin{align}
H(K)\geq nI(U;X)-\kappa'(n)
\nonumber \end{align}}}
for some $\kappa'(n)>0$ with $\underset{n\rightarrow \infty}{\lim} \frac{\kappa'(n)}{n}=0.$
Therefore, for sufficiently large $n,$ it holds that
\begin{align}
    \frac{H(K)}{n}>H-\delta. \nonumber
\end{align}
Thus, (\ref{rateMIMOcorrelated}) is satisfied.

 Now, it remains to prove that \eqref{errorMIMOcorrelated} is satisfied. For this purpose, we define $\mbf M=\bs{U}_{11},\hdots, \bs{U}_{N_{1}N_{2}}$ to be the joint random variable of all $\bs{U}_{i,j}s.$ We further define the following two sets which depend on $\mbf M$:
\begin{align}
    S_{1}(\mbf M)&=\{(\bs{x},\bs{y}):(\Phi(\bs{x}),\bs{x},\bs{y}) \in \mathcal{T}_{UXY}^{n}\} \nonumber
\end{align} and
\begin{align}
    S_{2}(\mbf M)=\{(\bs{x},\bs{y}):(\bs{x},\bs{y}) \in S_{1}(\mbf M) \ \text{s.t.} \ \exists \ \bs{U}_{i\ell}\neq\bs{U}_{ij}=\Phi(\bs{x}) \nonumber \\   \text{jointly} \ \text{typical with} \ \bs{y} \ (\text{with the same first index} \ i)
\}.\nonumber
\end{align}
It is proved in \cite{part2} that
\begin{align}
    \mathbb{E}_{\mbf M}\left[P_{XY}^{n}(S_{1}^{c}(\mbf M))+P_{XY}^{n}(S_{2}(\mbf M))\right] \leq \beta,
    \label{averagebeta}
\end{align}
where $\beta \leq \frac{\alpha}{2}$ for sufficiently large $n$. \begin{remark}
$P_{XY}^{n}(S_{1}^{c}(\mbf M))$ and $P_{XY}^{n}(S_{2}(\mbf M))$ are here random variables depending on $\mbf M.$
\end{remark}We choose a realization $\mbf m=\bs{u}_{11},\hdots, \bs{u}_{N_1N_2}$ satisfying:
\begin{align}
    P_{XY}^{n}(S_{1}^{c}(\mbf m))+P_{XY}^{n}(S_{2}(\mbf m)) \leq \beta. \nonumber
\end{align} 
From \eqref{averagebeta}, we know that such a realization exists.
Now, we define the following event:
\begin{align}
    \mathcal{D}_{\mbf m}= ``\Phi(X^n) \ \text{is equal to none of the} \  \bs{u}_{i,j}s". \nonumber
\end{align}
We denote its complement by $\mc D_{\mbf m}^{c}.$
We further define $I^\star=f_1(X^n)$ to be the random message generated by Terminal $A$ and  $\tilde{I}^\star$ to be the random message decoded by Terminal $B$. 
We have
\begin{align}
    \mbb P[K\neq L|\mbf G] \nonumber &=\mbb P[K\neq L|\mbf G,I^\star=\tilde{I}^\star]\mbb P[I^\star=\tilde{I}^\star|\mbf G] + \mbb P[K\neq L|\mbf G,I^\star\neq \tilde{I}^\star]\mbb P[I^\star\neq\tilde{I}^\star|\mbf G] \nonumber \\
        &\leq \mbb P[K\neq L|\mbf G,I^\star=\tilde{I}^\star]+ \mbb P[I^\star\neq\tilde{I}^\star|\mbf G].\nonumber
\end{align}
Here,
\begin{align}
    \mbb P[K\neq L|\mbf G,I^\star=\tilde{I}^\star] 
   &= \mbb P[K\neq L|\mbf G,I^\star=\tilde{I}^\star,\mathcal{D}_{\mbf m}]\mbb P[\mathcal{D}_{\mbf m}|\mbf G,I^\star=\tilde{I}^\star] + \mbb P[K\neq L|\mbf G,I^\star=\tilde{I}^\star,\mathcal{D}_{\mbf m}^c]\mbb P[\mathcal{D}_{\mbf m}^c|\mbf G,I^\star=\tilde{I}^\star] \nonumber \\
   &\overset{(\RM{1})}{=}\mbb P[K\neq L|\mbf G,I^\star=\tilde{I}^\star,\mathcal{D}_{\mbf m}^c]\mbb P[\mathcal{D}_{\mbf m}^c|\mbf G,I^\star=\tilde{I}^\star] \nonumber \\
   &\leq \mbb P[K\neq L|\mbf G,I^\star=\tilde{I}^\star,\mathcal{D}_{\mbf m}^c],\nonumber
\end{align}
where $(\RM{1})$ follows from $\mbb P[K\neq L|\mbf G,I^\star=\tilde{I}^\star,\mathcal{D}_{\mbf m}]=0,$ since conditioned on $\mbf G$, $I^\star=\tilde{I}^\star$ and $\mathcal{D}_{\mbf m}$, we know that $K$ and $L$ are both equal to $\bs{u}_0$.
It follows that
\begin{align}
    \mbb P[K\neq L|\mbf G] 
    &\leq \mbb P[K\neq L|\mbf G,I^\star=\tilde{I}^\star,\mathcal{D}_{\mbf m}^c]+ \mbb P[I^\star\neq\tilde{I}^\star|\mbf G] \nonumber \\
    &\leq P_{XY}^n\left(S_{1}^{c}(\mbf m)\cup S_{2}(\mbf m)\right)+ \mbb P[I^\star\neq\tilde{I}^\star|\mbf G] \nonumber \\
    &\overset{(a)}{=}P_{XY}^{n}(S_{1}^{c}(\mbf m))+P_{XY}^n\left(S_{2}(\mbf m)\right) +\mbb P[I^\star\neq\tilde{I}^\star|\mbf G] \nonumber \\
    &\leq \beta+ \mbb P[I^\star\neq\tilde{I}^\star|\mbf G],\nonumber
\end{align}
where $(a)$ follows because $S_{1}^{c}(\mbf m)$ and $S_{2}(\mbf m)$ are disjoint. 
It holds that
\begin{align}
    &\mbb P\left[I^\star\neq \tilde{I}^\star|\mbf G\right]\leq \theta \implies \mbb P[K\neq L|\mbf G] \leq \beta+ \theta. \nonumber
\end{align}
Since $\beta+ \theta\leq \alpha$, it follows that
\begin{align}
    &\mbb P\left[I^\star\neq \tilde{I}^\star|\mbf G\right]\leq \theta  \implies \mbb P[K\neq L|\mbf G] \leq \alpha. \nonumber
\end{align}
From \eqref{erroroutage}, we know that
\begin{align}
   \mbb P\left[ \mbb P\left[I^\star\neq \tilde{I}^\star|\mbf G\right]\leq \theta\right] \geq 1-\eta. \nonumber
\end{align}
Thus
\begin{align}
    \mbb P\left[  \mbb P[K\neq L|\mbf G] \leq \alpha          \right] &\geq \mbb P\left[\mbb P\left[ I^\star\neq \tilde{I}^\star|\mbf G     \right]\leq \theta\right] \nonumber \\
    &\geq 1-\eta. \nonumber
\end{align} 
This completes the direct proof of Theorem \ref{ccretathmMIMO}.
\subsection{Converse Proof}
Let $(K,L)$ be a permissible pair according to the fixed CR-generation protocol of block-length $n$ introduced in Section \ref{systemmodel}. We recall that the latter consists of:
\begin{enumerate}
    \item A function $\Phi$ that maps $X^n$ into a random variable $K$ with alphabet $\mathcal{K}$ generated by Terminal $A.$
    \item A function $\Lambda$ that maps $X^n$ into the input sequence $\bs{T}^n \in \mbb C^{N_T\times n}$  satisfying the following power constraint
    \begin{equation}
\frac{1}{n}\sum_{i=1}^{n}\bs{T}_{i}^H\bs{T}_{i}\leq P, \quad \text{almost surely}.   \nonumber
\end{equation}
    \item A function $\Psi$ that maps $Y^n$ and the  output sequence $\bs{Z}^n \in \mbb C^{N_R\times n}$ into a random variable $L$ with alphabet $\mathcal{K}$ generated by Terminal $B.$
\end{enumerate}
We further assume that $(K,L)$ satisfies $\eqref{errorMIMOcorrelated}$  $\eqref{cardinalityMIMOcorrelated}$ and $\eqref{rateMIMOcorrelated}$, where the maximum error probability $\alpha>0$ in \eqref{errorMIMOcorrelated} and the constant $\delta>0$ in \eqref{rateMIMOcorrelated} are fixed arbitrarily.
We are going to show for some $\alpha'(n)>0$ that
\begin{align}
    \frac{H(K)}{n} \leq \underset{ \substack{U \\{\substack{U \circlearrow{X} \circlearrow{Y}\\ I(U;X)-I(U;Y) \leq C_{\eta}(P,N_{T} \times N_{R})+\alpha'(n)}}}}{\max} I(U;X), \nonumber 
\end{align}
where 
\begin{align}
&C_{\eta}(P,N_{T} \times N_{R}) =\sup \ \Big\{R: \underset{\mbf Q \in \mathcal{Q}_{P}}{\inf }\mbb P\left[\log\det(\mathbf{I}_{N_{R}}+\frac{1}{\sigma^2}\mathbf{G}\mathbf{Q}\mathbf{G}^{H})< R \right] \leq \eta\Big\} \nonumber
\end{align}
and where $\underset{n\rightarrow \infty}{\lim}\alpha'(n)$ can be made arbitrarily small for a suitable choice of $\alpha>0$ and some other constant $\epsilon>0.$
\color{black}
In our proof, we will use  the following lemma: 
\begin{lemma} (Lemma 17.12 in \cite{codingtheorems})
For arbitrary random variables $S$ and $R$ and sequences of random variables $X^{n}$ and $Y^{n}$, it holds that
\begin{align}
 I(S;X^{n}|R)-I(S;Y^{n}|R)  
 &=\sum_{i=1}^{n} I(S;X_{i}|X_{1},\dots, X_{i-1}, Y_{i+1},\dots, Y_{n},R) \nonumber \\ &\quad -\sum_{i=1}^{n} I(S;Y_{i}|X_{1},\dots, X_{i-1}, Y_{i+1},\dots, Y_{n},R) \nonumber \\
 &=n[I(S;X_{J}|V)-I(S;Y_{J}|V)],\nonumber
\end{align}
where $V=X_{1},\dots, X_{J-1},Y_{J+1},\dots, Y_{n},R,J$, with $J$ being a random variable independent of $R$,\ $S$, \ $X^{n}$ \ and $Y^{n}$ and uniformly distributed on $\{1 ,\dots, n \}$.
\label{lemma1}
\end{lemma}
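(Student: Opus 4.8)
The plan is to establish the first (``multi-letter'') identity by a telescoping argument over the index $i$, and then to read off the single-letter form by averaging over the uniformly distributed time-sharing index $J$. Throughout I write $X^{i-1}=(X_1,\dots,X_{i-1})$ and $Y_{i+1}^{n}=(Y_{i+1},\dots,Y_n)$, with the understanding that $X^{0}$ and $Y_{n+1}^{n}$ are empty, so that conditioning on them is vacuous.

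For the first identity, I would introduce, for $i=0,1,\dots,n$, the quantity
$$A_i = I(S;X^{i},Y_{i+1}^{n}\mid R),$$
which interpolates between the two sides: $A_0=I(S;Y^{n}\mid R)$ and $A_n=I(S;X^{n}\mid R)$. Hence $I(S;X^{n}\mid R)-I(S;Y^{n}\mid R)=A_n-A_0=\sum_{i=1}^{n}(A_i-A_{i-1})$, so it suffices to compute a single increment. Applying the chain rule of mutual information to peel $X_i$ off $A_i$ and, using the reverse index ordering, $Y_i$ off $A_{i-1}$, gives
$$A_i = I(S;X^{i-1},Y_{i+1}^{n}\mid R)+I(S;X_i\mid X^{i-1},Y_{i+1}^{n},R),$$
$$A_{i-1} = I(S;X^{i-1},Y_{i+1}^{n}\mid R)+I(S;Y_i\mid X^{i-1},Y_{i+1}^{n},R).$$
The shared term cancels in the difference, leaving $A_i-A_{i-1}=I(S;X_i\mid X^{i-1},Y_{i+1}^{n},R)-I(S;Y_i\mid X^{i-1},Y_{i+1}^{n},R)$, and summing over $i$ produces exactly the first displayed equality in the lemma. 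This is the Csiszár sum identity for the sequences $X^{n},Y^{n}$ conditioned on $R$; the telescoping version keeps the extra conditioning on $S$ and $R$ explicit and avoids any auxiliary manipulation.

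For the single-letter form, I would let $J$ be uniform on $\{1,\dots,n\}$ and independent of $(S,R,X^{n},Y^{n})$, and set $V=(X^{J-1},Y_{J+1}^{n},R,J)$. Since $J$ is independent of $(S,R,X^{n},Y^{n})$, conditioning on $\{J=i\}$ leaves the joint law of those variables unchanged, so $I(S;X_J\mid V)=\sum_{i=1}^{n}\mbb P[J=i]\,I(S;X_i\mid X^{i-1},Y_{i+1}^{n},R)=\frac1n\sum_{i=1}^{n}I(S;X_i\mid X^{i-1},Y_{i+1}^{n},R)$, and likewise $I(S;Y_J\mid V)=\frac1n\sum_{i=1}^{n}I(S;Y_i\mid X^{i-1},Y_{i+1}^{n},R)$. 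Subtracting the two averages and multiplying by $n$ turns the first identity into $I(S;X^{n}\mid R)-I(S;Y^{n}\mid R)=n\big[I(S;X_J\mid V)-I(S;Y_J\mid V)\big]$, which is the second equality.

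The proof is essentially bookkeeping, so there is no genuine obstacle; the two places that demand care are (i) expanding $A_{i-1}$ by the chain rule in the \emph{reverse} index order for the $Y$-coordinates, so that $X^{i-1}$ and $Y_{i+1}^{n}$ appear in both $A_i$ and $A_{i-1}$ and the common term actually cancels, and (ii) the step replacing a plain average of conditional mutual informations by a single conditional mutual information, where the independence of $J$ from all the other variables is exactly the hypothesis that makes the manipulation legitimate.
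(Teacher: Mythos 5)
Your proof is correct. The paper itself gives no proof of this lemma---it simply cites Lemma 17.12 of Csisz\'ar--K\"orner---and your telescoping argument with $A_i=I(S;X^{i},Y_{i+1}^{n}\mid R)$, followed by the standard time-sharing step using the independence of $J$, is exactly the classical derivation of this identity (a conditional Csisz\'ar sum identity), so it faithfully supplies the omitted argument. Both chain-rule expansions and the replacement of the average by a single conditional mutual information are handled correctly.
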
Let $J$ be a random variable uniformly distributed on $\{1,\dots, n\}$ and independent of $K$, $X^n$ and $Y^n$. We further define $U=K,X_{1},\dots, X_{J-1},Y_{J+1},\dots, Y_{n},J.$ It holds that $U \circlearrow{X_J} \circlearrow{Y_J}.$ \\
Notice  that
{{\begin{align}
H(K)&=I(K;X^{n}) \nonumber\\
&\overset{(\RM{1})}{=}\sum_{i=1}^{n} I(K;X_{i}|X_{1},\dots, X_{i-1}) \nonumber\\
&=n I(K;X_{J}|X_{1},\dots, X_{J-1},J) \nonumber\\
&\overset{(\RM{2})}{\leq }n I(U;X_{J}), \nonumber
\end{align}}}where $(\RM{1})$ and $(\RM{2})$ follow from the chain rule for mutual information.\\
We will show next that for some $\alpha'(n)>0$
\begin{align}
    I(U;X_J)-I(U;Y_J) \leq C_{\eta}(P,N_{T} \times N_{R})+\alpha'(n), \nonumber
\end{align}
where $\underset{n\rightarrow\infty}{\lim} \alpha'(n)$ can be made arbitrarily small.
Applying Lemma \ref{lemma1} for $S=K$, $R=\varnothing$ with $V=X_1,\hdots, X_{J-1},Y_{J+1},\hdots, Y_{n},J$ yields
{{\begin{align}
I(K;X^{n})-I(K;Y^{n}) 
&=n[I(K;X_{J}|V)-I(K;Y_{J}|V)] \nonumber\\
&\overset{(a)}{=}n[I(KV;X_{J})-I(K;V)-I(KV;Y_{J})+I(K;V)] \nonumber\\
&\overset{(b)}{=}n[I(U;X_{J})-I(U;Y_{J})], 
\label{UhilfsvariableMIMO1}
\end{align}}}where $(a)$ follows from the chain rule for mutual information and $(b)$ follows from $U=K,V$. \\
It results using (\ref{UhilfsvariableMIMO1}) that
{{\begin{align}
n[I(U;X_{J})-I(U;Y_{J})]
&=I(K;X^{n})-I(K;Y^{n}) \nonumber\\
&=H(K)-H(K|X^{n})-I(K;Y^{n}) \nonumber\\
&\overset{(c)}{=}H(K)-I(K;Y^{n})\nonumber \\ 
&=H(K|Y^n)
\label{star2MIMO2}
\end{align}}}where $(c)$ follows because  $K=\Phi(X^{n}).$
\\
Next, we will show for some $\alpha'(n)>0$ that
\begin{align}
    \frac{H(K|Y^n)}{n}\leq C_{\eta}(P,N_{T}\times N_{R})+\alpha'(n). \nonumber
\end{align}

Let $\text{cov}(\bs{T}_i)=\mathbf{Q}_i$   for $i=1, \hdots,n,$ where $\bs{T}_{i}\in \mbb C^{N_T}, i=1, \hdots,n.$ We define
\begin{align}
   \mbf Q^\star=\frac{1}{n}\sum_{i=1}^{n} \mbf Q_{i}. \nonumber
\end{align}
By Lemma \ref{traceQ}, we know that $\text{tr}(\mbf Q^\star)\leq P$ and therefore $\mbf Q^\star \in \mc Q_P.$
Let
\begin{align}
&R(\mbf Q^\star)= \sup \Big\{R: \mbb P\left[\log\det(\mathbf{I}_{N_{R}}+\frac{1}{\sigma^2}\mathbf{G}\mbf Q^\star\mathbf{G}^{H})< R\right]\leq \eta \Big\}. \nonumber
\end{align}
Since $\mbf Q^\star \in \mc Q_P$, Lemma \ref{supQ} implies that
\begin{align}
 R(\mbf Q^\star)\leq C_{\eta}(P,N_{T} \times N_{R}).
 \label{comparerates}
 \end{align}
We consider for $\epsilon>0$ the set
\begin{align}
    &\Omega=\Big\{\mathbf{g} \in \mathbb{C}^{N_{R}\times N_{T}}:   \mbb P\left[K\neq L|\mathbf{G}=\mathbf{g}\right]\leq \alpha \ \text{and} \ \log\det(\mathbf{I}_{N_{R}}+\frac{1}{\sigma^2}\mathbf{g}\mbf Q^\star\mathbf{g}^{H}) \leq   R(\mbf Q^\star)+\epsilon \Big\}. \nonumber
\end{align}
\begin{lemma}
\begin{align}
\mbb P\left[\mbf G \in \Omega\right]>0. \nonumber 
\end{align}
\label{probOmega}
\end{lemma}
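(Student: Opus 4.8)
The plan is to write $\Omega$ as an intersection of two events, lower-bound the probability of each separately, and then combine them with the elementary inequality $\mbb P[\mc A\cap\mc B]\geq\mbb P[\mc A]+\mbb P[\mc B]-1$. Concretely, set
\[
\mc A=\Big\{\mbf g\in\mbb C^{N_R\times N_T}:\mbb P[K\neq L\,|\,\mbf G=\mbf g]\leq\alpha\Big\},
\]
\[
\mc B=\Big\{\mbf g\in\mbb C^{N_R\times N_T}:\log\det(\mbf I_{N_R}+\tfrac{1}{\sigma^2}\mbf g\mbf Q^\star\mbf g^{H})\leq R(\mbf Q^\star)+\epsilon\Big\},
\]
so that $\Omega=\mc A\cap\mc B$. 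The first bound is immediate: the CR requirement \eqref{errorMIMOcorrelated} reads $\mbb P[\mbb P[K\neq L\,|\,\mbf G]\leq\alpha]\geq 1-\eta$, which is exactly $\mbb P[\mbf G\in\mc A]\geq 1-\eta$.

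The second step is to show $\mbb P[\mbf G\in\mc B]>\eta$. Since the inequality defining $\mc E(\mbf Q^\star,\cdot)$ is strict, we have $\mc E(\mbf Q^\star,R(\mbf Q^\star)+\epsilon)\subseteq\mc B$, so it suffices to prove $\mbb P[\mbf G\in\mc E(\mbf Q^\star,R(\mbf Q^\star)+\epsilon)]>\eta$. Because $R\mapsto\mc E(\mbf Q^\star,R)$ is nondecreasing (in the sense of set inclusion), the set $\{R:\mbb P[\mbf G\in\mc E(\mbf Q^\star,R)]\leq\eta\}$ is a half-line with supremum $R(\mbf Q^\star)$ by definition; since $\epsilon>0$, the number $R(\mbf Q^\star)+\epsilon$ strictly exceeds that supremum and hence does not lie in the set, i.e. $\mbb P[\mbf G\in\mc E(\mbf Q^\star,R(\mbf Q^\star)+\epsilon)]>\eta$. (Lemma \ref{sigmacontinuity} is not even needed here; only the monotonicity of $\mc E(\mbf Q^\star,\cdot)$ and the strict positivity of $\epsilon$.)

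Finally, combining the two bounds via inclusion–exclusion gives
\[
\mbb P[\mbf G\in\Omega]=\mbb P[\mbf G\in\mc A\cap\mc B]\geq\mbb P[\mbf G\in\mc A]+\mbb P[\mbf G\in\mc B]-1>(1-\eta)+\eta-1=0.
\]
The argument is short; the only place needing care is the second step, where it is the \emph{strict} inequality $\mbb P[\mbf G\in\mc B]>\eta$ (rather than merely $\geq\eta$) that makes the final estimate strictly positive — and this is precisely why $\Omega$ is defined with the slack $R(\mbf Q^\star)+\epsilon$ instead of $R(\mbf Q^\star)$ itself.
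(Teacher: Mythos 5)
Your proof is correct and uses the same two key ingredients as the paper: the constraint \eqref{errorMIMOcorrelated} gives $\mbb P[\mbf G\in\mc A]\geq 1-\eta$, and the definition of $R(\mbf Q^\star)$ as a supremum over a downward-closed set gives $\mbb P[\mbf G\in\mc B]>\eta$ strictly (the paper calls your $\mc B$ the set $\mc F(\mbf Q^\star,R(\mbf Q^\star)+\epsilon)$). The only difference is in how the two bounds are combined: you invoke the Bonferroni inequality directly, whereas the paper conditions $\mbb P[\mbf G\in\mc A]$ on $\mc B$ versus $\mc B^c$ and rearranges --- your version is cleaner but logically equivalent.
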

\begin{proof}
We define $\forall R \geq 0$ and $\forall \mbf Q \in \mc Q_P$ the following set:
\begin{align}
    &\mathcal{F}(\mbf Q, R)=\Big\{\mbf g \in \mbb C^{N_R\times N_T}:\log\det(\mathbf{I}_{N_{R}}+\frac{1}{\sigma^2}\mathbf{g}\mbf Q \mathbf{g}^{H}) \leq R \Big\}. \nonumber
\end{align}
\begin{remark}
Unlike in the set $\mc E(\mbf Q, R)$ defined in \eqref{setE}, we do not impose strict inequality in $\mathcal{F}(\mbf Q,R).$
\end{remark}
From the definition of $R(\mbf Q^\star)$, we have
\begin{align}
    \eta &< \mbb P \left[\mbf G \in \mc E(\mbf Q^\star,R(\mbf Q^\star)+\epsilon)\right] \nonumber \\
    &\leq \mbb P\left[\mbf G \in \mc F(\mbf Q^\star,R(\mbf Q^\star)+\epsilon) \right]. \nonumber
\end{align}

Then, it holds that
\begin{align}
    \mbb P\left[\mbf G \in \mc F(\mbf Q^\star,R(\mbf Q^\star)+\epsilon)\right] =\eta_1, \nonumber
\end{align}
where $0\leq \eta<\eta_1\leq 1.$

 It follows using \eqref{errorMIMOcorrelated} that
\begin{align}
    1-\eta 
    & \leq \mbb P\left[\mbb P\left[K\neq L|\mbf G\right]\leq \alpha\right] \nonumber \\
    &=\mbb P\left[ \mbb P\left[K\neq L\bigm|\mbf G\right]\leq \alpha\bigm|\mbf G \in \mc F(\mbf Q^\star,R(\mbf Q^\star)+\epsilon) \right] \mbb P\left[\mbf G \in \mc F(\mbf Q^\star,R(\mbf Q^\star)+\epsilon)\right] \nonumber \\
    &\quad+\mbb P\left[ \mbb P\left[K\neq L\bigm|\mbf G\right]\leq \alpha\bigm|{\mbf G \notin \mc F(\mbf Q^\star,R(\mbf Q^\star)+\epsilon)} \right] \mbb P\left[{\mbf G \notin \mc F(\mbf Q^\star,R(\mbf Q^\star)+\epsilon)}\right]\nonumber \\
    &=\eta_1 \ \mbb P\left[ \mbb P\left[K\neq L\bigm|\mbf G\right]\leq \alpha\bigm|\mbf G \in \mc F(\mbf Q^\star,R(\mbf Q^\star)+\epsilon) \right]\nonumber \\
    &\quad+(1-\eta_1) \ \mbb P\left[ \mbb P\left[K\neq L\bigm|\mbf G\right]\leq \alpha\bigm|{\mbf G \notin \mc F(\mbf Q^\star,R(\mbf Q^\star)+\epsilon)} \right]
    \nonumber \\
    &\leq \eta_1 \ \mbb P\left[ \mbb P\left[K\neq L\bigm|\mbf G\right]\leq \alpha\bigm|\mbf G \in \mc F(\mbf Q^\star,R(\mbf Q^\star)+\epsilon) \right]+(1-\eta_1) \nonumber \\
    &\leq \mbb P\left[ \mbb P\left[K\neq L\bigm|\mbf G\right]\leq \alpha\bigm|\mbf G \in \mc F(\mbf Q^\star,R(\mbf Q^\star)+\epsilon) \right]+(1-\eta_1)\nonumber \\
    &< \mbb P\left[ \mbb P\left[K\neq L\bigm|\mbf G\right]\leq \alpha\bigm|\mbf G \in \mc F(\mbf Q^\star,R(\mbf Q^\star)+\epsilon) \right]+(1-\eta), \nonumber
\end{align}
where we used that $1-\eta_1<1-\eta.$ This means that
\begin{align}
    \mbb P\left[ \mbb P\left[K\neq L\bigm|\mbf G\right]\leq \alpha\bigm|\mbf G \in \mc F(\mbf Q^\star,R(\mbf Q^\star)+\epsilon) \right]>0. \nonumber
\end{align}
In addition, since $\eta_1>0$, we have
\begin{align}
    \mbb P\left[ \mbb P\left[K\neq L\bigm|\mbf G\right]\leq \alpha,\mbf G \in \mc F(\mbf Q^\star,R(\mbf Q^\star)+\epsilon) \right]>0. \nonumber
\end{align}
Thus
$$\mbb P\left[\mbf G \in \Omega\right]>0.$$
\end{proof}Next, we define $\tilde{\mbf G}$ to be a  random matrix, independent of $X^n$,$Y^n$ and $\bs{\xi}^n$, with alphabet $\Omega$ such that for every Borel set $\seta \subseteq \mbb C^{N_R\times N_T},$ it holds that
\begin{align}
    \mbb P \left[ \tilde{\mbf G}\in \seta \right]=\mbb P \left[ \mbf G\in\seta|\mbf G\in\Omega\right]. \nonumber
\end{align}
By Lemma \ref{probOmega}, we know that such a $\tilde{\mbf G}$ exists.

We fix the CR generation protocol and change the state distribution of the slow fading channel. We obtain the following new MIMO channel:
\begin{align}
    \tilde{\bs Z}_{i}=\tilde{\mbf G}\bs{T}_i+\bs{\xi}_i \quad i=1, \hdots,n,  \nonumber
\end{align}
where $\tilde{\bs Z}^n$ is the new output sequence.
We further define $\tilde{L}$ such that
\begin{align}
    \tilde{L}=\Psi(Y^n,\tilde{\bs Z}^n). \nonumber
\end{align}
Clearly, it holds for any $\mbf g \in \Omega$ that
\begin{align}
     \mbb P\left[K\neq \tilde{L}|\tilde{\mbf G}= \mbf g\right]\leq \alpha   \label{newerrorinequality}
\end{align}
and that
\begin{align}
    \log\det(\mathbf{I}_{N_{R}}+\frac{1}{\sigma^2}\mathbf{g}\mbf Q^\star\mathbf{g}^{H}) \leq R(\mbf Q^\star)+\epsilon.
    \label{absolutevalue}
\end{align}
Furthermore, since $\bs{\xi}_i \sim \mathcal{N}_{\mathbb{C}}(\bs{0}_{N_R},\sigma^2 \mathbf{I}_{N_R}), i=1, \hdots,n$, it follows for $i=1, \hdots,n$ that
\begin{align}
    I(\bs{T}_{i};\tilde{\bs Z}_{i}|\tilde{\mbf G}= \mbf g)&\leq \log\det(\mathbf{I}_{N_{R}}+\frac{1}{\sigma^2}\mathbf{g}\mathbf{Q}_{i} \mathbf{g}^{H})\quad \forall \mbf g\in\Omega.
    \label{mutualinfmax}
\end{align}
We recall that the goal is to prove that for some $\alpha'(n)>0$ \begin{align}
    \frac{H(K|Y^n)}{n}\leq C_{\eta}(P,N_{T}\times N_{R})+\alpha'(n). \nonumber
\end{align}

Now, we have
\begin{align}
    \frac{1}{n}H(K|Y^n)&=\frac{1}{n}H(K|\tilde{\mbf G},Y^n) \nonumber \\
            &=\frac{1}{n}H(K|\tilde{\mbf G},Y^n,\tilde{\bs Z}^n)+\frac{1}{n}I(K;\tilde{\bs Z}^n|\tilde{\mbf G},Y^n), \nonumber
\end{align}
where we used that $\tilde{\mbf G}$ is independent of $(K,Y^n).$ 
On the one hand, we have
\begin{align}
                                \frac{1}{n}H\left(K|\tilde{\bs Z}^n,\tilde{\mbf G},Y^n\right)  &\overset{(a)}{\leq } \frac{1}{n} H\left(K|\tilde{L},\tilde{\mbf G}\right) \nonumber \\    &\overset{(b)}{\leq } \frac{1}{n}\mbb E\left[1+\log|\mathcal{K}|\mbb P[K\neq \tilde{L}|\tilde{\mbf G}]\right]  \nonumber \\                          &=\frac{1}{n}+\frac{1}{n} \log|\mathcal{K}|\mbb E \left[ P[K\neq \tilde{L}|\tilde{\mbf G}]\right]  \nonumber \\
                                &\overset{(c)}{\leq } \frac{1}{n}+\frac{1}{n}\alpha \log|\mathcal{K}|  \nonumber \\                               &\overset{(d)}{\leq } \frac{1}{n}+\alpha \ c,  \nonumber
\end{align}
where (a) follows from $\tilde{L}=\Psi(Y^n,\tilde{\bs Z}^n)$, (b) follows from Fano's Inequality, (c) follows from \eqref{newerrorinequality}   and (d) follows from $\log|\mathcal{K}|\leq cn$ in \eqref{cardinalityMIMOcorrelated}.
On the other hand, we have
 \begin{align} 
\frac{1}{n}I(K;\tilde{\bs Z}^n|\tilde{\mbf G},Y^{n}) &\leq \frac{1}{n} I(X^{n},K;\tilde{\bs Z}^n|\tilde{\mbf G},Y^{n}) \nonumber\\
& \overset{(a)}{\leq }\frac{1}{n} I(\bs{T}^n;\tilde{\bs Z}^n|\tilde{\mbf G},Y^{n})  \nonumber \\
& = \frac{1}{n} \left[h(\tilde{\bs Z}^n|\tilde{\mbf G},Y^{n})- h(\tilde{\bs Z}^n|\bs{T}^n,\tilde{\mbf G},Y^{n}) \right]\nonumber \\
& \overset{(b)}{=} \frac{1}{n} \left[h(\tilde{\bs Z}^n|\tilde{\mbf G},Y^{n})- h(\tilde{\bs Z}^n|\tilde{\mbf G},\bs{T}^n)\right] \nonumber \\
& \overset{(c)}{\leq }  \frac{1}{n} \left[h(\tilde{\bs Z}^n|\tilde{\mbf G})- h(\tilde{\bs Z}^n|\tilde{\mbf G},\bs{T}^n)\right] \nonumber \\
& =\frac{1}{n} I(\bs{T}^n;\tilde{\bs Z}^n|\tilde{\mbf G})  \nonumber \\
& \overset{(d)}{=} \frac{1}{n}\sum_{i=1}^{n} I(\tilde{\bs Z}_{i};\bs{T}^n|\tilde{\mbf G},\tilde{\bs Z}^{i-1}) \nonumber \\
& = \frac{1}{n}\sum_{i=1}^{n} h(\tilde{\bs Z}_{i}|\tilde{\mbf G},\tilde{\bs Z}^{i-1})-h(\tilde{\bs Z}_{i}|\tilde{\mbf G},\bs{T}^n,\tilde{\bs Z}^{i-1}) \nonumber \\
& \overset{(e)}{=} \frac{1}{n}\sum_{i=1}^{n} h(\tilde{\bs Z}_{i}|\tilde{\mbf G},\tilde{\bs Z}^{i-1})-h(\tilde{\bs Z}_{i}|\tilde{\mbf G},\bs{T}_{i}) \nonumber \\
& \overset{(f)}{\leq} \frac{1}{n} \sum_{i=1}^{n} h(\tilde{\bs Z}_{i}|\tilde{\mbf G})-h(\tilde{\bs Z}_{i}|\tilde{\mbf G},\bs{T}_{i}) \nonumber \\
& = \frac{1}{n}\sum_{i=1}^{n} I(\bs{T}_{i};\tilde{\bs Z}_{i}|\tilde{\mbf G}) \nonumber \\
&\overset{(g)}{\leq}\frac{1}{n} \sum_{i=1}^{n}\mbb E \left[\log\det(\mathbf{I}_{N_{R}}+\frac{1}{\sigma^2}\tilde{\mathbf{G}}\mathbf{Q}_{i} \tilde{\mathbf{G}}^{H})\right]\nonumber \\
&=\mbb E\left[\frac{1}{n}\sum_{i=1}^{n}\log\det(\mathbf{I}_{N_{R}}+\frac{1}{\sigma^2}\tilde{\mathbf{G}}\mathbf{Q}_{i} \tilde{\mathbf{G}}^{H})    \right] \nonumber \\
&\overset{(h)}{\leq} \mbb E\left[\log\det\left(\frac{1}{n}\sum_{i=1}^{n} \left[\mathbf{I}_{N_{R}}+\frac{1}{\sigma^2}\tilde{\mathbf{G}}\mathbf{Q}_{i} \tilde{\mathbf{G}}^{H}\right]\right)\right] \nonumber \\
&=\mbb E\left[\log\det\left(\mathbf{I}_{N_{R}}+\frac{1}{\sigma^2}\tilde{\mathbf{G}}\left(\frac{1}{n}\sum_{i=1}^{n}\mathbf{Q}_{i}\right)\tilde{\mathbf{G}}^{H}\right)\right] \nonumber \\
&=\mbb E\left[\log\det\left(\mathbf{I}_{N_{R}}+\frac{1}{\sigma^2}\tilde{\mathbf{G}}\mbf Q^\star\tilde{\mathbf{G}}^{H}\right)\right] \nonumber \\
& \overset{(i)}{\leq}  R(\mbf Q^\star)+\epsilon \nonumber \\
&\overset{(j)}{\leq} C_{\eta}(P,N_{T} \times N_{R})+\epsilon,\nonumber
  \end{align}where $(a)$ follows from the Data Processing Inequality because $Y^{n}\circlearrow{X^{n}K}\circlearrow{\tilde{\mbf G}\bs{T}^n}\circlearrow{\tilde{\bs Z}^{n}}$ forms a Markov chain, $(b)$ follows because $Y^{n}\circlearrow{X^{n}K}\circlearrow{\tilde{\mbf G}\bs{T}^n}\circlearrow{\tilde{\bs Z}^{n}}$ forms a Markov chain, $(c)(f)$ follow because conditioning does not increase entropy, $(d)$ follows from the chain rule for mutual information, $(e)$ follows because $\bs{T}_{1},\dots, \bs{T}_{i-1},\bs{T}_{i+1},\dots, \bs{T}_{n},\tilde{\bs Z}^{i-1} \circlearrow{\tilde{\mbf G},\bs{T}_{i}}\circlearrow{\tilde{\bs Z}_{i}}$ forms a Markov chain, $(g)$ follows from \eqref{mutualinfmax},
 $(h)$ follows from Jensen's Inequality since the function $\log\circ\det$ is concave on the set of Hermitian positive semi-definite matrices and since $\mbf I_{N_{R}}+\frac{1}{\sigma^2}\tilde{\mbf G}\mbf Q_{i}\tilde{\mbf G}^H$ is Hermitian positive semi-definite for $i=1, \hdots,n$ and $(i)$ follows from \eqref{absolutevalue} and $(j)$ follows from \eqref{comparerates}. \\
This proves that for $0\leq \eta<1$
 \begin{align}
     \frac{H(K|Y^n)} {n} \leq C_{\eta}(P,N_{T} \times N_{R})+\alpha'(n),
     \label{star1MIMOnonemptyMIMO}
 \end{align}
 where $\alpha'(n)=\frac{1}{n}+\alpha c+\epsilon >0.$ 
 \\~\\
From (\ref{star2MIMO2}) and (\ref{star1MIMOnonemptyMIMO}), we deduce that for $0\leq\eta<1$
{{\begin{align}
&I(U;X_{J})-I(U;Y_{J})  \leq C_{\eta}(P,N_{T} \times N_{R}) +\alpha'(n), \nonumber
\end{align}}}where $U \circlearrow{X_{J}} \circlearrow{Y_{J}}.$ \\
\color{black}Since the joint distribution of $X_{J}$ and $Y_{J}$ is equal to $P_{XY}$, $\frac{H(K)}{n}$ is upper-bounded by $I(U;X)$ subject to $I(U;X)-I(U;Y) \leq C_{\eta}(P,N_{T} \times N_{R}) + \alpha'(n)$ with $U$ satisfying $U \circlearrow{X} \circlearrow{Y}$. As a result, it holds that
\begin{align}
    \frac{H(K)}{n} \leq \underset{ \substack{U \\{\substack{U \circlearrow{X} \circlearrow{Y}\\ I(U;X)-I(U;Y) \leq C_{\eta}(P,N_{T} \times N_{R})+\alpha'(n)}}}}{\max} I(U;X).
    \nonumber
\end{align}
Here, $\underset{n\rightarrow \infty}{\lim}\alpha'(n)$ can be made arbitrarily small by choosing $\alpha,\epsilon$ to be arbitrarily small positive constants. This completes the converse proof of Theorem \ref{ccretathmMIMO}.
\section{Proof of Theorem \ref{achievratecompoundchannels}}
\label{prooflowerbound}
Let $a>0$ be fixed arbitrarily. Let $\mathcal{G}$ be an arbitrary closed subset of $\mathcal{G}_a$ defined in \eqref{setga}.
For any $\mbf g \in \mathcal{G}$, we consider the channel $W_{\mbf g}$:
\begin{align}
\bs{z}_{i}=\mbf g\bs{t}_{i}+\bs{\xi}_{i} \quad i=1, \hdots,n,  \nonumber
\end{align}
where $\bs{t}^n=(\bs{t}_1,\hdots,\bs{t}_n)\in\mbb C^{N_{T}\times n}$ and $\mbf \bs{z}^n=(\bs{z}_1,\hdots,\bs{z}_n)\in \mbb C^{N_{R}\times n}$ are channel input and output blocks, respectively.
Again, here, the block-length $n$ can vary and for the channel $W_\mbf g, \ \mbf g \in \mc G,$ the arguments determine the block-length.
The $\bs{\xi}_i$s are mutually independent complex Gaussian random vectors, each with mean $\mbf 0_{N_{R}}$ and covariance matrix $\sigma^2 \mbf I_{N_R},$ and independent of the random input sequence $\bs{T}^n.$
Let $\mc T^n$ and  $\mc Z^n$ be the set of all $\bs{t}^n$, $\bs{z}^n$, respectively. Let $q(\bs{z}^n)$ be an arbitrary output distribution for the channel $W_\mbf g,\ \mbf g \in \mc G.$
We define for any $\bs{t}^n\in \mc T^n$ and any $\bs{z}^n \in \mc Z^n$ 
$$i_{\mbf g}(\bs{t}^n, \bs{z}^n)=\log \frac{W_{\mbf g}(\bs{z}^n| \bs{t}^n) }{q( \bs{z}^n)}.$$
The goal is to prove that $\underset{\mbf Q \in \mathcal{Q}_{P}}{\sup}\underset{\mbf g \in \mathcal{G}}{\inf} \log\det(\mbf I_{N_{R}}+\frac{1}{\sigma^2}\mbf g \mbf Q \mbf g^H)$ is an achievable rate for  $\mathcal{C}=\{ W_{\mbf g}: \mbf g\in \mathcal{G}\}.$  
 Our proof is inspired by \cite{discretetimegaussian}\footnote{In\cite{discretetimegaussian}, the focus was on compound real Gaussian channels with square channel matrix whose operator norm is upper-bounded by $a$ and with noise covariance matrix satisfying further conditions. In our work, we consider \textit{compound complex Gaussian} channels with \textit{fixed noise covariance matrix equal to} $\sigma^2 \mbf I_{N_{R}}$ and with
channel matrix that has \textit{arbitrary dimension} and whose operator norm does not exceed $a.$}.  
\subsection{Proof of Theorem \ref{achievratecompoundchannels} for finite  subsets of $\mathcal{G}$}
We consider first finite subsets of $\mathcal{G}$ and prove the following theorem.
\begin{theorem}
\label{capacityfinitestate}
Let $\mathcal{G}^\prime$ be any finite subset of $\mathcal{G}.$
We define the compound channel 
$$ \mathcal{C}^\prime=\{ W_{\mbf g}: \mbf g\in \mathcal{G}^\prime\}.$$
An achievable rate for $\mathcal{C}'$  is 
$$ \underset{\mbf Q \in \mathcal{Q}_{P}}{\sup}\underset{\mbf g\in\mathcal{G}^{\prime}}{\inf} \log\det(\mathbf{I}_{N_{R}}+\frac{1}{\sigma^2}\mathbf{g}\mathbf{Q}\mathbf{g}^{H}).  $$
\end{theorem}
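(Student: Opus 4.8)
The plan is to prove the coding theorem for the finite compound channel $\mathcal{C}'=\{W_{\mathbf g}:\mathbf g\in\mathcal G'\}$ with a single Gaussian random codebook and state‑dependent threshold decoding, exploiting that the decoder knows $\mathbf g$ (CSIR) and that $\mathcal G'=\{\mathbf g_1,\dots,\mathbf g_m\}$ is finite. Fix $\epsilon>0$. Since $\mathbf Q\mapsto\log\det(\mathbf I_{N_R}+\tfrac1{\sigma^2}\mathbf g\mathbf Q\mathbf g^H)$ is continuous and $\mathcal G'$ is finite, I would pick $\mathbf Q'\in\mathcal Q_P$ with $\mathrm{tr}(\mathbf Q')<P$ \emph{strictly} and
\[
\inf_{\mathbf g\in\mathcal G'}\log\det\Big(\mathbf I_{N_R}+\tfrac1{\sigma^2}\mathbf g\mathbf Q'\mathbf g^H\Big)\ \geq\ \sup_{\mathbf Q\in\mathcal Q_P}\inf_{\mathbf g\in\mathcal G'}\log\det\Big(\mathbf I_{N_R}+\tfrac1{\sigma^2}\mathbf g\mathbf Q\mathbf g^H\Big)-\epsilon ,
\]
the strict trace inequality costing only another arbitrarily small amount by continuity and being needed later to meet the almost‑sure power constraint. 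Write $R_{\mathbf Q'}(\mathbf g)=\log\det(\mathbf I_{N_R}+\tfrac1{\sigma^2}\mathbf g\mathbf Q'\mathbf g^H)$ and $R'=\inf_{\mathbf g\in\mathcal G'}R_{\mathbf Q'}(\mathbf g)$.

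\textbf{Codebook, decoder, and single‑state error.} I would draw $M=\lceil 2^{n(R'-2\epsilon)}\rceil$ codewords $\mathbf T_1,\dots,\mathbf T_M$ i.i.d., each with columns sampled from $\mathcal N_{\mathbb C}(\mathbf 0,\mathbf Q')$. For $\mathbf g\in\mathcal G'$ let $q_{\mathbf g}$ be the product output law induced by this input through $W_{\mathbf g}$, i.e.\ the law of $\mathcal N_{\mathbb C}(\mathbf 0,\mathbf g\mathbf Q'\mathbf g^H+\sigma^2\mathbf I_{N_R})$ i.i.d.\ over $n$ symbols, and use $i_{\mathbf g}(\mathbf t^n,\mathbf z^n)=\log\frac{W_{\mathbf g}(\mathbf z^n\mid \mathbf t^n)}{q_{\mathbf g}(\mathbf z^n)}$, which factorizes over symbols by memorylessness. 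The decoder for state $\mathbf g$ outputs $\ell$ iff $\ell$ is the unique index with $i_{\mathbf g}(\mathbf T_\ell,\mathbf z^n)\geq n(R'-\epsilon)$; this is the Feinstein/threshold rule used in \cite{discretetimegaussian}. For the true codeword, $\tfrac1n i_{\mathbf g}(\mathbf T_\ell,\mathbf Z^n)$ is an average of i.i.d.\ terms with mean $R_{\mathbf Q'}(\mathbf g)\geq R'$, so the weak law of large numbers makes the probability of falling below the threshold vanish. For a competing $\mathbf T_{\ell'}$, which is independent of $\mathbf Z^n\sim q_{\mathbf g}$, a change‑of‑measure bound gives $\mathbb P[i_{\mathbf g}(\mathbf T_{\ell'},\mathbf Z^n)\geq n(R'-\epsilon)]\leq 2^{-n(R'-\epsilon)}$, and a union bound over the $M-1$ wrong messages contributes at most $M\,2^{-n(R'-\epsilon)}\leq 2^{-n\epsilon}\to 0$. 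Hence the average (over messages and codebook) error $\bar e_n(\mathbf g)\to 0$ for each $\mathbf g\in\mathcal G'$.

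\textbf{Combining states and expurgation.} Since $\mathcal G'$ has $m$ elements, $\tfrac1m\sum_{k=1}^m\bar e_n(\mathbf g_k)\to 0$. I would fix a codebook realization for which, simultaneously, $\tfrac1M\sum_\ell\tfrac1m\sum_k e_\ell(\mathbf g_k)\leq 2\cdot\tfrac1m\sum_k\bar e_n(\mathbf g_k)=:\delta_n\to0$ and the fraction of codewords violating $\tfrac1n\sum_i\bs{t}_{\ell,i}^H\bs{t}_{\ell,i}\leq P$ is at most $1/4$ (possible because $\mathrm{tr}(\mathbf Q')<P$ forces the empirical power to concentrate strictly below $P$, so this fraction vanishes in expectation). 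Discard the power‑violating codewords and then, by Markov's inequality, keep the half of the remainder with smallest $\tfrac1m\sum_k e_\ell(\mathbf g_k)$; each retained index then has $\max_{\mathbf g\in\mathcal G'}e_\ell(\mathbf g)\leq 8m\,\delta_n\to0$. The retained code has at least $M/4$ codewords, rate at least $R'-2\epsilon-\tfrac2n$, meets the power constraint surely, and has vanishing maximum error for every $\mathbf g\in\mathcal G'$. Letting $n\to\infty$, then $\epsilon\downarrow0$ and optimizing $\mathbf Q'$, yields achievability of $\sup_{\mathbf Q\in\mathcal Q_P}\inf_{\mathbf g\in\mathcal G'}\log\det(\mathbf I_{N_R}+\tfrac1{\sigma^2}\mathbf g\mathbf Q\mathbf g^H)$ for $\mathcal C'$.

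\textbf{Main obstacle.} At this finite stage the only real subtlety is the \emph{almost‑sure} power constraint, which rules out plain Gaussian codewords and forces the choice $\mathrm{tr}(\mathbf Q')<P$ together with the expurgation step; the multi‑state aspect is painless since a union bound over finitely many states suffices and a common encoder works automatically. The genuinely hard part of Theorem~\ref{achievratecompoundchannels} — a uniform‑continuity and finite‑net argument to pass from a finite $\mathcal G'$ to an arbitrary closed $\mathcal G\subseteq\mathcal G_a$ while keeping one encoder and a rate that is merely an infimum — is deferred to the subsequent subsection.
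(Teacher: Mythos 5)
Your argument is correct, but it takes a genuinely different route from the paper. The paper proves Theorem~\ref{capacityfinitestate} by invoking a compound-channel variant of Feinstein's lemma (their Lemma~\ref{existenceerror}, adapted from Blackwell--Breiman--Thomasian), which directly produces a \emph{maximum-error} code for all of $\mathcal{G}'$ at once with an explicit error bound of the form
$|\mathcal{G}'|\tau 2^{-\alpha}+|\mathcal{G}'|^2 2^{-\delta}+|\mathcal{G}'|\,\mbb P[\bs T^n\notin E_n]+\sum_{\mbf g}\mbb P[i_{\mbf g}\leq\alpha+\delta]$,
and then controls the last two terms via Chernoff bounds (Lemmas~\ref{abweichungmean} and~\ref{upperboundprobsum}), obtaining \emph{exponentially} small error. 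You instead build an average-error code by random coding with threshold decoding, prove single-state achievability, take a union over the finitely many states, and expurgate (twice: once for the almost-sure power constraint, once to pass from state-averaged error to max-over-states error, paying a factor $m=|\mathcal{G}'|$). Both routes are valid proofs of the statement, and you correctly identify the two real obstacles at this stage: the almost-sure power constraint (solved identically to the paper by taking $\mathrm{tr}(\mbf Q')<P$ strictly, cf.\ their Lemma~\ref{existence}) and the need for a single encoder across states.

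The one substantive trade-off to flag: you use the weak law of large numbers for the information-density concentration, which gives only $o(1)$ error with no rate, whereas the paper's Chernoff bound (Lemma~\ref{abweichungmean}) gives $e(\Gamma_n,\mbf g)\leq \mathrm{poly}(|\mathcal{G}'|)\cdot 2^{-nc}$ with $c>0$ independent of $|\mathcal{G}'|$. This difference is invisible in Theorem~\ref{capacityfinitestate} itself (since $m$ is fixed), but the paper deliberately extracts the exponential form because the proof of Theorem~\ref{achievratecompoundchannels} reuses \emph{exactly these displayed bounds}: passing from the net $\mathcal{G}'$ to the full $\mathcal{G}$ multiplies the error by a factor $2^{nc_2\mu}$ (their Lemma~\ref{approximation}), and one needs the finite-state error to decay exponentially with an exponent independent of the net fineness $\mu$ so that the product still vanishes when $\mu$ is taken small. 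Your WLLN-plus-expurgation bound $8m\delta_n$ with unquantified $\delta_n=o(1)$ would not survive this multiplication; the fix is mechanical (replace WLLN by the same Chernoff estimate), but as written your Theorem~\ref{capacityfinitestate} proof, though correct, is not a drop-in building block for the infinite-state step the way the paper's is.
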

\begin{proof}
In order to prove Theorem \ref{capacityfinitestate}, we introduce the following lemmas first.
\begin{lemma}{(Feinstein’s Lemma with Input Constraints: Lemma 1 in \cite{discretetimegaussian}\color{black})}\\~\\
Consider any channel  $W$. The input set  is denoted by $\mc T$ and the output set is denoted by $\mc Z$. The random input and output are denoted by $T$ and $Z,$ respectively. Let $p(t)$ and $q(z)$ be arbitrary input and output distribution, respectively. We further define $i(T,Z)=\log\frac{W(Z|T)}{q(Z)}.$ 
Then, for any integer $\tau \geq 1$, real number $\alpha>0$, and measurable subset $E$ of $\mc T$, there exists a code with cardinality $\tau$, maximum error probability $\epsilon$ and block-length $n=1$, whose codewords are contained in the set $E,$ where $\epsilon$ satisfies
$$\epsilon=\tau2^{-\alpha}+\mbb P\left[ i(T,Z)\leq \alpha\right]+\mbb P\left[T \notin E \right],        $$
where
\begin{align}
    \mbb P\left[T \notin E\right]=\int_{E^c}p(t) dt. \nonumber
\end{align}
and
\begin{align}
\mbb P\left[ i(T,Z)\leq \alpha\right]=\int_{i(t,z)\leq \alpha} W(z|t)p(t) dt dz. \nonumber
\end{align}
\end{lemma}
\begin{proof}
The proof is the same as the one for Theorem 2 in \cite{errorbound} or Lemma 8.2.1 in \cite{informationbook}.
\end{proof}
Now, by applying Lemma 2 in \cite{discretetimegaussian} to the $n$-extension channel for $E_n=\{ \bs{t}^n=(\bs{t}_1,\hdots,\bs{t}_n)\in \mbb C^{N_T\times n}: \frac{1}{n}\sum_{i=1}^{n}\lVert \bs{t}_i \rVert^2\leq P\},$ we obtain the following lemma:
\begin{lemma}
\label{existenceerror}
Let $\mathcal{G}^{\prime}$ be a finite subset of $\mathcal{G}$. Let $\mathcal{C}'$ be the compound channel defined as
$$\mathcal{C}'=\{W_{\mbf g}: \mbf g \in \mathcal{G}^{\prime}      \}$$
and let $p(\bs{t})$ be an input probability density function determining $p_{\mbf g}(\bs{z},\bs{t})= W_{\mbf g}(\bs{z}|\bs{t})p(\bs{t})$ and  $i_{\mbf g}(\bs{t},\bs{z})=\log\frac{W_{\mbf g}(\bs{z}|\bs{t})}{q(\bs{z})}$ with $q(\bs{z})$ being an output probability density function.
We denote the random input and output sequence by $\bs{T}^n$ and $\bs{Z}^n,$ respectively. Then for any real numbers $\alpha>0$, $\delta>0$, and any integer $\tau\geq 1$, there exists a code $\Gamma_n$ for $\mathcal{C}'$ with size $|\Gamma|=\tau$, block-length $n$ and with codewords contained in $E_n=\{ \bs{t}^n=(\bs{t}_1,\hdots,\bs{t}_n)\in \mbb C^{N_T\times n}: \frac{1}{n}\sum_{i=1}^{n} \lVert \bs{t}_i \rVert^2\leq P\}$ such that for all $\mbf g \in \mc G'$
\begin{align}
    e(\Gamma_n,\mbf g) \leq |\mathcal{G}^{\prime}|\tau2^{-\alpha}+|\mathcal{G}^{\prime}|^22^{-\delta}+|\mathcal{G}^{\prime}|\mbb P\left[\bs{T}^n\notin E_n\right]+\sum_{\mbf g \in \mathcal{G}^{\prime}}\mbb P\left[i_{\mbf g}(\bs{T}^n,\bs{Z}^n)\leq \alpha+\delta\right]. \nonumber
\end{align}
\end{lemma}
\begin{proof}
The proof is a simple modification of that of Lemma 3 in \cite{capacityofclassofchannels}. It is based on an application of Feinstein's lemma.
\end{proof}

\begin{lemma}
\label{abweichungmean} 
Let $W_{\mbf g}$ be a fixed channel with $\mbf g \in \mathbb{C}^{N_R\times N_T}$. Let $\bs{T}^n \in \mbb C^{N_T\times n}$ and $\bs{Z}^n \in \mbb C^{N_R\times n}$ be the random input and output sequence, respectively. We further assume that the $\bs{T}_is, i=1,\hdots,n$ are i.i.d., where each $\bs{T}_i \in \mbb C^{N_T}$ is Gaussian distributed with mean zero and with a non-singular covariance matrix $\mbf Q.$
Then for any $\delta>0$
\begin{align} 
    \mbb P\left[i_{\mbf g}(\bs{T}^n,\bs{Z}^n)\leq \mbb E\left[i_{\mbf g}(\bs{T}^n,\bs{Z}^n)\right]-n\delta\right] \leq 2^{\left\{-\frac{n N_R}{2\ln(2)}\left[\left(1+\frac{(\ln(2)\delta)^2}{N_{R}^2}\right)^\frac{1}{2}-1\right] \right\}}.  \nonumber 
\end{align}
\end{lemma}
\begin{proof}
Since $(\bs{T}_i,\bs{Z}_i), i=1, \hdots,n,$ are i.i.d., we introduce $(\bs{T},\bs{Z})$ such that $(\bs{T},\bs{Z})$ has the same joint distribution as each of the $(\bs{T}_i,\bs{Z}_i).$
Now 
\begin{align}
    \mbb E[i_{\mbf g}(\bs{T}^n, \bs{Z}^n)]&=n \mbb E\left[i_{\mbf g}(\bs{T},\bs{Z})\right] \nonumber \\
    &=n \log\det\left(\mbf I_{N_{R}}+\frac{1}{\sigma^2}\mbf g\mbf Q \mbf g^H\right). \nonumber
\end{align}
Let
\begin{align}
    \mbf \Theta=\mbf g \mbf Q \mbf g^{H}+\sigma^2\mbf I_{N_R} \nonumber
\end{align}
be the covariance matrix of $\bs{Z}.$
Here, $\mbf \Theta$ is positive definite and therefore non-singular since $\mbf Q$ is positive definite (non-singular covariance matrix). We further define
\begin{align}
   \phi_i= -\frac{1}{\sigma^2}\left(\bs{Z}_i-\mbf g \bs{T}_i\right)^H\left(\bs{Z}_i-\mbf g \bs{T}_i\right)+\bs{Z}_i^H\mbf \Theta^{-1}\bs{Z}_i.
    \nonumber 
\end{align}
Since the $\bs{\phi}_is$ are i.i.d., we define $\bs{\phi}$ to be a random variable with the same distribution as each of the $\bs{\phi}_i$ as follows:
\begin{align}
    \phi=-\frac{1}{\sigma^2}\left(\bs{Z}-\mbf g \bs{T}\right)^H\left(\bs{Z}-\mbf g \bs{T}\right)+\bs{Z}^H\mbf \Theta^{-1}\bs{Z}. \nonumber
\end{align}
Since  $ i_{\mbf g}(\bs{T}_i,\bs{Z}_i)=\log\det\left(\mbf I_{N_{R}}+\frac{1}{\sigma^2}\mbf g\mbf Q \mbf g^H\right)+\frac{\bs{\phi}_i}{\ln(2)}, i=1,\hdots,n,$ it follows that
\begin{align}
    \mbb P\left[i_{\mbf g}(\bs{T}^n,\bs{Z}^n)\leq \mbb E\left[i_{\mbf g}(\bs{T}^n,\bs{Z}^n )\right]-n\delta\right] 
   &=\mbb P\left[ \sum_{i=1}^{n}\frac{\phi_i}{\ln(2)} \leq -n\delta\right] \nonumber \\
   &=\mbb P\left[ -(\ln(2)n\delta+\sum_{i=1}^{n}\phi_i)\geq 0 \right] \nonumber \\
   &\leq \mbb E\left[\exp(-\beta(\ln(2)n\delta+\sum_{i=1}^{n}\bs{\phi}_i))\right] \nonumber \\
   &=\exp(-\beta n \ln(2) \delta)\mbb E\left[\exp(-n\beta\bs{\phi})\right]\quad \forall \beta\geq 0, \nonumber
\end{align}
where we used the Chernoff's bound.
Let $\zeta(\beta)=\mbb  E\left[ \exp(-\beta\phi)   \right]$ so that
\begin{align}
    \mbb P\left[i_{\mbf g}(\bs{T}^n,\bs{Z}^n)\leq \mbb E\left[i_{\mbf g}(\bs{T}^n,\bs{Z}^n)\right]-n\delta\right]\leq \left(\exp(-\ln(2)\beta\delta)\zeta(\beta)\right)^n.
    \label{probj}
\end{align}
In order to compute $\zeta(\beta),$ we introduce the Gaussian random vector $\bs{W}=[\bs{T},\bs{Z}]^T$ of dimension $N_{T}+N_{R}.$ Since $\bs{T}$ and $\bs{Z}$ have mean zero, $\bs{W}$ has also mean zero and its covariance matrix $\mbf O $ can be written as:
\begin{align}\mbf O=
\begin{pmatrix}
 \begin{matrix}
  \mbf Q & \mbf Q \mbf g^{H} \\
  \mbf g \mbf Q & \mbf \Theta
  \end{matrix}
  \end{pmatrix}. \nonumber
\end{align}
The block-matrix $\mbf O$ is non-singular, since $\mbf Q$ is non-singular and  the matrix $\mbf\Theta-\mbf g \mbf Q \mbf Q^{-1}\mbf Q\mbf g^{H}=\sigma^2\mbf I_{N_R}$ is non-singular.
We further define:
\begin{align}
    \mbf \Lambda=
    \begin{pmatrix}
    \begin{matrix}
    \mbf 0 & \mbf 0 \\
  \mbf 0 & \mbf \Theta^{-1}
    \end{matrix}
    \end{pmatrix} \nonumber
\end{align}
and 
\begin{align}
    \mbf \Phi=
    \begin{pmatrix}
    \begin{matrix}
      \frac{1}{\sigma^2}\mbf g^{H}\mbf g & -\frac{1}{\sigma^2}\mbf g^{H} \\
  -\frac{1}{\sigma^2}\mbf g & \frac{1}{\sigma^2}\mbf I_{N_{R}}
    \end{matrix}
    \end{pmatrix}. \nonumber
\end{align}
We can then write
\begin{align}
   \bs{\phi}=\bs{W}^{H}\bs{\Lambda}\bs{W}-\bs{W}^{H}\mbf \Phi\bs{W}. \nonumber
\end{align}
Indeed
\begin{align}
    \bs{W}^{H}\mbf \Lambda\bs{W}&=(\bs{T}^H\bs{Z}^H)      \begin{pmatrix}
    \begin{matrix}
    \mbf 0 & \mbf 0 \\
  \mbf 0 & \mbf \Theta^{-1}
    \end{matrix}
    \end{pmatrix}\begin{pmatrix}\begin{matrix}
      \bs{T} \\
  \bs{Z}
    \end{matrix}\end{pmatrix} \nonumber \\
    &=(\mbf 0 \ \bs{Z}^H\mbf \Theta^{-1})\begin{pmatrix}\begin{matrix}
      \bs{T} \\
  \bs{Z}
    \end{matrix}\end{pmatrix} \nonumber \\
    &=\bs{Z}^H\mbf \Theta^{-1}\bs{Z}, \nonumber
\end{align}
and
\begin{align}
    \bs{W}^{H}\mbf \Phi\bs{W}=
    &(\bs{T}^H\bs{Z}^H)        \begin{pmatrix}
    \begin{matrix}
      \frac{1}{\sigma^2}\mbf g^{H}\mbf g & -\frac{1}{\sigma^2}\mbf g^{H} \\
  -\frac{1}{\sigma^2}\mbf g & \frac{1}{\sigma^2}\mbf I_{N_{R}}
    \end{matrix}
    \end{pmatrix}\begin{pmatrix}\begin{matrix}
      \bs{T} \\
  \bs{Z}
    \end{matrix}\end{pmatrix}  \nonumber \\
    &=(\frac{1}{\sigma^2} \bs{T}^{H} \mbf g^{H} \mbf g -\frac{1}{\sigma^2}\bs{Z}^{H}\mbf g  \quad -\frac{1}{\sigma^2}\bs{T}^{H}\mbf g^{H}+\frac{1}{\sigma^2}\bs{Z}^{H})\begin{pmatrix}\begin{matrix}
      \bs{T} \\
  \bs{Z}
    \end{matrix}\end{pmatrix} \nonumber \\
    &=\frac{1}{\sigma^2}\left[\bs{T}^{H}\mbf g^{H} \mbf g \bs{T}-\bs{Z}^{H}\mbf g \bs{T}-\bs{T}^{H}\mbf g^{H} \bs{Z}+\bs{Z}^{H}\bs{Z} \right] \nonumber \\
    &=\frac{1}{\sigma^2}(\bs{Z}-\mbf g \bs{T})^{H}(\bs{Z}-\mbf g \bs{T}). \nonumber
\end{align}
Now, it follows that
\begin{align}
    \zeta(\beta)&= \mbb E\left[\exp(-\beta\bs{\phi})  \right]\nonumber \\
    &=\frac{\int\exp(-\bs{w}^H\mbf O^{-1}\bs{w})\times
    \exp\left[-\beta(\bs{w}^H\bs{\Lambda}\bs{w}-\bs{w}^H\bs{\Phi}\bs{w})\right]d\bs{w}}{\pi^{N_{T}+N_{R}}\det(\mbf O)},  \nonumber
\end{align}
where the integral is a $(N_{T}+N_{R})$-fold integral over $\mbb C^{N_{T}+N_{R}}.$
Let $\mbf  M(\beta)=\mbf O^{-1}+\beta(\Lambda-\Phi)  \in \mbb C^{(N_{T}+N_{R})\times (N_{T}+N_{R})}.$ Here, $\mbf M(\beta)$ is positive definite  for $\beta<\beta_0$ for some $\beta_0\geq 1.$
To prove this, we define
 $\mbf O_1=\mbf Q, \mbf O_2=\mbf Q\mbf g^{H}, \mbf O_3=\mbf g \mbf Q, \ \text{and} \ \mbf O_4=\mbf \Theta. $
Since $\mbf O_1$ is invertible and $\mbf O_4-\mbf O_3\mbf O_1^{-1}\mbf O_2=\sigma^2 \mbf I_{N_{R}}$ is also invertible, it follows by applying the inversion rule for the block-matrix $\mbf O$  that
\begin{align}\mbf O^{-1}=
\begin{pmatrix}
 \begin{matrix}
  \mbf Q^{-1}+\frac{1}{\sigma^2}\mbf g^{H}\mbf g & -\frac{1}{\sigma^2}\mbf g^{H} \\
  -\frac{1}{\sigma^2}\mbf g  & \frac{1}{\sigma^2}\mbf I_{N_{R}}
  \end{matrix}
  \end{pmatrix}. \nonumber
\end{align}
Now
\begin{align}\mbf M(\beta)=
\begin{pmatrix}
 \begin{matrix}
  \mbf Q^{-1}+\frac{1}{\sigma^2}(1-\beta)\mbf g^{H}\mbf g & -\frac{1}{\sigma^2}(1-\beta)\mbf g^{H} \\
  -\frac{1}{\sigma^2}(1-\beta)\mbf g  & \beta\mbf \Theta^{-1}+\frac{1}{\sigma^2}(1-\beta)\mbf I_{N_{R}}
  \end{matrix}
  \end{pmatrix} \quad \beta\geq 0.
  \nonumber \end{align}
  Notice that for $\beta\geq 0$ 
  $$ M(\beta)=\beta M(1)+(1-\beta)M(0) $$
  and that $\mbf M(0)$ and $\mbf M(1)$ are both positive definite. From the convexity of the set of positive definite Hermitian matrices, it follows for all $\beta\in(0,1)$ that
  $\beta \mbf M(1)+(1-\beta)\mbf M(0)$ is positive definite. This proves that $\mbf M(\beta)$ is positive definite for $\beta<\beta_0$ for some $\beta_0\geq 1.$

Now, we have
\begin{align}
    \zeta(\beta)&=\frac{1}{\pi^{N_{T}+N_{R}}\det(\mbf O)}\int \exp(-\bs{w}^H\mbf M(\beta) \bs{w}) d\bs{w} \nonumber \\
    &=\det(\mbf M(\beta) \mbf O)^{-1}. \nonumber
\end{align}
\color{black}Indeed, we write the SVD decomposition of the Hermitian positive definite matrix as $\mbf M(\beta)=\mbf U \mbf D \mbf U^{H}$, where $\mbf D$ contains the eigenvalues $\kappa_1,\hdots,\kappa_{N_{T}+N_{R}}$ of $\mbf M(\beta).$
It follows for $\bs{w}=(w_1,\hdots,w_{N_T+N_R})^{T}\in \mbb C^{N_T+N_R}$ that
\begin{align}
    \langle \bs{w},\mbf M(\beta)\bs{w} \rangle&= \langle \mbf U^{H}\bs{w},\mbf D \mbf U^{H}\bs{w}\rangle \nonumber \\
    &=\sum_{i=1}^{N_{T}+N_{R}}\kappa_i |\bs{w}_i|^2
    \nonumber
\end{align}
and that
\begin{align}
\int \exp(-\bs{w}^H\mbf M(\beta) \bs{w}) d\bs{w}&=\prod_{i=1}^{N_T+N_R}\int\exp\left(-\kappa_i|\bs{w}_i|^{2} \right) d\bs{w}_i  \nonumber \\
&= \prod_{i=1}^{N_T+N_R} \pi \frac{1}{\kappa_i} \nonumber \\
&=\pi^{N_T+N_R}\frac{1}{\det(\mbf M(\beta))}. \nonumber
\end{align}
\color{black}
By substituting $\mbf \Lambda,$ $\mbf \Phi$ and $\mbf O$, and by using the fact that
$\mbf \Theta=\mbf g \mbf Q \mbf g^H+\sigma^2 \mbf I_{N_{R}},$ we obtain
\begin{align}
        \mbf M(\beta)\mbf O
        &= \mbf I_{N_{T}+ N_{R}}+\beta(\mbf \Lambda-\mbf \Phi)\mbf O \nonumber \\
        &= \mbf I_{N_{T}+ N_{R}}+\beta  \begin{pmatrix}
    \begin{matrix}
       \frac{-1}{\sigma^2}\mbf g^{H} \mbf g& \frac{1}{\sigma^2}\mbf g^{H} \\
  \frac{1}{\sigma^2}\mbf g&\mbf \Theta^{-1}-\frac{1}{\sigma^2}\mbf I_{N_{R}}
    \end{matrix}
    \end{pmatrix}\begin{pmatrix}
 \begin{matrix}
  \mbf Q & \mbf Q \mbf g^{H} \\
  \mbf g \mbf Q & \mbf \Theta
  \end{matrix}
  \end{pmatrix}
  \nonumber \\
  &=\mbf I_{N_{T}+ N_{R}} +\beta \begin{pmatrix}
    \begin{matrix}
       \frac{-1}{\sigma^2}\mbf g^{H} \mbf g \mbf Q+\frac{1}{\sigma^2}\mbf g^{H}\mbf g \mbf Q& -\frac{1}{\sigma^2}\mbf g^{H}\mbf g \mbf Q \mbf g^{H}+\frac{1}{\sigma^2}\mbf g^{H}\mbf \Theta \\
  \frac{1}{\sigma^2}\mbf g\mbf Q+\mbf \Theta^{-1}\mbf g\mbf Q-\frac{1}{\sigma^2}\mbf g \bf Q&\frac{1}{\sigma^2}\mbf g \mbf Q \mbf g^{H}+\mbf I_{N_{R}}-\frac{1}{\sigma^2}\mbf \Theta
    \end{matrix}
    \end{pmatrix} \nonumber \\
    &=\begin{pmatrix}
    \begin{matrix}
      \mbf I_{N_{T}} & \beta \mbf g^{H} \\
  \beta \mbf \Theta^{-1} \mbf g \mbf Q & \mbf I_{N_{R}}
    \end{matrix}
    \end{pmatrix}. \nonumber
\end{align}
As a result, we obtain using the determinant rule for block-matrices
\begin{align}
    \det(\mbf M(\beta) \mbf O)
    &=  \det(\mbf I_{N_{R}}-\beta^2 \mbf \Theta^{-1}\mbf g \mbf Q \mbf g^{H}) \nonumber \\
    &=\det(\mbf \Theta^{-1})\det(\mbf \Theta-\beta^2 \mbf g \mbf Q \mbf g^{H}) \nonumber \\
    &=\det(\mbf \Theta^{-1})\det(\sigma^2\mbf I_{N_{R}}+(1-\beta^2) \mbf g \mbf Q \mbf g^{H}) \nonumber \\
    &=\sigma^{2N_{R}}\frac{\det(\mbf I_{N_{R}}+(1-\beta^2) \frac{1}{\sigma^2}\mbf g \mbf Q \mbf g^{H})}{\det(\mbf \Theta)}, \nonumber 
\end{align}
where
\begin{align}
    \det(\mbf \Theta)=\sigma^{2N_{R}}\det(\mbf I_{N_{R}}+ \frac{1}{\sigma^2}\mbf g \mbf Q \mbf g^{H}). \nonumber
\end{align}
We define $\lambda_1,\hdots, \lambda_{N_{R}}$ to be the eigenvalues of the positive definite matrix $\frac{1}{\sigma^2}\mbf g \mbf Q \mbf g^{H}.$
Then it holds that
\begin{align}
    \det(\mbf I_{N_{R}}+ \frac{1}{\sigma^2}\mbf g \mbf Q \mbf g^{H})=\prod_{i=1}^{N_{R}} (1+\lambda_i) \nonumber
\end{align}
and
\begin{align}
    \det(\mbf I_{N_{R}}+(1-\beta^2) \frac{1}{\sigma^2}\mbf g \mbf Q \mbf g^{H}) =\prod_{i=1}^{N_{R}} (1+(1-\beta^2)\lambda_i). \nonumber
\end{align}
This yields
\begin{align}
    \det(\mbf M(\beta)\mbf O)=\prod_{i=1}^{N_{R}} \frac{1+(1-\beta^2)\lambda_i}{1+\lambda_i}
    &=\prod_{i=1}^{N_{R}}\left(1-\beta^2\frac{\lambda_i}{1+\lambda_i}\right) \nonumber
\end{align}
such that 
\begin{align}
    \zeta(\beta)= \prod_{i=1}^{N_{R}}\left(1-\beta^2\frac{\lambda_i}{1+\lambda_i}\right)^{-1} \quad 0\leq \beta <\beta_0. \nonumber
\end{align}
Then, we have
\begin{align}
    \zeta(\beta)\leq \frac{1}{(1-\beta^2)^{N_{R}}} \quad 0 \leq \beta <\beta_0 \nonumber
\end{align}
and hence
\begin{align}
    \left(\exp(-\ln(2)\delta\beta)\zeta(\beta)\right)^{\frac{1}{N_R}} \leq \frac{\exp(-\frac{\ln(2)\delta\beta}{N_R})}{1-\beta^2} \quad 0\leq \beta <\beta_0. \nonumber
\end{align}
Now if we put
\begin{align}
    \beta=\frac{N_R}{\ln(2)\delta}\left[ -1+\left(1+\frac{(\ln(2)\delta)^2}{N_R^2} \right)^{\frac{1}{2}}\right], \nonumber
\end{align}
it follows that $0<\beta<1$ and it holds that
\begin{align}
    \exp(-\frac{\ln(2)\delta\beta}{N_{R}})=\exp\left(-´\left[-1+\left(1+\frac{(\ln(2)\delta)^2}{N_R^2}     \right)^{1/2}     \right] \right)   \nonumber
\end{align}
and that
\begin{align}
\frac{1}{1-\beta^2}&=\frac{1}{1-\left(\frac{N_R}{\ln(2)\delta}\right)^2\left[1-2\sqrt{1+\left(\frac{\ln(2)\delta}{N_R}\right)^2}+1+\left(\frac{\ln(2)\delta}{N_R}\right)^2\right]} \nonumber \\
&=\frac{1}{2}\frac{\left(\frac{\ln(2)\delta}{N_R}\right)^2}{\sqrt{1+\left(\frac{\ln(2)\delta}{N_R}\right)^2}-1} \nonumber \\
&=\frac{1}{2}\left(\sqrt{1+\left(\frac{\ln(2)\delta}{N_{R}}\right)^2}+1\right) \nonumber \\
&=\left(1+\frac{1}{2}\left[-1+\left(1+\frac{(  \ln(2)\delta)^2}{N_R^2}     \right)^{1/2}     \right]        \right). \nonumber
\end{align}
This implies that
\begin{align}
    (1-\beta^2)^{-1} \exp(-\frac{\ln(2)\delta\beta}{N_R})=\left(1+\frac{1}{2}\left[-1+\left(1+\frac{(\ln(2)\delta)^2}{N_{R}^2}     \right)^{1/2}     \right]        \right) \exp\left(-\left[-1+\left(1+\frac{(\ln(2)\delta)^2}{N_{R}^2}     \right)^{1/2}     \right]  \right).    \nonumber 
\end{align}
Since $(1+\frac{1}{2}x)\exp(-x)\leq \exp(-\frac{x}{2}) \ \text{for} \ x\geq 0,$ we have
\begin{align}
\exp(-\ln(2)\delta\beta) \zeta(\beta) \leq \exp\left(-\frac{N_R}{2}\left[\left(1+\frac{(\ln(2)\delta)^2}{N_R^2}     \right)^{1/2}-1     \right]\right). \nonumber
\end{align}
It follows from \eqref{probj} that
\begin{align}
   \mbb P\left[i_{\mbf g}(\bs{T}^n,\bs{Z}^n)\leq \mbb E\left[i_{\mbf g}(\bs{T}^n,\bs{Z}^n)\right]-n\delta\right] &\leq \exp\left(-\frac{nN_R}{2}\left[\left(1+\frac{(\ln(2)\delta)^2}{N_R^2}     \right)^{1/2}-1     \right]\right) \nonumber \\
   &= 2^{\left\{-\frac{n N_R}{2\ln(2)}\left[\left(1+\frac{(\ln(2)\delta)^2}{N_{R}^2}\right)^\frac{1}{2}-1\right] \right\}}. \nonumber 
\end{align}
This completes the proof of the lemma.
\end{proof}
\begin{lemma}
\label{upperboundprobsum}
Let $\bs{X}_i,$ $i=1, \hdots,n$ be i.i.d. $N$-dimensional complex Gaussian random vectors with mean zero and covariance matrix $\mbf O$ whose trace is smaller than or equal to $M$.
Then, for any $\delta>0$
\begin{align}
    \mbb P\left[\sum_{i=1}^{n} \lVert \bs{X}_i\rVert^2\geq n(M+\delta)\right]\leq \left[(1+\frac{\delta}{M})2^{-\frac{\delta}{\ln(2)M}}\right]^{n}, \nonumber
\end{align}
where 
\begin{align}
    \lVert \bs{X}_i \rVert^2=\sum_{j=1}^{N} |\bs{X}_i^j|^2 \nonumber
\end{align}
and
\begin{align}
    \bs{X}_i=(\bs{X}_i^1,\hdots,\bs{X}_i^N)^{T}. \nonumber
\end{align}
\end{lemma}
\begin{proof}
Let $\bs{X}$ be a random vector with the same distribution as each of the $\bs{X}_i$. Then
\begin{align}
\mbb P\left[ \sum_{i=1}^{n} \lVert \bs{X}_i\rVert^2 \geq n(M+\delta)      \right] 
&=\mbb P\left[ \sum_{i=1}^{n}\lVert \bs{X}_i^2\rVert-n(M+\delta)\geq 0       \right] \nonumber \\
    &\leq \mbb E \left[\exp\left(\beta\left(\sum_{i=1}^{n}\lVert \bs{X}_{i}\rVert^2-n(M+\delta\right)    \right)            \right]\nonumber \\
    &=\left[\exp(-[M+\delta]\beta)\mbb E\left[ \exp(\beta \lVert \bs{X} \rVert^2              \right]\right]^n,
    \label{eqprob}
\end{align}
where we used the $\bs{X}_is$ are i.i.d..
By a standard calculation which follows below, one can show that
\begin{align}
    \mbb E\left[ \exp(\beta \lVert \bs{X}\rVert^2    \right]&=\mbb E \left[\exp(\beta\bs{X}^H\bs{X})           \right] \nonumber \\
    &=\prod_{j=1}^{N}(1-\beta\mu_j)^{-1} \quad \beta<\beta_0, \nonumber
\end{align}
where $\mu_1,\hdots,\mu_{N}$ are eigenvalues of $\mbf O$, and for $\beta_0=\frac{1}{M}\leq \frac{1}{\mu_1+\hdots+\mu_N}<\underset{j\in \{1,\hdots,N\}}{\min}\frac{1}{\mu_j}$  so that all the factors are positive, whether $\mbf O$ is non-singular or singular.
To prove this, we let $r$ be the rank of $\mbf O.$ It holds that $r\leq N$. We make use of the spectral decomposition theorem to express $\mbf O$ as $\mbf S_{\mbf O}^{\star} \Lambda^{\star}{S_{\mbf O}^{\star}}^{H} $, where $\Lambda^{\star}$ is a diagonal matrix whose first $r$ diagonal elements are positive and where the remaining diagonal elements are equal to zero.
Next, we let $\mbf V^{\star}=\mbf S_{\mbf O}^{\star} {\Lambda^{\star}}^{\frac{1}{2}}$ and remove the $N-r$ last columns of $\mbf V^{\star}$, which are null vectors to obtain the matrix $\mbf V.$ Then, it can be verified that $\mbf O=\mbf V \mbf V^{H}.$
We can write $\bs{X}=\mbf  V \bs{U}^\star$
where $\bs{U}^\star \sim\mathcal{N}_{\mbb C}(\bs{0},\mbf I_{r}).$
As a result:
\begin{align}
    \bs{X}^H\bs{X}={(\bs{U}^\star)}^{H}\mbf V^{H}\mbf V \bs{U}^\star. \nonumber
\end{align}
Let $\mbf S$ be a unitary matrix which diagonalizes $\mbf V^{H} \mbf V$ such that $\mbf S^{H} \mbf V^{H} \mbf V \mbf S= \text{Diag}(\mu_1,\hdots,\mu_r)$ with $\mu_1,\hdots,\mu_r$ being the positive eigenvalues of $\mbf O=\mbf V \mbf V^{H}$ in decreasing order.
One defines $\bs{U}=\mbf S^{H} \bs{U}^{\star}.$ We have
\begin{align}
    \text{cov}(\bs{U})&=\mbf S^{H} \text{cov}(\bs{U}^{\star}) \mbf S \nonumber \\
    &=\mbf S^{H} \mbf S \nonumber \\
    &=\mbf I_{r}. \nonumber
\end{align}
Therefore, it holds that $\bs{U} \sim \mathcal{N}(\bs{0},\mbf I_r).$
Since $\mbf S$ is unitary, it follows that
\begin{align}
    \bs{X}^{H}\bs{X}&=\left((\mbf S^{H})^{-1}\bs{U}\right)^{H} \mbf V^{H} \mbf V (\mbf S^{H})^{-1} \bs{U} \nonumber \\
    &=\bs{U}^{H}\mbf S^{H} \mbf V^{H} \mbf V \mbf S \bs{U} \nonumber \\
    &=\bs{U}^{H}\text{Diag}(\mu_1,\hdots,\mu_r)\bs{U} \nonumber \\
    &=\sum_{j=1}^{r} \mu_j |\bs{U}_{j}|^{2}.\nonumber
\end{align}
Then, we have
\begin{align}
    \mbb E\left[ \exp(\beta \lVert \bs{X}\rVert^2)    \right] &= \mbb E \left[\prod_{j=1}^{r}\exp(\frac{1}{2}\beta\mu_j 2|\bs{U}_j|^2  )             \right] \nonumber \\
    \nonumber \\
    &=\prod_{j=1}^{r} \mbb E \left[ \exp(\frac{1}{2}\beta\mu_j 2|\bs{U}_j|^2  )       \right] \nonumber \\
    &=\prod_{j=1}^{N} (1-\beta\mu_j)^{-1},             \nonumber 
\end{align}
where we used that all the $\bs{U}_j$'s are independent, that $ \forall j \in \{1,\hdots,r\},  2|\bs{U}_{j}|^2$ is chi-square distributed with $k=2$ degrees of freedom and with moment generating function equal to $\mbb E\left[\exp(2t|\bs{U}_j|^2)\right]=(1-2t)^{-k/2}$ for $t<\frac{1}{2}$ and that $\forall j \in \{1,\hdots,r\}$ and for $\beta<\beta_0,$ $\frac{1}{2}\beta\mu_j<\frac{1}{2}$. This completes the standard calculation.

Now, it holds that
$$ \prod_{i=1}^{N}(1-\beta\mu_i)\geq 1-\beta(\mu_1+\hdots+\mu_{N})\geq 1-\beta M.         $$ This yields
\begin{align}
    &\exp(-(M+\delta)\beta)\mbb E\left[ \exp(\beta \lVert \bs{X}\rVert^2    \right] \leq \frac{\exp(-(M+\delta)\beta)}{1-\beta M}, \nonumber
\end{align}
where $0<\beta<\frac{1}{M}=\beta_0.$
Putting $\beta=\frac{\delta}{M(\delta+M)}<\frac{1}{M}$ yields
\begin{align}
    \exp(-(M+\delta)\beta)\mbb E\left[ \exp(\beta \lVert \bs{X}\rVert^2    \right]&\leq (1+\frac{\delta}{M})\exp(-\frac{\delta}{M})\nonumber \\
    &=(1+\frac{\delta}{M})2^{(-\frac{\delta}{\ln(2)M})},
\nonumber \end{align}
which combined with \eqref{eqprob} proves the lemma.
\end{proof}
\begin{lemma}
\label{existence} 
Let $\epsilon>0$ be fixed arbitrarily. Then, there exists a non-singular $\mbf Q \in \mathcal{Q}_{P}$ such that
\begin{enumerate}
    \item $\mathrm{tr}(\mbf Q)<P$
    \item$\log\det(\mbf I_{N_R}+\mbf g \mbf Q \mbf g^H)
    \geq \underset{\mbf Q \in \mathcal{Q}_{P}}{\sup}\underset{\mbf g \in \mathcal{G}}{\inf}\log\det(\mbf I_{N_{R}}+\frac{1}{\sigma^2}\mbf g \mbf Q \mbf g^H)-\epsilon \quad \text{for all} \ \mbf g \in \mathcal{G}$
    with $\mc G$ being any closed subset of $\mc G_a.$
\end{enumerate}
\end{lemma}
\begin{proof}
Notice first that the set  $\mathcal{L}=\mc G \times \mathcal{Q}_P$ is a  compact set, because the conditions on the matrices $\mbf g \in \mathcal{G}\subset \mc G_a$, on the positive semi-definite $\mbf Q$ guarantee that $\mc L$ is bounded and closed in $\mbb C^{N_{R}\times N_{T}} \times \mbb C^{N_{T}\times N_{T}}$. Now the function $f(\mbf g,\mbf Q)=\log\det(\mbf I_{N_{R}}+\frac{1}{\sigma^2}\mbf g \mbf Q \mbf  g^H)$ is uniformly continuous on $\mathcal{L}$ \color{black}. One can find a non-singular $\mbf Q_{0} \in \mathcal{Q}_P$ such that
\begin{align}
& \log\det(I_{N_R}+\frac{1}{\sigma^2}\mbf g \mbf Q_{0} \mbf g^H) \geq \underset{\mbf Q\in \mathcal{Q}_P}{\sup}\underset{\mbf g \in \mathcal{G}}{\inf}  \log\det(I_{N_R}+\frac{1}{\sigma^2}\mbf g \mbf Q \mbf g^H) -\frac{\epsilon}{2} \quad \forall \mbf g \in \mathcal{G}.\nonumber
\end{align}
If $\text{tr}(\mbf Q_0)<P$, there is nothing to prove. If $\text{tr}(\mbf Q_0)=P$, we can find, by the uniform continuity of $f$ on $\mathcal{L}$, a number $\delta>0$ such that
$\lvert f(\mbf g,\mbf Q)-f(\mbf g,\mbf Q_0)\rvert\leq \frac{\epsilon}{2}$ for all $\mbf g$ if $\lVert  \mbf Q- \mbf Q_0 \rVert \leq \delta$. We can then change $\mbf Q_0$ into a non-singular $\mbf Q_1$ in such a way that $\lVert \mbf Q_{1}-\mbf Q_{0} \rVert \leq \delta$ and
$\text{tr}(\mbf Q_{1})<\text{tr}(\mbf Q_{0})=P$.
 $\mbf Q_{1}$ satisfies the conditions of the lemma. This completes the proof of the lemma.
\end{proof}
Now that we proved the lemmas, we fix $R$ to be any positive number less than $\underset{\mbf Q \in \mathcal{Q}_{P}}{\sup}\underset{\mbf g\in\mathcal{G}^{\prime}}{\inf} \log\det(\mathbf{I}_{N_{R}}+\frac{1}{\sigma^2}\mathbf{g}\mathbf{Q}\mathbf{g}^{H})$ and put $2\theta=\underset{\mbf Q \in \mathcal{Q}_{P}}{\sup}\underset{\mbf g\in\mathcal{G}^{\prime}}{\inf} \log\det(\mathbf{I}_{N_{R}}+\frac{1}{\sigma^2}\mathbf{g}\mathbf{Q}\mathbf{g}^{H})-R.$\\~\\
By Lemma \ref{existence}, \color{black} one can find a non-singular $\mbf Q_1 \in \mathcal{Q}_{P}$ \color{black} such that $\text{tr}(\mbf Q_{1})=P-\beta, \quad \beta>0$ and
\begin{align}
\mbb E[i_{\mbf g}(\bs{T},\bs{Z})]=\log\det(\mbf I_{N_{R}}+\frac{1}{\sigma^2}\mbf g \mbf Q_{1} \mbf g^{H})\geq R+\frac{\theta}{2} \quad \forall \mbf g \in \mathcal{G}^{\prime},
\label{meanJ}
\end{align}
where $\bs{T}$ and $\bs{Z}$ represent the random input and output of $W_\mbf g,$ respectively.
 Let $E_n$ be the set of all input sequences $\bs{t}^n$ satisfying $\sum_{i=1}^{n}\lVert \bs{t}_i\rVert^2\leq nP.$   $ \text{For any} \ \mbf g \in \mc G',$  we define 
$\bs{T}_i,i=1, \hdots,n,$ be the i.i.d. random inputs of $ W_\mbf g,$ each with mean $\mbf 0_{N_T}$ and covariance matrix $\mbf Q_1$. Let $\hat{P}=P-\beta$ and $\hat{\beta}=\frac{\beta}{\ln(2)\hat{P}}-\log(1+\frac{\beta}{\hat{P}})>0$.
Then, by Lemma \ref{upperboundprobsum}, it holds that
\begin{align}
    \mbb P \left[\sum_{i=1}^{n}\lVert \bs{T}_i\rVert^2 \geq n(\hat{P}+\beta)  \right] &\leq \left[(1+\frac{\beta}{\hat{P}})2^{(-\frac{\beta}{\ln(2)\hat{P}})}\right]^{n} \nonumber \\
    &=2^{\left(-n\frac{\beta}{\ln(2)\hat{P}}+n\log(1+\frac{\beta}{\hat{P}})\right)} \nonumber \\
    &= 2^{-n\hat{\beta}}.\nonumber
\end{align}

As a result, we have
\begin{align}
    \mbb P\left[ \bs{T}^n \notin E_n\right]&= \mbb P \left[\sum_{i=1}^{n}\lVert \bs{T}_i\rVert^2 > nP \right]\nonumber \\
    &\leq \mbb P \left[ \sum_{i=1}^{n}\lVert \bs{T}_i\rVert^2 \geq  nP \right]\nonumber \\
    &=  \mbb P \left[\sum_{i=1}^{n}\lVert \bs{T}_i\rVert^2 \geq n(\hat{P}+\beta)  \right] \nonumber \\
    &\leq 2^{-n\hat{\beta}}.\nonumber
\end{align}

Now define $\tau=\lfloor2^{nR}\rfloor$, $\alpha=n(R+\frac{\theta}{8})$ and $\delta=\frac{n\theta}{8}.$
It follows from Lemma \ref{existenceerror} that there exists a code $\Gamma_n$  for $\mathcal{C}'$ with size $\lvert \Gamma_n \rvert=\tau$ and block-length $n$ such that for all $\mbf g \in \mc G'$ 
\begin{align}
    e(\Gamma_n,\mbf g)
    &\leq |\mathcal{G}^{\prime}|2^{nR}2^{-n(R+\frac{\theta}{8})}+|\mathcal{G}^{\prime}|^22^{-n\frac{\theta}{8}}+|\mathcal{G}^{\prime}|2^{-n\hat{\beta}}+\sum_{\mbf g\in\mathcal{G}^{\prime}} \mbb P\left[i_{\mbf g}(\bs{T}^n,\bs{Z}^n)\leq n(R+\frac{\theta}{4})\right].
    \label{epsilonn}
\end{align}
Since $\mbb E\left[ i_{\mbf g}(\bs{T}^n,\bs{Z}^n)\right]=n\mbb E\left[i_{\mbf g}(\bs{T},\bs{Z})\right], $ it follows from $\eqref{meanJ}$ using Lemma \ref{abweichungmean} that
\begin{align}
    \mbb P\left[i_{\mbf g}(\bs{T}^n,\bs{Z}^n)\leq n(R+\frac{\theta}{4})\right] 
    &=\mbb P\left[i_{\mbf g}(\bs{T}^n,\bs{Z}^n)\leq n(R+\frac{\theta}{2})-n\frac{\theta}{4}\right] \nonumber \\
    &\leq \mbb P\left[i_{\mbf g}(\bs{T}^n,\bs{Z}^n)\leq \mbb E\left[ i_{\mbf g}(\bs{T}^n,\bs{Z}^n)\right]-n\frac{\theta}{4}\right] \nonumber \\
    &\leq 2^{-\frac{nN_{R}}{2\ln(2)}\left[\left(1+\frac{(\ln(2)\theta)^2}{(4N_{R})^2}\right)^{\frac{1}{2}}-1\right]}. \nonumber
\end{align}
Then, it follows using \eqref{epsilonn} that for all $\mbf g \in \mc G'$
\begin{align}
      e(\Gamma_n,\mbf g) &\leq (\lvert \mathcal{G}^{\prime}\rvert+\lvert \mathcal{G}^{\prime} \rvert^2)2^{-\frac{n\theta}{8}}+\lvert \mathcal{G}^{\prime}\rvert 2^{-n\hat{\beta}}  +\lvert \mathcal{G}^{\prime} \rvert 2^{-\frac{nN_{R}}{2\ln(2)}\left[\left(1+\frac{(\ln(2)\theta)^2}{(4N_{R})^2}\right)^{\frac{1}{2}}-1\right]}.\nonumber
\end{align}
The last upper-bound is exponentially small for sufficiently large $n$. Since $R$ is any number less than $\underset{\mbf Q \in \mathcal{Q}_{P}}{\sup}\underset{\mbf g\in\mathcal{G}^{\prime}}{\inf} \log\det(\mathbf{I}_{N_{R}}+\frac{1}{\sigma^2}\mathbf{g}\mathbf{Q}\mathbf{g}^{H}),$  Theorem \ref{capacityfinitestate} is proved.
\end{proof} 
\subsection{Extension of the Proof for the set $\mc G$}
Now, we proceed with the proof of Theorem \ref{achievratecompoundchannels}.
In the proof, we will make use of  Theorem \ref{capacityfinitestate}. We will additionally use the following lemmas:

\begin{lemma}
\label{approximation}
Let $W_{\mbf g}$ and $W_{\hat{\mbf g}} $ be two channels such that $\mbf g,\hat{\mbf g} \in \mathcal{G}$ and let $\bs{t}^n$ be an input
$n$-sequence of vectors $\bs{t}_i$ such that $\frac{1}{n}\sum_{i=1}^{n}\lVert \bs{t}_i\rVert^2\leq P$ and let $\bs{z}^n$ be an output $n$-sequence of vectors $\bs{z}_i$ such that $\frac{1}{n} \sum_{i=1}^{n}\lVert \bs{z}_{i} \rVert^2\leq \rho, \  \rho>0.$
Then, it holds that 
\begin{align}
\frac{W_{\mbf g}(\bs{z}^n|\bs{t}^n)}{ W_{\hat{\mbf g}}(\bs{z}^n|\bs{t}^n)}\leq 2^{\frac{2n}{\ln(2)\sigma^2}\left[\sqrt{P\rho}+aP\right]\lVert \mbf{g}-\hat{\mbf g}\rVert}.\nonumber
\end{align}
\begin{proof}
$\forall i \in \{1,\hdots,n\},$ we have
\begin{align}
&\frac{W_{\mbf g}(\bs{z}_i|\bs{t}_i)}{ W_{\hat{\mbf g}}(\bs{z}_i|\bs{t}_i)} =\exp\left(-\frac{1}{\sigma^2}\left[(\bs{z}_i-\mbf g \bs{t}_i)^{H}(\bs{z}_i-\mbf g \bs{t}_i)-(\bs{z}_i- \hat{\mbf g} \bs{t}_i)^{H}(\bs{z}_i-\hat{\mbf g} \bs{t}_i)\right]\right), \nonumber
\end{align}
where
\begin{align}
   & -´\frac{1}{\sigma^2}\left[ \ \left(\bs{z}_i-\mbf g \bs{t}_i\right)^{H}\left(\bs{z}_i-\mbf g \bs{t}_i\right)-\left(\bs{z}_i- \hat{\mbf g} \bs{t}_i\right)^{H}\left(\bs{z}_i-\hat{\mbf g} \bs{t}_i\right) \right] \nonumber \\
    & \leq´\frac{1}{\sigma^2}\left| \left(\bs{z}_i-\mbf g \bs{t}_i\right)^{H}\left(\bs{z}_i-\mbf g \bs{t}_i\right)-\left(\bs{z}_i- \hat{\mbf g} \bs{t}_i\right)^{H}\left(\bs{z}_i-\hat{\mbf g} \bs{t}_i\right)\right| \nonumber \\
    &=\frac{1}{\sigma^2}\left| \bs{z}_i^{H}\left(\hat{\mbf g}-\mbf g\right)\bs{t}_i+\left(\bs{z}_i^{H}\left(\hat{\mbf g}-\mbf g\right)\bs{t}_i\right)^{H}      +\lVert \mbf g \bs{t}_i \rVert^2-\lVert \hat{\mbf g} \bs{t}_i \rVert^2 \right| \nonumber \\
    &\leq \frac{1}{\sigma^2}  \left|  \bs{z}_i^{H}\left(\hat{\mbf g}-\mbf g\right)\bs{t}_i+\left(\bs{z}_i^{H}\left(\hat{\mbf g}-\mbf g\right)\bs{t}_i\right)^{H} \right| +\frac{1}{\sigma^2} \left| \ \lVert \mbf g \bs{t}_i \rVert^2-\lVert \hat{\mbf g} \bs{t}_i \rVert^2 \right|             \nonumber\\
    &\leq \frac{1}{\sigma^2} \left[ 2\lVert \hat{\mbf g}-\mbf g \rVert \lVert \bs{t}_i\rVert  \lVert \bs{z}_i \rVert  +\left| \lVert \mbf g \bs{t}_i \rVert^2-\lVert \hat{\mbf g} \bs{t}_i \rVert^2  \right|   \right] \nonumber \\
    &=  \frac{2}{\sigma^2} \lVert \hat{\mbf g}-\mbf g \rVert \lVert \bs{t}_i\rVert  \lVert \bs{z}_i \rVert  +\frac{1}{\sigma^2} \left|\lVert \mbf g \bs{t}_i \rVert-\lVert \hat{\mbf g} \bs{t}_i \rVert  \right|\left(\lVert \mbf g \bs{t}_i \rVert+\lVert \hat{\mbf g} \bs{t}_i \rVert \right) \nonumber \\
    &\leq\frac{1}{\sigma^2}\left[ 2\lVert \hat{\mbf g}-\mbf g \rVert  \lVert \bs{t}_i\rVert  \lVert \bs{z}_i \rVert+\lVert\left(\mbf g - \mbf {\hat{g}}\right)\bs{t}_i\rVert  \left(\lVert \mbf g \bs{t}_i \rVert+\lVert \hat{\mbf g} \bs{t}_i \rVert\right)  \right]               \nonumber \\
    &\leq \frac{1}{\sigma^2}\left[ 2\lVert \hat{\mbf g}-\mbf g \rVert \lVert \bs{t}_i\rVert  \lVert \bs{z}_i \rVert+\lVert\left(\mbf g - \mbf {\hat{g}}\right)\bs{t}_i\rVert  \left(\lVert \mbf g \rVert \lVert \bs{t}_i \rVert+\lVert \hat{\mbf g} \rVert \lVert \bs{t}_i \rVert\right)  \right]               \nonumber \\
    & \leq  \frac{1}{\sigma^2} \left[  2\lVert \hat{\mbf g}-\mbf g \rVert  \lVert \bs{t}_i\rVert  \lVert \bs{z}_i \rVert +2 a \lVert \bs{t}_i \rVert  \lVert \hat{\mbf g} -\mbf g  \rVert  \lVert \bs{t}_i \rVert \right] \nonumber \\
    &=\frac{2}{\sigma^2}\lVert \hat{\mbf g} - \mbf g  \rVert \left[ \lVert \bs{t}_i\rVert \lVert \bs{z}_i \rVert + a\lVert \bs{t}_i \rVert^2   \right],\nonumber
\end{align}
where we used that $\lVert \mbf g \rVert \leq a$ and $\lVert \hat{\mbf g} \rVert \leq a$  for $\mbf g,\hat{\mbf g} \in \mathcal{G}\subset\mc G_a.$ 

Now
\begin{align}
    \frac{W_{\mbf g}(\bs{z}^n|\bs{t}^n)}{ W_{\hat{\mbf g}}(\bs{z}^n|\bs{t}^n)}&\overset{(a)}{=}\prod_{i=1}^{n}\frac{W_{\mbf g}(\bs{z}_i|\bs{t}_i)}{ W_{\hat{\mbf g}}(\bs{z}_i|\bs{t}_i)} \nonumber\\
    &\leq\exp\left(\sum_{i=1}^{n}\frac{2}{\sigma^2}\lVert \hat{\mbf g} - \mbf g  \rVert\left[  \lVert \bs{t}_i\rVert \lVert \bs{z}_i \rVert + a\lVert \bs{t}_i \rVert^2        \right]\right) \nonumber \\
    &=\exp\left(\frac{2}{\sigma^2}\lVert \hat{\mbf g} - \mbf g  \rVert\left[ \sum_{i=1}^{n} \lVert \bs{t}_i\rVert \lVert \bs{z}_i \rVert + a \sum_{i=1}^{n}\lVert \bs{t}_i \rVert^2        \right]\right) \nonumber \\
    &\overset{(b)}{\leq} \exp\left(\frac{2}{\sigma^2}\lVert \hat{\mbf g} - \mbf g  \rVert\left[ \sqrt{\sum_{i=1}^{n} \lVert \bs{t}_i\rVert^2} \sqrt{\sum_{i=1}^{n}\lVert \bs{z}_i \rVert^2} + a \sum_{i=1}^{n}\lVert \bs{t}_i \rVert^2        \right]          \right) \nonumber \\
    &\overset{(c)}{\leq} \exp\left(\frac{2n}{\sigma^2}\left[\sqrt{P\rho}+aP\right]\lVert \mbf{g}-\hat{\mbf g}\rVert\right),\nonumber
    & \nonumber \\&= 2^{\frac{2n}{\ln(2)\sigma^2}\left[\sqrt{P\rho}+aP\right]\lVert \mbf{g}-\hat{\mbf g}\rVert},
\nonumber \end{align}
where $(a)$ follows because the channels $W_\mbf g$ and $W_{\hat{\mbf g}}$ are memoryless, $(b)$ follows from Cauchy-Schwarz's inequality and $(c)$ follows because we require that $\frac{1}{n}\sum_{i=1}^{n}\lVert \bs{t}_i\rVert^2\leq P$ and $\frac{1}{n}\sum_{i=1}^{n}\lVert \bs{z}_i\rVert^2\leq \rho.$ This completes the proof of the lemma.
\end{proof}
\end{lemma}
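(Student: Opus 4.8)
The plan is to exploit memorylessness and reduce the whole inequality to a single-symbol estimate. First I would write $W_{\mbf g}(\bs{z}^n|\bs{t}^n)=\prod_{i=1}^{n}W_{\mbf g}(\bs{z}_i|\bs{t}_i)$ with $W_{\mbf g}(\bs{z}_i|\bs{t}_i)=(\pi\sigma^2)^{-N_R}\exp(-\sigma^{-2}\lVert\bs{z}_i-\mbf g\bs{t}_i\rVert^2)$, which is strictly positive, so the normalizing constants cancel in the ratio and
\[
\frac{W_{\mbf g}(\bs{z}^n|\bs{t}^n)}{W_{\hat{\mbf g}}(\bs{z}^n|\bs{t}^n)}=\exp\!\Big(-\frac{1}{\sigma^2}\sum_{i=1}^{n}\big[\lVert\bs{z}_i-\mbf g\bs{t}_i\rVert^2-\lVert\bs{z}_i-\hat{\mbf g}\bs{t}_i\rVert^2\big]\Big).
\]
Since we want an upper bound, it suffices to replace each summand by its absolute value and bound that.

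Next I would expand $\lVert\bs{z}_i-\mbf g\bs{t}_i\rVert^2-\lVert\bs{z}_i-\hat{\mbf g}\bs{t}_i\rVert^2=2\,\Re\big(\bs{z}_i^{H}(\hat{\mbf g}-\mbf g)\bs{t}_i\big)+\lVert\mbf g\bs{t}_i\rVert^2-\lVert\hat{\mbf g}\bs{t}_i\rVert^2$ and split the modulus by the triangle inequality into a cross term and a quadratic term. The cross term is handled by Cauchy--Schwarz and submultiplicativity of the operator norm: $\big|2\,\Re(\bs{z}_i^{H}(\hat{\mbf g}-\mbf g)\bs{t}_i)\big|\le 2\lVert\bs{z}_i\rVert\,\lVert(\hat{\mbf g}-\mbf g)\bs{t}_i\rVert\le 2\lVert\mbf g-\hat{\mbf g}\rVert\,\lVert\bs{t}_i\rVert\,\lVert\bs{z}_i\rVert$. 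For the quadratic term I would factor it as a difference of squares $\big|\lVert\mbf g\bs{t}_i\rVert-\lVert\hat{\mbf g}\bs{t}_i\rVert\big|\big(\lVert\mbf g\bs{t}_i\rVert+\lVert\hat{\mbf g}\bs{t}_i\rVert\big)$, use the reverse triangle inequality $\big|\lVert\mbf g\bs{t}_i\rVert-\lVert\hat{\mbf g}\bs{t}_i\rVert\big|\le\lVert(\mbf g-\hat{\mbf g})\bs{t}_i\rVert\le\lVert\mbf g-\hat{\mbf g}\rVert\lVert\bs{t}_i\rVert$, and invoke $\lVert\mbf g\rVert,\lVert\hat{\mbf g}\rVert\le a$ (because $\mbf g,\hat{\mbf g}\in\mathcal{G}\subseteq\mathcal{G}_a$) to get $\lVert\mbf g\bs{t}_i\rVert+\lVert\hat{\mbf g}\bs{t}_i\rVert\le 2a\lVert\bs{t}_i\rVert$. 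Combining, each summand of the exponent is at most $\frac{2}{\sigma^2}\lVert\mbf g-\hat{\mbf g}\rVert\big(\lVert\bs{t}_i\rVert\lVert\bs{z}_i\rVert+a\lVert\bs{t}_i\rVert^2\big)$.

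Finally I would sum over $i$ and apply Cauchy--Schwarz a second time, now to the length sequences, together with the two power constraints: $\sum_i\lVert\bs{t}_i\rVert\lVert\bs{z}_i\rVert\le\sqrt{\sum_i\lVert\bs{t}_i\rVert^2}\sqrt{\sum_i\lVert\bs{z}_i\rVert^2}\le\sqrt{nP}\sqrt{n\rho}$ and $a\sum_i\lVert\bs{t}_i\rVert^2\le anP$. This bounds the total exponent by $\frac{2n}{\sigma^2}(\sqrt{P\rho}+aP)\lVert\mbf g-\hat{\mbf g}\rVert$; rewriting $\exp(x)=2^{x/\ln 2}$ yields exactly the claimed inequality. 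There is no genuine obstacle in this argument — it is a chain of elementary norm inequalities — so the only care required is bookkeeping with the complex inner products (tracking the Hermitian-conjugate/real-part structure of the cross term so that the factor $2$ is accounted for correctly) and making sure that the output constraint $\frac1n\sum_i\lVert\bs{z}_i\rVert^2\le\rho$ is the one fed into the final Cauchy--Schwarz step.
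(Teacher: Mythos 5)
Your proposal is correct and follows essentially the same route as the paper's proof: factor via memorylessness, expand the per-symbol difference of squared norms into a cross term (your $2\,\Re(\bs{z}_i^{H}(\hat{\mbf g}-\mbf g)\bs{t}_i)$ is exactly the paper's scalar-plus-conjugate term) plus a difference of squares, bound each with Cauchy--Schwarz, the reverse triangle inequality and $\lVert\mbf g\rVert,\lVert\hat{\mbf g}\rVert\leq a$, then apply Cauchy--Schwarz once more to the sums and invoke the two power constraints. No gaps.
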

\begin{lemma} 
\label{lemmadelta}
Let $\mbf g \in \mathcal{G}.$ Let $\bs{t}^n=(\bs{t}_1,\hdots,\bs{t}_n)$ be any $n$-input sequence of $W_\mbf g$ satisfying $\sum_{i=1}^{n} \lVert \bs{t}_i \rVert^2 \leq nP.$  Let $\bs{z}^n=(\bs{z}_1,\hdots,\bs{z}_n)$ be the $n$-output sequence. It holds that
\begin{align}
W_{\mbf g}\left( \sum_{i=1}^{n}\lVert \bs{z}_i\rVert^2 \geq n(2a^2P+2N_{R}\sigma^2+2)|\bs{t}^n\right) \leq\left[\left(1+\frac{1}{\sigma^2 N_{R}}\right) 2^{-\frac{1}{\ln(2)\sigma^2N_{R}}}\right]^{n}.\nonumber
\end{align}
\begin{proof}
We have
\begin{align}
\sum_{i=1}^{n} \lVert \bs{z}_i \rVert^2 
    &=\sum_{i=1}^{n} \lVert \mbf g \bs{t}_i+\bs{\xi}_i \rVert^2 \nonumber \\
    &\leq 2\sum_{i=1}^{n} \left( \lVert \bs{\xi}_i \rVert^2+\lVert \mbf g \bs{t}_i \rVert^2         \right) \nonumber \\
    &\leq 2\sum_{i=1}^{n} \left( \lVert \bs{\xi}_i \rVert^2+\lVert \mbf g \rVert^2 \lVert \bs{t}_i \rVert^2         \right) \nonumber \\
    &\leq 2 \sum_{i=1}^{n}\lVert \bs{\xi}_i \rVert^2 +2a^2 n P.\nonumber
\end{align}
Hence
\begin{align}
    W_{\mbf g}\left( \sum_{i=1}^{n} \lVert \bs{z}_i \rVert^2 \geq n(2a^2P+2N_{R}\sigma^2+2)|\bs{t}^n\right)&\leq \mbb P \left[ 2  \sum_{i=1}^{n} \lVert \bs{\xi}_i \rVert^2 +2a^2 n P \geq n(2a^2P+2N_{R}\sigma^2+2)\right]\nonumber \\
    &= \mbb P\left[ \sum_{i=1}^{n}\lVert \bs{\xi}_i \rVert^2 \geq n(N_{R}\sigma^2+1)\right]
    \nonumber \\
    &= \mbb P\left[ \sum_{i=1}^{n}\lVert \bs{\xi}_i \rVert^2 \geq n(\text{tr}(\sigma^2\mbf{I}_{N_{R}})+1)\right] \nonumber \\
    &\leq \left[\left(1+\frac{1}{\sigma^2 N_{R}}\right) 2^{-\frac{1}{\ln(2)\sigma^2N_{R}}}\right]^{n},
\nonumber \end{align}
where we used Lemma \ref{upperboundprobsum} in the last step. This completes the proof of the lemma.
\end{proof}
\end{lemma}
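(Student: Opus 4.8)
The plan is to reduce the desired tail bound on the received energy $\sum_{i=1}^{n}\lVert\bs z_i\rVert^2$ to a tail bound on the pure noise energy $\sum_{i=1}^{n}\lVert\bs\xi_i\rVert^2$, and then to invoke Lemma~\ref{upperboundprobsum} with the right choice of constants.

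First I would exploit the channel law $\bs z_i=\mbf g\bs t_i+\bs\xi_i$ together with the elementary inequality $\lVert u+v\rVert^2\le 2\lVert u\rVert^2+2\lVert v\rVert^2$ and the operator-norm bound $\lVert\mbf g\bs t_i\rVert^2\le\lVert\mbf g\rVert^2\lVert\bs t_i\rVert^2\le a^2\lVert\bs t_i\rVert^2$, which is valid since $\mbf g\in\mc G\subseteq\mc G_a$. Summing over $i$ and using the input energy constraint $\sum_{i=1}^{n}\lVert\bs t_i\rVert^2\le nP$ yields the \emph{deterministic} pointwise bound
\begin{align}
\sum_{i=1}^{n}\lVert\bs z_i\rVert^2\le 2\sum_{i=1}^{n}\lVert\bs\xi_i\rVert^2+2a^2nP, \nonumber
\end{align}
which holds for every realization of $\bs\xi^n$ and every admissible $\bs t^n$.

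Consequently, the event $\big\{\sum_{i=1}^{n}\lVert\bs z_i\rVert^2\ge n(2a^2P+2N_R\sigma^2+2)\big\}$ is contained in the event $\big\{2\sum_{i=1}^{n}\lVert\bs\xi_i\rVert^2+2a^2nP\ge n(2a^2P+2N_R\sigma^2+2)\big\}$, which, after cancelling $2a^2nP$ and dividing by $2$, is exactly $\big\{\sum_{i=1}^{n}\lVert\bs\xi_i\rVert^2\ge n(N_R\sigma^2+1)\big\}$. Since $\mathrm{tr}(\sigma^2\mbf I_{N_R})=N_R\sigma^2$, this is precisely the form treated by Lemma~\ref{upperboundprobsum}: apply it with $\bs X_i=\bs\xi_i$, dimension $N=N_R$, trace bound $M=N_R\sigma^2$ (the $\bs\xi_i$ being i.i.d.\ zero-mean complex Gaussian with covariance $\sigma^2\mbf I_{N_R}$ by the channel model), and $\delta=1$, so that $M+\delta=N_R\sigma^2+1$. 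This gives $\mbb P\big[\sum_{i=1}^{n}\lVert\bs\xi_i\rVert^2\ge n(N_R\sigma^2+1)\big]\le\big[(1+\tfrac{1}{\sigma^2N_R})2^{-\tfrac{1}{\ln(2)\sigma^2N_R}}\big]^{n}$, and combining with the inclusion of events above completes the proof.

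There is no genuine obstacle here: the argument is an energy split followed by a routine substitution into Lemma~\ref{upperboundprobsum}. The only points requiring a line of care are the correct matching of the constant $M+\delta$ to $N_R\sigma^2+1$ and the observation that the reduction to the noise tail is a sure (not merely probabilistic) domination, so that no union bound or extra loss is incurred.
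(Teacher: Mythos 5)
Your proposal is correct and follows essentially the same route as the paper's proof: the pointwise energy split $\lVert\bs z_i\rVert^2\le 2\lVert\bs\xi_i\rVert^2+2a^2\lVert\bs t_i\rVert^2$, reduction to the noise-energy tail event $\{\sum_i\lVert\bs\xi_i\rVert^2\ge n(N_R\sigma^2+1)\}$, and application of Lemma~\ref{upperboundprobsum} with $M=\mathrm{tr}(\sigma^2\mbf I_{N_R})=N_R\sigma^2$ and $\delta=1$. No gaps.
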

Now that we proved the lemmas, we fix $R$ to be any positive number less than $\underset{\mbf Q \in \mathcal{Q}_{P}}{\sup}\underset{\mbf g \in \mathcal{G}}{\inf} \log\det(\mbf I_{N_{R}}+\frac{1}{\sigma^2}\mbf g \mbf Q \mbf g^H)$ and put $2\theta=\underset{\mbf Q \in \mathcal{Q}_{P}}{\sup}\underset{\mbf g \in \mathcal{G}}{\inf} \log\det(\mbf I_{N_{R}}+\frac{1}{\sigma^2}\mbf g \mbf Q \mbf g^H)-R$.
By Lemma \ref{existence}, one can find a non-singular $\mbf Q_1 \in \mathcal{Q}_{P}$ such that $\text{tr}(\mbf Q_1)=P-\beta, \quad \beta>0$, and 
\begin{align}
 \mbb E\left[i_{\mbf g} \left(\bs{T},\bs{Z}\right)\right]=\log\det(\mbf I_{N_{R}}+\frac{1}{\sigma^2}\mbf g \mbf Q_{1} \mbf g) \geq R+\theta \quad \forall \mbf g \in \mathcal{G},
 \label{inequalityrate}
\end{align} 
with $\bs{T}$ and $\bs{Z}$ being the random input and output of $W_\mbf g,$ respectively.
We now pick a finite subset $\mathcal{G}^{\prime}$ of $\mathcal{G}$ such that for every $\mbf g \in \mathcal{G}$, there is a $\hat{\mbf g} \in \mathcal{G}^{\prime}$ satisfying $\lVert \mbf g - \hat{\mbf g}\rVert \leq \mu.$
This can be done because $\mathcal{G}$ is a bounded subset of a finite-dimensional Euclidean space and hence is totally bounded. By inequality \eqref{inequalityrate} and since 
\begin{align}
\underset{\mbf Q \in \mathcal{Q}_{P}}{\sup}\underset{\mbf g \in \mathcal{G}^{\prime}}{\inf} \log\det(\mbf I_{N_{R}}+\frac{1}{\sigma^2}\mbf g \mbf Q \mbf g^H) \geq \underset{\mbf Q \in \mathcal{Q}_{P}}{\sup}\underset{\mbf g \in \mathcal{G}}{\inf} \log\det(\mbf I_{N_{R}}+\frac{1}{\sigma^2}\mbf g \mbf Q \mbf g^H), \nonumber
\end{align}
it follows that
$$\underset{\mbf Q \in \mathcal{Q}_{P}}{\sup}\underset{\mbf g \in \mathcal{G}^{\prime}}{\inf} \log\det(\mbf I_{N_{R}}+\frac{1}{\sigma^2}\mbf g \mbf Q \mbf g^H)\geq R+\theta.
$$
Hence, the calculations of Theorem \ref{capacityfinitestate}  imply that there exists a code $\Gamma_n$ for $\mathcal{C}'$ with block-length $n$, size $\lvert \Gamma \rvert=\lfloor 2^{nR} \rfloor$ such that the codewords $\bs{t}^n=(\bs{t}_1,\hdots,\bs{t}_n)$ satisfy $\frac{1}{n}\sum_{i=1}^{n}\lVert \bs{t}_i \rVert^2\leq P, i=1,\hdots,n$ and such that for all $\hat{\mbf g} \in \mc G'$
    \begin{align}
     e(\Gamma_n,\hat{\mbf g}) & \leq (|\mathcal{G}^{\prime}|+|\mathcal{G}^{\prime}|^2)2^{-\frac{n\theta}{8}}+|\mathcal{G}^{\prime}|2^{-n\hat{\beta}} +\lvert \mathcal{G}^{\prime} \rvert 2^{-\frac{nN_{R}}{2\ln(2)}\left[\left(1+\frac{(\ln(2)\theta)^2}{(4N_{R})^2}\right)^{\frac{1}{2}}-1\right]},        \label{epsilonstrich}
    \end{align}
    where $\hat{\beta}=\frac{\beta}{\ln(2)(P-\beta)}-\log(1+\frac{\beta}{P-\beta})$ and where $\beta$ is independent of $n$.

We now consider the use of codewords and decoding sets belonging to the  code $\Gamma_n$ for $\mathcal{C}'$ with the larger compound channel $\mathcal{C}$.
Let $\mbf g \in \mathcal{G}$ and $\hat{\mbf g} \in \mathcal{G}^{\prime}$ such that $\lVert \mbf g -\hat{\mbf g} \rVert \leq \mu.$
Let $\bs{t}^n$ be \textit{any} codeword of $\Gamma_n$ and $B$ the corresponding decoding set. Let $F=\{\bs{z}^n=(\bs{z}_1,\hdots,\bs{z}_n):\frac{1}{n}\sum_{i=1}^{n}\lVert \bs{z}_i \rVert^2\leq \rho,i=1,\hdots,n\},$ where
$\rho=2a^2P+2N_{R}\sigma^2+2.$
Then 
 \begin{align}
     W_{\mbf g}(B^{c}|\bs{t}^n) &=W_{\mbf g}(B^{c}\cap F \cup B^c\cap F^c|\bs{t}^n) \nonumber \\
     &\leq W_{\mbf g}(B^{c}\cap F|\bs{t}^n)+W_{\mbf g}(F^c|\bs{t}^n).\nonumber
 \end{align}
 By Lemma \ref{lemmadelta}, it holds that
 \begin{align}
     W_{\mbf g}(F^{c}|\bs{t}^n)&\leq \left[ \left(1+\frac{1}{N_{R}\sigma^2}\right)2^{-\frac{1}{\ln(2)N_{R}\sigma^2}}   \right]^{n} \nonumber \\
     &=2^{-n \left( \frac{1}{\ln(2)N_{R}\sigma^2}-\log\left(1+\frac{1}{N_{R}\sigma^2} \right)\right)}.\nonumber
 \end{align}
By Lemma \ref{approximation}, it holds that
\begin{align}
    &W_{\mbf g}(B^{c}\cap F|\bs{t}^n) \leq 2^{\frac{2n}{\ln(2)\sigma^2}\left[\sqrt{P\rho}+aP\right] \mu }  W_{\hat{\mbf g}}(B^{c}\cap F|\bs{t}^n). \nonumber
\end{align}
Now
\begin{align}
    W_{\hat{\mbf g}}(B^c\cap F|\bs{t}^n)\leq W_{\hat{\mbf g}}(B^c|\bs{t}^n)\leq e(\Gamma_n,\hat{\mbf g}).\nonumber
\end{align}
This implies using \eqref{epsilonstrich} that for all $\mbf g \in \mc G$ 
\begin{align}
 \ e(\Gamma_n,\mbf g) &\leq 2^{-n \left( \frac{1}{\ln(2)N_{R}\sigma^2}-\log\left(1+\frac{1}{N_{R}\sigma^2} \right)\right)}+ (|\mathcal{G}^{\prime}|+|\mathcal{G}^{\prime}|^2) 
    2^{-n\left(\frac{\theta}{8}-\frac{2}{\ln(2)\sigma^2}\left[\sqrt{P\rho}+aP\right] \mu\right)}\nonumber \\
    &\quad+|\mathcal{G}^{\prime}|2^{-n\left(\hat{\beta}-\frac{2}{\ln(2)\sigma^2}\left[\sqrt{P\rho}+aP\right] \mu \right)}  +|\mathcal{G}^{\prime}|2^{-n\left[ c_1 -c_{2}\mu       \right]},
    \label{exponentials}
\end{align}
where
\begin{align}
    c_1=\frac{N_{R}}{2\ln(2)}\left[\left(1+\frac{(\ln(2)\theta)^2}{(4N_{R})^2}\right)^{\frac{1}{2}}-1\right] \nonumber
\end{align}
and
\begin{align}
   c_2= \frac{2}{\ln(2)\sigma^2}\left[\sqrt{P\rho}+aP\right].\nonumber 
\end{align}
The exponentials in $\eqref{exponentials}$ are all of the form $2^{-n(K_{1}-K_2\mu)}$ where $K_1$ and $K_2$ do not depend on $n$ and where $K_1$ is positive and $K_2$ is non-negative.
Consequently for $\mu$ sufficiently small, it follows that 
$\underset{n\rightarrow\infty}{\lim} e(\Gamma_n,\mbf g)=0. $
This proves that $\underset{\mbf Q \in \mathcal{Q}_{P}}{\sup}\underset{\mbf g \in \mathcal{G}}{\inf} \log\det(\mbf I_{N_R}+\frac{1}{\sigma^2}\mbf g \mbf Q \mbf g^H)$ is an achievable rate for $\mathcal{C}.$ This completes the direct proof of Theorem \ref{achievratecompoundchannels}.
\section{Application of Common Randomness Generation: Correlation-assisted identification}
\label{application}
In this section, we study the problem of correlation-assisted identification over MIMO slow fading channels, as an application of CR generation.
\subsection{Definitions}
We provide the definition of the $\eta$-outage correlation-assisted identification capacity. For this purpose, we start by defining a correlation-assisted identification code for the MIMO slow fading channel $W_\mbf G.$ In what follows, $x^n$ and $y^n$ are any realizations of $X^n$ and $Y^n$, respectively.

\begin{definition}
A correlation-assisted identification-code  of length $n$ and size $N$ for the MIMO slow fading channel $W_{\mbf G}$  is a family of pairs of codewords and decoding regions 
 $\left\{(\mbf{t}_\ell(x^n),\setd^{(\mbf g)}_\ell(y^n)),\mbf g\in \mbb C^{N_R\times N_T}, \ell=1,\ldots, N  \right\}$ such that for all $\ell \in \{1,\ldots, N \}$ and all $\mbf g \in \mbb C^{N_R´\times N_T}$, we have
\begin{align}
& \mbf{t}_\ell(x^n) \in \mbb C^{N_{T}\times n},\quad \setd^{(\mbf g)}_\ell(y^n) \subset \mbb C^{N_{R}\times n}, \nonumber \\
&\frac{1}{n}\sum_{i=1}^{n}\bs{t}_{\ell,i}^H\bs{t}_{\ell,i}\leq P \ \ \mbf{t}_\ell(x^n)=(\bs{t}_{\ell,1},\hdots,\bs{t}_{\ell,n}). \nonumber 
\end{align}
The error of first and second kind are expressed as
\begin{align}
    E_{1}(\mbf g,y^n)=  \underset{\ell \in \{1,\ldots,N\}}{\max}W_{\mbf G}(\setd_\ell^{(\mbf g)}(y^n)^c|\mbf{t}_\ell(x^n)).\nonumber
\end{align}
and
\begin{align}
    E_{2}(\mbf g,y^n)=
    \underset{\ell \in \{1,\ldots,N\},\ell\neq j}{\max}W_{\mbf G}(\setd_\ell^{(\mbf g)}(y^n)|\mbf{t}_j(x^n)). \nonumber
\end{align}
\end{definition}

\begin{definition}
    \label{defcapacity}
  $C_{\eta,ID}^{c}(P,N_T\times N_R).$  the $\eta$-outage correlation-assisted  identification capacity of the MIMO slow fading channel $W_\mbf G,$ is defined as follows:
  \begin{align}
      C_{\eta,ID}^{c}(P,N_T\times N_R)=\sup\left\{ R\colon \forall \lambda>0,\ \exists n(\lambda) \text{ s.t. for } n \geq n(\lambda)  \ N(n,\lambda) \geq 2^{2^{nR}}         \right\}, \nonumber
  \end{align}
 where $N(n,\lambda)$ is the maximal cardinality such that a  correlation-assisted identification code with length $n$ for the channel $W_\mbf G$ exists such that for some $\lambda_1,\lambda_2\leq\lambda$ with $\lambda_1+\lambda_2<1$, the following is satisfied
   \begin{align}
      \mbb P \left[ E_1(\mbf G,Y^n)\leq \lambda_1\right]\geq 1-\eta \nonumber
  \end{align}
  and
   \begin{align}
     \mbb P \left[ E_2(\mbf G,Y^n)\leq \lambda_2\right]\geq 1-\eta. \nonumber
  \end{align}
\end{definition}

\subsection{Lower bound on the outage correlation-assisted identification capacity}
In this section, we will proceed analogously to \cite{identificationcode} we will establish a lower bound on the $\eta$-outage correlation-assisted identification capacity, provided in the following theorem.
\begin{theorem}
\begin{align}
    C_{\eta,Id}^{c}(P,N_{T}\times N_{R}
)\geq   \underset{ \substack{U \\{\substack{U \circlearrow{X} \circlearrow{Y}\\ I(U;X)-I(U;Y) \leq C_{\eta}(P,N_{T}\times N_{R})}}}}{\max} I(U;X).  \nonumber
\end{align}
\end{theorem}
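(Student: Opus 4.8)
The plan is to derive this lower bound from the $\eta$-outage CR capacity (Theorem~\ref{ccretathmMIMO}) through the reduction of correlation-assisted identification to common randomness generation, performed as in \cite{identificationcode}: an identification code is assembled from transmission codes for $W_{\mbf G}$ with the encoder's local randomness replaced by the shared CR value. Accordingly, I would fix a random variable $U$ with $U \circlearrow{X} \circlearrow{Y}$ and $I(U;X)-I(U;Y)\le C_\eta(P,N_T\times N_R)$ attaining the maximum, write $H=I(U;X)$, fix $R<H$, and reduce the claim to exhibiting, for every $\lambda>0$ and all large $n$, a correlation-assisted identification code of size $\lfloor 2^{2^{nR}}\rfloor$ meeting the outage requirements of Definition~\ref{defcapacity}.

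First I would invoke the direct part of Theorem~\ref{ccretathmMIMO}: for small $\epsilon,\alpha,\theta,\gamma>0$ and all large $n$ there is a permissible pair $(K,L)$ with $\tfrac1n H(K)>H-\epsilon$, $|\mathcal{K}|\le 2^{cn}$ and $\mbb P[\mbb P[K\ne L\mid\mbf G]\le\alpha]\ge 1-\eta$. Two features of that construction are needed. (i)~As the computation in its direct proof shows, $K$ takes each value with probability $2^{-nH+o(n)}$, i.e.\ $K$ is essentially uniform on a set of size $2^{n(H-o(1))}$. (ii)~The channel is used there only to transmit one index $i^\star=f_1(X^n)$ over $W_{\mbf G}$ by means of an $\eta$-outage transmission code of rate close to $C_\eta$, so there is a set $\mathcal{A}_n\subseteq\mbb C^{N_R\times N_T}$ with $\mbb P[\mbf G\in\mathcal{A}_n]\ge 1-\eta$ on which $i^\star$ is decoded correctly and, in particular, $\mbb P[K\ne L\mid\mbf G=\mbf g]\le\alpha$ for $\mbf g\in\mathcal{A}_n$.

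Next, following \cite{identificationcode}, I would fix a small index set $\mathcal{M}$ with $|\mathcal{M}|=2^{\lceil n\gamma\rceil}$ and, by the random-selection argument of \cite{identificationcode} using (i) and $R<H$, a family $\psi_1,\dots,\psi_N\colon\mathcal{K}\to\mathcal{M}$ with $N=\lfloor 2^{2^{nR}}\rfloor$ such that $\mbb P[\psi_\ell(K)=\psi_{\ell'}(K)]\le 2^{-n\gamma/2}$ whenever $\ell\ne\ell'$. The encoder with identity $\ell$ computes $(i^\star,K)$ from $X^n$ and transmits the pair $(i^\star,\psi_\ell(K))$ over $W_{\mbf G}$ using a \emph{single} $\eta$-outage transmission code of rate slightly below $C_\eta$ — which fits because $\gamma$ is a vanishing surcharge on top of the rate already needed for $i^\star$ — and this, together with $X^n$, determines $\mbf t_\ell(X^n)$. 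Using CSIR, the decoder recovers $(\widehat{i^\star},\widehat m)$ from $(\bs Z^n,Y^n)$, forms $L$ from $\widehat{i^\star}$ and $Y^n$ as in the direct proof of Theorem~\ref{ccretathmMIMO}, and takes $\setd^{(\mbf g)}_{\ell'}(Y^n)$ to be the set of outputs for which $\widehat m=\psi_{\ell'}(L)$. For $\mbf g\in\mathcal{A}_n$, both $i^\star$ and the tag are decoded correctly with probability at least $1-\theta$ and then $L=K$ with probability at least $1-\alpha$, so $E_1(\mbf g,y^n)\le\theta+\alpha$ and $E_2(\mbf g,y^n)\le 2^{-n\gamma/2}+\theta+\alpha$; choosing $\theta,\alpha$ small and $n$ large makes both at most $\lambda$ on all of $\mathcal{A}_n$, whence $\mbb P[E_1(\mbf G,Y^n)\le\lambda_1]\ge 1-\eta$ and $\mbb P[E_2(\mbf G,Y^n)\le\lambda_2]\ge 1-\eta$ with $\lambda_1,\lambda_2\le\lambda$ and $\lambda_1+\lambda_2<1$. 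Since $N=\lfloor 2^{2^{nR}}\rfloor$ and $R<H$ is arbitrary, this yields $C_{\eta,Id}^{c}(P,N_T\times N_R)\ge H$.

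The step I expect to be the main obstacle is the outage bookkeeping: one must produce a \emph{single} event of probability at least $1-\eta$ that simultaneously controls CR agreement \emph{and} reliable delivery of the tag, since intersecting two separate $(1-\eta)$-events would only give the weaker bound $1-2\eta$. This forces the tag to be piggybacked onto the transmission that the CR protocol already performs, so that one $\eta$-outage code of rate approaching $C_\eta$ carries $(i^\star,\psi_\ell(K))$ and the single good-state set $\mathcal{A}_n$ is the only event that needs to be handled. The remaining ingredient — existence of the family $\{\psi_\ell\}$ of size $\lfloor 2^{2^{nR}}\rfloor$ with the stated pairwise collision bound under the concentrated but non-uniform law of $K$ — is precisely the combinatorial lemma of \cite{identificationcode}, and it uses only the per-value probability estimate for $K$ transplanted from the direct proof of Theorem~\ref{ccretathmMIMO}.
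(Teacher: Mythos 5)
Your proposal is correct and follows essentially the same Ahlswede--Dueck reduction that the paper uses: run the CR protocol, color $K$ with a family of maps $\{\psi_\ell\}$ obtained by random selection, and deliver the color over the channel alongside the CR index $i^\star$. The one implementation difference is how the color is delivered. The paper concatenates two codes — one of length $n$ at rate $C_\eta-\delta$ carrying $i^\star$, and a second code of length $\lceil\sqrt{n}\rceil$ at vanishing rate carrying $E_i(K)$ — whereas you piggyback the tag $\psi_\ell(K)$ onto a single $\eta$-outage code of rate slightly below $C_\eta$ that carries the pair $(i^\star,\psi_\ell(K))$, exploiting the slack $C_\eta-(I(U;X)-I(U;Y))>0$ guaranteed by the continuity reduction.

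Your observation about the outage bookkeeping is the right thing to worry about and is handled only implicitly in the paper (it says ``by extending the Transformator lemma to the outage setting''). However, your claim that piggybacking is \emph{forced} to avoid a $1-2\eta$ bound is a bit too strong. The paper's two-code concatenation can also be made rigorous without degrading the outage bound: in the direct proof of Theorem~\ref{cetathmMIMO}, the good-state set is $\hat{\mc G}\cup\mc G_a^c$ for a fixed $\hat{\mbf Q}$ and $a$, and that set is monotone in the target rate — a code built for a strictly smaller rate (here $\delta$) and the same $a$ works on a \emph{superset} of states. Thus the low-rate color code's good set can be arranged to contain the high-rate code's good set, so the intersection is still a single $(1-\eta)$-event. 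Your single-code argument sidesteps this nesting observation and is arguably cleaner on that point; both routes are valid. One further small caveat: the combinatorial selection of $\{\psi_\ell\}$ in \cite{identificationcode} requires only an upper bound of the form $\mbb P[K=u]\le 2^{-nH+o(n)}$, which is exactly what the direct proof of Theorem~\ref{ccretathmMIMO} provides; stating it as an essentially uniform law, as you do, is more than you need and slightly more than that proof gives.
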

\begin{proof}
 Given a discrete memoryless multiple source $P_{XY}$, Alice observes the output $X^{n}$ and Bob observes the output $Y^{n}.$
 Alice generates a random variable $K$ with alphabet $\mathcal{K}=\{1,\dots, M^{\prime}\},$ such that $K=\Phi(X^{n})$ and $M^{\prime} \leq 2^{cn}$, with $c$ being a constant not depending on $n$. To send a message $i$, we prepare a set of coloring-functions or mappings $E_i$ known by the sender and the receiver.\begin{align*}
 E_i  & \colon \mathcal{K} \longrightarrow \{1,\ldots,M^{\prime\prime} \} \\
 & \colon \underbrace{K}_{\text{coloring}} \mapsto \underbrace{E_i(K).}_{\text{color}}
 \end{align*}
 $X^{n}$ is encoded to a sequence $\bs{T}^{n}$ satisfying 
 \begin{equation}
    \frac{1}{n}\sum_{i=1}^{n}\bs{T}_{i}^{H}\bs{T}_i\leq P, \quad \text{almost surely}.  \nonumber
\end{equation}
This is done by using a code $\Gamma_n$ such that
     \[
        \frac{\log\lvert \Gamma_n\rvert}{n}= C_{\eta}(P,N_{R}\times N_{T})-\delta
    \]
    and
    \[
        \mbb P[e(\Gamma_n,\mbf G)\leq\theta]\geq 1-\eta.
    \]
    
  $E_i(K)$ is encoded to a sequence $\bs{T}^{\lceil \sqrt{n} \rceil}$ satisfying:
  \begin{equation}
    \frac{1}{\lceil \sqrt{n} \rceil}\sum_{i=1}^{\lceil \sqrt{n} \rceil}\bs{T}_{i}^{H}\bs{T}_i\leq P. \nonumber
\end{equation}
This is done by using a code $\Gamma_{ \lceil \sqrt{n} \rceil}$ with vanishing rate equal to $\delta$ and where    \[
        \mbb P[e(\Gamma_{ \lceil \sqrt{n} \rceil},\mbf G)\leq\alpha]\geq 1-\eta.
    \]

By concatenating both sequences, we obtain the sequence $\bs{T}^{m}$, where $m=n+\lceil \sqrt{n} \rceil.$ $\bs{T}^m$ is sent over the MIMO slow fading channel.
  Bob generates $L=\Psi(Y^{n},\bs{Z}^{m})$ such that $\mbb P\left[\mbb P[K\neq L|\mbf G] \leq \alpha \right]\geq 1-\eta.$
  Bob identifies whether the message of interest was sent or not such that
  \begin{align}
      \mbb P \left[ E_1(\mbf G,Y^n)\leq \lambda_1\right]\geq 1-\eta 
      \label{outageerror1}
  \end{align}
   \begin{align}
     \mbb P \left[ E_2(\mbf G,Y^n)\leq \lambda_2\right]\geq 1-\eta.
     \label{outageerro2}
  \end{align}
for some $\lambda_1,\lambda_2\leq\lambda$ with $\lambda_1+\lambda_2<1.$
  We choose the rate of the first code to be approximately equal to  the $\eta$-outage capacity of the MIMO slow fading channel  so that Bob can identify the message,  with error of first and second kind satisfying \eqref{outageerror1} and \eqref{outageerro2}, respectively,  and at a rate approximately equal to $C_{\eta}(P,N_{T}\times N_{R}),$ the outage transmission capacity  of the  channel, without paying a price for the identification task.
By extending the Transformator lemma\cite{Generaltheory} to the outage setting, it follows that the $\eta$-outage correlation-assisted identification capacity of the MIMO slow fading is lower-bounded by its corresponding $\eta$-outage CR capacity. 
This implies using Theorem \ref{ccretathmMIMO} that
\begin{align}
    C_{\eta,Id}^{c}(P,N_{T}\times N_{R}
)\geq   \underset{ \substack{U \\{\substack{U \circlearrow{X} \circlearrow{Y}\\ I(U;X)-I(U;Y) \leq C_{\eta}(P,N_{T}\times N_{R})}}}}{\max} I(U;X).  \nonumber
\end{align}
\end{proof}
\section{Conclusion}
\label{conclusion}
In this paper, we introduced the concept of capacity versus outage in the CR generation framework to assess the performance in point-to-point MIMO slow fading environments with AWGN and with arbitrary state distribution.We established a single-letter characterization of the outage
CR capacity of the MIMO slow fading channel with AWGN
and with arbitrary state distribution using our result on its
 outage transmission capacity. The obtained results are particularly useful in the problem of correlation-assisted identification over MIMO slow fading channels.
As a future work, it would be interesting to study the problem of CR generation in fast fading environments, where the channel state varies over the time scale of transmission.

\vspace{12pt}


\begin{thebibliography}{00}
\bibitem{survey}M. Sudan, H. Tyagi and S. Watanabe, "Communication for Generating Correlation: A Unifying Survey," in IEEE Transactions on Information Theory, vol. 66, no. 1, pp. 5-37, Jan. 2020.
\bibitem{capacityAVC}I. Csiszár and P. Narayan, "The capacity of the arbitrarily varying channel revisited: positivity, constraints," in IEEE Transactions on Information Theory, vol. 34, no. 2, pp. 181-193, March 1988.
\bibitem{commitmentcapacity}Winter, A. et al. “Commitment Capacity of Discrete Memoryless Channels.” arXiv cs.CR/0304014, 2003.
\bibitem{unconditionallysecure} Rivest, R.L.: Unconditionally secure commitment and oblivious transfer schemes using private channels and a trusted
initializer, 1999.
\bibitem{identification}R. Ahlswede and G. Dueck, "Identification via channels," in IEEE Transactions on Information Theory, vol. 35, no. 1, pp. 15-29, Jan. 1989.
\bibitem{Generaltheory}R. Ahlswede, General theory of information transfer: updated, Discrete Applied Mathematics, Vol. 156, No. 9, 1348-1388, 2008.
\bibitem{part2}R. Ahlswede and I. Csiszár, "Common randomness in information theory and cryptography. II. CR capacity," in IEEE Transactions on Information Theory, vol. 44, no. 1, pp. 225-240, Jan. 1998.
\bibitem{CRincrease}A. Ahlswede, I. Althöfer, C. Deppe, and T. Ulrich, Identification and
Other Probabilistic Models Rudolf Ahlswede’s Lectures on Information
Theory 6, 1st ed. Springer-Verlag, 2021, vol. 16.
\bibitem{shannon}C.  E.  Shannon,  “A  mathematical  theory  of  communication,”Bell  System  Technical  Journal,  vol.  27,  pp.  379–423,  623–656,  July, October 1948.
\bibitem{applications}H. Boche and C. Deppe, "Secure Identification for Wiretap Channels; Robustness, Super-Additivity and Continuity," in IEEE Transactions on Information Forensics and Security, vol. 13, no. 7, pp. 1641-1655, July 2018.
\bibitem{tactiles}G. P. Fettweis, "The Tactile Internet: Applications and Challenges," in IEEE Vehicular Technology Magazine, vol. 9, no. 1, pp. 64-70, March 2014.
\bibitem{Moulin}P. Moulin, “The role of information theory in watermarking and
its application to image watermarking,” Signal Processing, vol. 81,
no. 6, pp. 1121 – 1139, 2001, special section on Information
theoretic aspects of digital watermarking. 
\bibitem{watermarkingahlswede} R.  Ahlswede  and  N.  Cai, Watermarking  Identification  Codes  with  Related  Topics  on  Common  Randomness. Berlin,  Heidelberg:Springer Berlin Heidelberg, 2006, pp. 107–153.
\bibitem{watermarking}Y. Steinberg and N. Merhav, "Identification in the presence of side information with application to watermarking," in IEEE Transactions on Information Theory, vol. 47, no. 4, pp. 1410-1422, May 2001.
\bibitem{industrie40}Y. Lu, “Industry 4.0: A survey on technologies, applications and open
research issues,” Journal of Industrial Information Integration, vol. 6,
pp. 1 – 10, 2017.
\bibitem{implementation}Derebeyoğlu, Sencer; Deppe, Christian; Ferrara, Roberto: Performance Analysis of Identification Codes. Entropy 22 (10), Sep 2020.
\bibitem{PatentBA}H. Boche and C. Arendt, “Communication method, mobile unit, interfaceunit, and communication system,” 2021, patent number: 10959088.
\bibitem{part1}R. Ahlswede and I. Csiszár, "Common randomness in information theory and cryptography. I. Secret sharing," in IEEE Transactions on Information Theory, vol. 39, no. 4, pp. 1121-1132, July 1993.
\bibitem{Maurer}U. M. Maurer, "Secret key agreement by public discussion from common information," in IEEE Transactions on Information Theory, vol. 39, no. 3, pp. 733-742, May 1993.
\bibitem{CRgaussian}R. Ezzine, W. Labidi, H. Boche and C. Deppe, "Common Randomness Generation and Identification over Gaussian Channels," GLOBECOM 2020 - 2020 IEEE Global Communications Conference, 2020, pp. 1-6.
\bibitem{wafapaper}W. Labidi, C. Deppe and H. Boche, "Secure Identification for Gaussian Channels," ICASSP 2020 - 2020 IEEE International Conference on Acoustics, Speech and Signal Processing (ICASSP), 2020, pp. 2872-2876.
\bibitem{Tse}D. Tse and P. Viswanath, Fundamentals of Wireless Communication, 2005.
\bibitem{goldsmith}A. Goldsmith, Wireless Communications. Cambridge University Press, 
2005.
\bibitem{inftheoretic}L. H. Ozarow, S. Shamai and A. D. Wyner, "Information theoretic considerations for cellular mobile radio," in IEEE Transactions on Vehicular Technology, vol. 43, no. 2, pp. 359-378, May 1994.
New York, NY, USA: Cambridge University Press.
\bibitem{infaspects}E. Biglieri, J. Proakis and S. Shamai, "Fading channels: information-theoretic and communications aspects," in IEEE Transactions on Information Theory, vol. 44, no. 6, pp. 2619-2692, Oct. 1998.
\bibitem{telatar}Telatar, E. “Capacity of Multi-antenna Gaussian Channels.” Eur. Trans. Telecommun. 10 (1999): 585-595.
\bibitem{codingtheorems}I. Csiszár and J. Körner, Information Theory: Coding Theorems for
Discrete Memoryless Systems, 2nd ed. Cambridge University Press,
2011.
\bibitem{NoteShannon}N. J. A. Sloane; Aaron D. Wyner, "A Note on a Partial Ordering for Communication Channels," in Claude E. Shannon: Collected Papers , IEEE, 1993, pp.265-272.
\bibitem{discretetimegaussian}Root, W. L., and P. P. Varaiya. “Capacity of Classes of Gaussian Channels.” SIAM Journal on Applied Mathematics, vol. 16, no. 6, 1968, pp. 1350–1393.
\bibitem{errorbound}A. Thomasian, “Error bounds for continuous channels,” inProc. 4th LondonSymp. Inf. Theory.  Washington, DC, 1961, pp. 46–60.
\bibitem{informationbook} R. Ash, Information Theory, ser. Interscience tracts in pure and applied mathematics.  Interscience Publishers, 1965.
\bibitem{capacityofclassofchannels}David Blackwell, Leo Breiman, A. J. Thomasian "The Capacity of a Class of Channels," The Annals of Mathematical Statistics, Ann. Math. Statist. 30(4), 1229-1241, (December, 1959).
\bibitem{identificationcode}R. Ahlswede and G. Dueck, "Identification in the presence of feedback-a discovery of new capacity formulas," in IEEE Transactions on Information Theory, vol. 35, no. 1, pp. 30-36, Jan. 1989.
\end{thebibliography}
\end{document}